\documentclass[10pt,reqno]{amsart}
\usepackage[T1]{fontenc}
\usepackage{times}
\usepackage{amsmath,amsthm,amssymb,latexsym}
\usepackage{graphicx,xcolor,enumerate,float,booktabs}
\usepackage[numbers,square,comma]{natbib}
\usepackage{xurl}
\usepackage[hidelinks]{hyperref}
\usepackage[export]{adjustbox}
\usepackage{subcaption}
\usepackage{appendix}
\usepackage[section]{placeins}
\usepackage{multirow}

\definecolor{methodSF}{HTML}{1F77B4}
\definecolor{methodEF}{HTML}{2CA02C}
\definecolor{methodLF}{HTML}{D62728}
\newcommand{\toymethodlegend}{\par\smallskip{\small {\color{methodSF}$\bullet$} SF\quad {\color{methodEF}$\blacksquare$} EF\quad {\color{methodLF}$\blacktriangle$} LF}\par}
\newcommand{\partitionlegend}{%
  \par\smallskip\footnotesize
  {\color{methodSF}\rule[0.5ex]{1.5em}{0.6pt}} SF partitions\quad
  {\color{methodEF}\rule[0.5ex]{0.4em}{0.6pt}\hspace{0.15em}\rule[0.5ex]{0.4em}{0.6pt}\hspace{0.15em}\rule[0.5ex]{0.4em}{0.6pt}} EF partitions\quad
  {\color{methodLF}\rule[0.5ex]{0.65em}{0.6pt}\hspace{0.15em}\rule[0.5ex]{0.1em}{0.6pt}\hspace{0.15em}\rule[0.5ex]{0.65em}{0.6pt}} LF partitions\par}

\theoremstyle{plain}
\newtheorem{theorem}{Theorem}
\newtheorem{lemma}{Lemma}
\theoremstyle{remark}
\newtheorem{remark}{Remark}

\title[Explicit objective functions in modularity-based community detection]{Explicit objective functions in modularity-based community detection on multiplex networks}
\author{Elizaveta Evmenova}
\address{Faculty of Electrical Engineering, Mathematics and Computer Science, Delft University of Technology, 2628 CD, Delft, Netherlands}
\email{e.evmenova.o@gmail.com}
\thanks{Elizaveta Evmenova: ORCID \href{https://orcid.org/0000-0002-6032-5830}{0000-0002-6032-5830}.}
\author{Petr Chunaev}
\address{KairosHelix, Inc., 4 Robinhood Road, Natick, MA 01760, USA}
\email{petrchunaev@kairoshelix.ai}
\thanks{Petr Chunaev: corresponding author. ORCID \href{https://orcid.org/0000-0001-8169-8436}{0000-0001-8169-8436}.}
\date{}
\begin{document}
\begin{abstract}
Modularity-based community detection in multiplex networks is commonly approached through one of three strategies: early fusion (EF), which first aggregates the layers into a single network and then applies community detection; simultaneous fusion (SF), which combines information from the layers during modularity optimization; and late fusion (LF), which combines the community assignments obtained by detecting communities separately in each layer. Although recent taxonomies and surveys distinguish these strategies conceptually, their objective functions have rarely been compared analytically. We provide such a comparison for shared-node multiplex networks with normalized non-negative edge and layer weights, a common resolution parameter, and no interlayer edges. Node-attributed networks are represented by topology and attribute-similarity layers. We use a common notation for the three formulations and derive explicit relationships between their objective functions. In particular, we show that the {\sf EF} objective equals the {\sf SF} objective plus a non-negative heterogeneity term. We prove that the {\sf SF} objective admits an optimal layer-weight vector at a simplex vertex, although nonvertex optima may also occur under ties, whereas the {\sf EF} objective is concave with respect to the layer parameter for each fixed partition, and its joint optimum may occur in the interior of the simplex. For LF methods, the resulting objective and conclusions depend heavily on how the transient graph is constructed. In this study, we focus only on one of such LF methods. We complement the analytical results with brute-force experiments on synthetic multiplexes and heuristic experiments on real-world networks, including node-attributed networks represented as multiplexes. The experiments illustrate the analytical results and the influence of heuristic optimizers on the observed comparisons.
\end{abstract}
\keywords{modularity optimization, objective function, community detection, multiplex network, node-attributed network}
\maketitle

\section{Introduction}
\label{sec:introduction}

Community detection (CD) in multiplex networks~\cite{Magnani2021} identifies groups of nodes based on interaction patterns across multiple network layers. Multiplex networks arise naturally in a wide range of systems, including social~\cite{steinhaeuser2008community, bedi2016community, Das2022}, economic~\cite{brauksa2013use, gui2014dynamic}, biological~\cite{novoa2021multi, buphamalai2021network}, and transportation systems~\cite{huang2018comparing}.

Recent surveys and taxonomies of multilayer community detection methods discuss different modeling assumptions and evaluation criteria~\cite{Magnani2021,Hamed2024,BoukabeneMetrics2025}. Building on this literature, we adopt the early-fusion (EF), simultaneous-fusion (SF), and late-fusion (LF) taxonomy introduced for node-attributed networks by~\cite{Chunaev2019survey} and extend it to multiplex networks under the standing assumptions stated in Section~\ref{sec:cd_formaulation_and_multiplex_graphs}:

(1) fuse layers first and perform CD on the resulting ordinary network (EF);

(2) solve the CD task using the information from all layers simultaneously (SF);

(3) apply CD to all layers separately and then fuse the results by finding a consensus among the resulting partitions (LF).

Many methods within these categories are specified algorithmically and evaluated empirically, making their underlying optimization criteria difficult to identify or compare directly \cite{Chunaev2019survey,Magnani2021,Vieira2020}. Empirical evaluations may also use external or task-specific measures of community quality \cite{Chunaev2019survey,ChunaevComplNetw2020,Chunaev2021}. While such measures are useful for assessing the resulting communities, they do not establish analytical relationships between the objectives optimized by different methods; more generally, as noted by Peel et al.'s No Free Lunch result, empirical superiority on a given collection of problems does not imply a general advantage across all community detection problems~\cite{Peel2017}. We therefore focus on the objective functions themselves and compare them analytically within the EF, SF, and LF taxonomy.

We use a common notation to derive and compare the objectives of several modularity-based CD methods for multiplex networks. The same framework can also be applied to node-attributed networks, which can be represented as multiplex networks by treating network topology as one layer and node attributes as one or more attribute-similarity layers.

The main analytical conclusions are as follows. First, we show that the objective of the {\sf EF} formulation equals the {\sf SF} objective plus a non-negative heterogeneity term that measures how much the total degree weight of a community varies between layers. Thus, the {\sf EF} objective assigns additional value to partitions in which a community has substantially different total degree weights across active layers; the heterogeneity term is zero precisely when these community-wise layer volumes are identical across all active layers. Second, we consider a layer-weight vector $\alpha$, whose elements reflect the importance assigned to each layer. The non-negative, normalized weights form a simplex. We show that the {\sf SF} objective always has an optimum at a vertex of this simplex. For the {\sf EF} objective, we show that, for any fixed partition, the objective is concave in $\alpha$. Nevertheless, when optimizing jointly over the partition and the layer weights, the optimum may occur with $\alpha$ in the interior of the simplex. Third, for the specific LF construction considered in this study, we show that the {\sf LF} objective can be decomposed into Newman modularity on the original layer graphs, a heterogeneity term, and a correction term induced by the graph transformation. For each layer evaluated at its own transient partition, this correction is non-negative and vanishes precisely when the transformation preserves the total within-community edge weight.

Our analysis considers representative modularity-based {\sf EF}, {\sf SF}, and one specific {\sf LF} construction for shared-node multiplex networks with normalized non-negative weights, no interlayer edges, and a common resolution parameter in Newman modularity. It does not address arbitrary multilayer networks, alternative LF consensus constructions, or generalized multilayer modularity with unconstrained node-layer assignments. We use brute-force optimization on synthetic multiplexes to verify and illustrate the analytical results, while experiments with Louvain and Leiden on real-world networks illustrate how heuristic optimization and preprocessing choices affect the observed results.

The paper is organized as follows. Section~\ref{sec:problem_statement} defines the networks and the modularity-based CD problem. Section~\ref{sec:literature} discusses related work and the classification of methods using the adopted taxonomy. Section~\ref{sec:methods} describes the representative {\sf EF}, {\sf SF}, and {\sf LF} formulations and positions them within the existing literature. Section~\ref{sec:theorems} derives their objective functions, while Section~\ref{sec:synergy} studies the optimization of the layer weights $\alpha$. Section~\ref{sec:experiments} presents exact and heuristic experiments, and Section~\ref{sec:discussion} discusses the findings, limitations, and conclusions.

\section{Formulation of the Community Detection Problem on Multiplex Graphs}
\label{sec:cd_formaulation_and_multiplex_graphs}
\label{sec:problem_statement}
    This section defines the multiplex network setting and formulates the corresponding community detection problem. We first specify the multiplex representation used throughout the paper, including how node-attributed networks can be represented within this framework. We then define the community detection problem, the modularity objective used as the common building block for the methods considered later, and the associated layer-weight and normalization conventions.

\subsection{Definition of the Multiplex Network}
\label{sec:multiplex_networks}

    We consider a node-aligned \textit{multiplex network} $G$ consisting of $L$ weighted layers defined on a common node set $\mathcal{V} = \{ v_i \}_{i=1}^n$. Each layer $g_l=(\mathcal{V},\mathcal{E},W_l)$ is an ordinary undirected weighted graph, where $W_l=\{w_l(e_{ij})\}$ assigns a non-negative weight to each edge $e_{ij}$. Thus, a multiplex network can be represented as an indexed family of ordinary weighted graphs,
    \begin{equation*}
        \label{eq:multiplex-def}
        G=\{g_l\}_{l=1}^L,\qquad g_l=(\mathcal{V},\mathcal{E},W_l),\qquad l=1,\ldots,L,
    \end{equation*}
    where all layers share the same node set $\mathcal{V}$ and the same set $\mathcal{E}=\{e_{ij}=\{v_i,v_j\}:1\le i<j\le n\}$ of possible edges between distinct nodes. The input layers are loop-free; transient self-loops introduced by LF are handled separately below. For each layer $l$, $W_l = {w_l(e_{ij})}$ assigns a non-negative weight to each $e_{ij} \in \mathcal{E}$, with zero weight representing an absent edge.

    A \textit{node-attributed network}, in which nodes carry attributes in addition to the network topology, can also be represented as a multiplex network. In this representation, the network topology forms a single layer, while the node attributes are transformed into one or more attribute-similarity layers with non-negative edge weights, as, for example, in~\cite{Chunaev2019survey,ChunaevComplNetw2020,Chunaev2021}; see Figure~\ref{fig:scheme_na_to_multiplex} for an illustration.

    \begin{figure}[!htbp]
        \centering
        \begin{minipage}[b]{0.8\linewidth}
            \centering
            \includegraphics[width=0.68\textwidth]{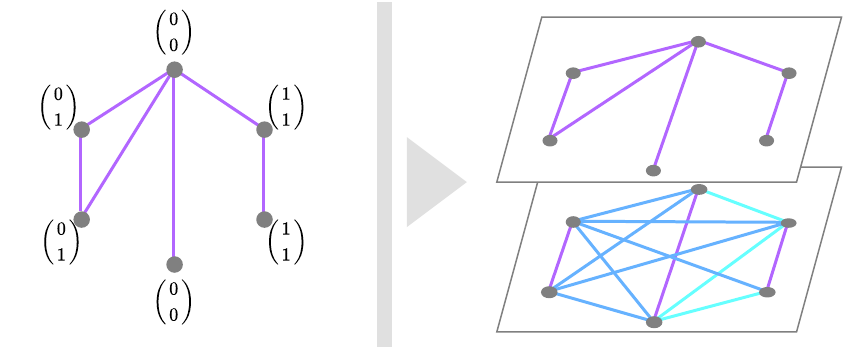}
            \\
            \includegraphics[width=0.05\linewidth]{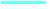}
            link with weight 0 \qquad
            \includegraphics[width=0.05\linewidth]{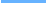}
            link with weight 0.5 \\
            \includegraphics[width=0.05\linewidth]{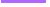}
            link with weight 1
        \end{minipage}
        \caption{From a node-attributed to a multiplex graph. The upper layer is based on the initial network structure. The lower layer is based on the network attributes; the edge weights are the normalized matching coefficients for corresponding pairs of attribute vectors.}
        \label{fig:scheme_na_to_multiplex}
    \par\smallskip{\small\noindent\textbf{Alt text:} A node-attributed graph is converted to two layers on the same nodes: the original structural layer and an attribute-similarity layer with normalized matching-coefficient edge weights.\par}
    \end{figure}

    \begin{remark}
        Unlike the definition in \cite{Magnani2021}, ours requires a shared set of nodes across layers.
    \end{remark}
    
    \begin{remark}
    A multiplex network is a special case of a {\it multilayer} network. Terminology varies across the literature \cite{Kivela2014, Porter2018, Hanteer2020}; multilayer networks may allow interlayer links and different sets or types of nodes across layers \cite{Snijders2012,Battiston2018}. Multiplex networks may also include interlayer links between copies of the same node. In this paper, the common node set and absence of interlayer edges are assumptions of the model studied, not requirements of every multiplex-network definition.
    \end{remark}

\subsection{Definition of the Community Detection Problem}
\label{sec:cd_definition}

    We define \textit{community detection} (CD) in a multiplex graph $G = {g_l}_{l=1}^L$ as the partitioning of the common node set $\mathcal{V}$ into $K$ nonempty, pairwise disjoint communities. A partition is denoted by
    \begin{equation*}
        C=\{C_k\}_{k=1}^K,
        \label{eq:partition}
    \end{equation*}
    where
    \begin{equation*}
        C_k\subseteq\mathcal{V},\qquad C_k\cap C_{k'}=\emptyset \ \text{for }k\neq k',\qquad \bigcup_{k=1}^K C_k=\mathcal{V}.
    \end{equation*}
    We consider only non-overlapping communities and require every node to belong to exactly one community. Let $\mathcal{C}$ denote the finite set of all such partitions.

    Different community detection methods impose different objective functions on the same set of candidate partitions. We use Newman's modularity~\cite{Newman2006,NewmanPNAS2006} as the common building block for the modularity-based CD methods considered below. We first define modularity for an ordinary weighted graph and a given partition; the {\sf EF}, {\sf SF}, and {\sf LF} objectives introduced in Section~\ref{sec:methods} are then formulated in terms of this single-layer quantity. For a weighted graph and a given partition, we define Newman modularity as follows:

    \begin{equation}
        \label{eq:modularity}
        Q(G,C;\gamma)=\frac{1}{2m} \sum_{k=1}^K \sum_{v_i,v_j \in C_k}  \left[ w(e_{ij}) - \gamma\frac{\kappa(v_i) \kappa(v_j)}{2m}  \right],
    \end{equation}
    where $\kappa(v_i)$ is the weighted degree of vertex $v_i$, $m = \sum_{e_{ij}\in\mathcal{E}}w(e_{ij}) > 0$ is the total weight of $G$ and $\gamma>0$ is a {\it resolution parameter} which enables the detection of communities at different scales~\cite{gamma2006}. Its effects are studied, e.g., in \cite{Lu2020gamma,Newman2016,GadarAbonyi2024}.

    In (\ref{eq:modularity}), the double sum is shorthand for
    \begin{equation*}
        \sum_{v_i,v_j \in C_k} := \sum_{v_i \in C_k} \sum_{v_j \in C_k},
    \end{equation*}
    so that each edge between distinct vertices is counted twice. For an ordinary edge of graphical weight $w$, the symmetric adjacency matrix has $A_{ij}=A_{ji}=w$; for a graphical self-loop of weight $w$, $A_{ii}=2w$. Thus $\kappa(v_i)=\sum_j A_{ij}$ and $m=\frac12\sum_i\kappa(v_i)$. In vertex-indexed sums, weights denote adjacency entries; edge-set sums and construction weights count each graphical edge, including a self-loop, once. LF permits transient self-loops as a bookkeeping device for singleton transient communities.

    \begin{remark}
        We evaluate CD through internal objectives such as modularity. We do not assess agreement with external ground-truth partitions using metrics such as normalized mutual information \cite{antonov2022link}, or combine internal and external metrics \cite{Orman2012,Jebabli2018}. Our focus is optimization quality for a chosen objective; ground-truth-based metrics can also be problematic in some settings \cite{Hric2014,Newman2015,Peel2017,Chunaev2021}.
    \end{remark}

    Each layer may be assigned a non-negative layer weight $\alpha_l$, which represents its relative importance in the community detection formulation. To make these weights comparable across layers, we normalize the edge weights within each layer so that each layer has the same total edge weight. Without this normalization, differences in the total edge weight of the layers would affect their contributions independently of $\alpha_l$, making the relative importance of a layer difficult to interpret from $\alpha_l$ alone. We therefore require each layer to have a finite, strictly positive total edge weight and normalize its edge weights so that
    \begin{equation}
        \label{eq:weights_normalization}
        \sum_{e_{ij}\in \mathcal{E}} w_l(e_{ij})=1, \quad l=1,\ldots,L.
    \end{equation}
    The layer weights themselves are non-negative and normalized,
    \begin{equation}
        \label{eq:alpha-condition}
        \alpha_l \ge 0 \text{ for } l=1,\ldots,L, \quad \sum_{l=1}^L \alpha_l=1,
    \end{equation}
    so that $\alpha=(\alpha_l)_{l=1}^L$ belongs to the simplex
    \begin{equation*}
        \mathcal{S} = \{\alpha \in [0,1]^L : \sum_{l=1}^L \alpha_l = 1\}.
    \end{equation*}

    We use a common resolution parameter $\gamma>0$ for all layers. The choice $\gamma=1$ recovers standard Newman modularity, while $\gamma<1$ favors larger communities and $\gamma>1$ favors smaller communities.

    \begin{remark}
       Layer-specific resolution parameters $\Gamma={\gamma_l}_{l=1}^L$ are also used in multilayer community detection~\cite{Mucha2010,Hanteer2020}. We restrict our analysis to a common resolution parameter across layers; extending the results to layer-specific resolution parameters is outside the scope of this paper.
    \end{remark}

\section{Related works and classification of CD methods for multiplex networks}
\label{sec:literature}
    A broad range of community detection methods has been developed for multiplex networks. To organize the methods considered in this paper, we adopt a classification based on how they incorporate information from the different layers. Recent reviews and taxonomies \cite{Magnani2021,Hamed2024,BoukabeneMetrics2025} distinguish multiplex CD methods according to how they “handle the presence of multiple layers.” The same early-fusion (EF), simultaneous-fusion (SF), and late-fusion (LF) categories are also used in the node-attributed-network literature \cite{Chunaev2019survey}. Table~\ref{tab:algorithms} summarizes representative methods from both settings and distinguishes modularity-based approaches from methods based on other objectives.

    \begin{itemize}
        \item Early fusion (EF) combines the information from the different layers or data sources into a single representation before CD;
        \item Simultaneous fusion (SF) uses information from all layers or data sources simultaneously during CD;
        \item Late fusion (LF) first processes the layers or data sources separately and then combines the resulting community assignments.
    \end{itemize}

    \begin{table}[h]
        \centering
        \caption{Representative community detection methods illustrating the EF, SF, and LF taxonomy. Methods are grouped by network setting and by the type of objective function.}
        \label{tab:algorithms}
        \begin{tabular}{c|c|c|c|c}
            \hline
            & \multicolumn{2}{c|}{Multiplex networks} &
            \multicolumn{2}{c}{Node-attributed networks} \\
            \cline{2-5}
            Fusion & Modularity- & Other & Modularity & Other \\
            strategy & based & objectives & based & objectives \\
            \hline
            {EF}
            & \cite{boukabene2024flattening,gurov2022supervised}
            & \cite{Kim2017,Berlingerio2011}
            & \cite{ChunaevComplNetw2020,Chunaev2019,ChunaevYSC2020,Chunaev2021}
            & \cite{Neville2003,Akbas2019,Steinhaeuser2010,DangViennet2012,Ruan2013} \\
            \hline
            {SF}
            & \cite{Mucha2010,Pramanik2026,Pizzuti2017,Zhai2018}
            & \cite{kumar2011co}
            & \cite{Combe2015,CDBNE2022}
            & \cite{Cruz2011,DangViennet2012,Li2018} \\
            \hline
            {LF}
            & \cite{Tagarelli2017,mandaglio2018consensus}
            & \cite{Berlingerio2013,Tang2011,Santra2023,mandaglio2018consensus}
            & \cite{Huang2016}
            & \cite{Luo2019,Liu2020} \\
            \hline
        \end{tabular}
    \end{table}

    Within this taxonomy, we focus on representative modularity-based {\sf EF}, {\sf SF}, and {\sf LF} formulations for shared-node multiplex networks.

    The analytical literature on multilayer and multiplex community detection has considered several related questions. Mucha et al.~\cite{Mucha2010} introduced a generalized multilayer modularity, while Hanteer and Magnani~\cite{Hanteer2020} analyzed community structures recoverable under different interlayer-coupling choices. Other analytical studies have examined the effect of layer aggregation on community detectability: \cite{Taylor2016} studied the detectability of planted communities in layer-aggregated networks using random-matrix theory, and \cite{Taylor2017} extended this analysis to super-resolution detection. Pamfil et al.~\cite{Pamfil2019} relate layer-weighted multilayer modularity to maximum-likelihood inference in stochastic block models. These studies address modularity formulations, detectability, community recovery, or statistical equivalences, rather than explicit analytical relationships among the EF, SF, and LF objectives considered here. The two-layer case $L=2$ with $\gamma=1$ of our {\sf EF}--{\sf SF} relation was previously obtained in \cite{ChunaevComplNetw2020,Chunaev2021}; here, we extend the relation to arbitrary $L$ and $\gamma>0$ and additionally analyze the heterogeneity term, layer-weight optimization, and the {\sf LF} construction.

\section{Representative {\sf EF}, {\sf SF}, and {\sf LF} formulations}
\label{sec:methods}

    Following the classification above, we define the specific {\sf EF}, {\sf SF}, and {\sf LF} formulations considered in this study, along with their corresponding modularity-based objectives.

\subsection{{\sf EF} method}
\label{EF_description}

    The {\sf EF} representative generalizes weight-based fusion for CD in node-attributed networks \cite{Chunaev2021,ChunaevComplNetw2020}. It flattens the multiplex network \cite{Magnani2021} by taking a convex combination of the corresponding edge weights across layers; see Figure~\ref{fig:scheme_early}.

    \begin{figure}[h]
        \centering
        \begin{minipage}[b]{\linewidth}
            \centering
            \includegraphics[width=0.85\textwidth]{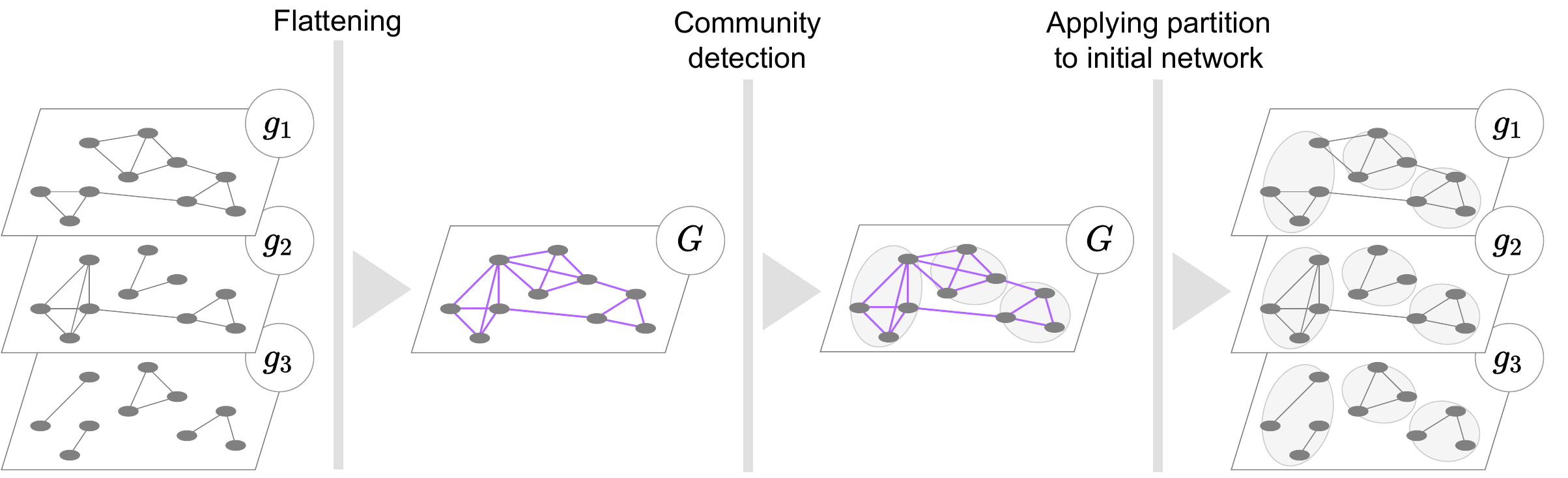}
            \\
            \includegraphics[width=0.05\linewidth]{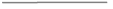}
            link
            \qquad 
            \includegraphics[width=0.05\linewidth]{figures/purple_line.pdf}
            fused link
        \end{minipage}
        \caption{The scheme of the chosen {\sf EF} method.}
        \label{fig:scheme_early}
    \par\smallskip{\small\noindent\textbf{Alt text:} EF combines the weighted layers into one graph before modularity optimization returns a shared partition.\par}
    \end{figure}

    In our notation, the {\sf EF} method fuses the layers of $G$ into the weighted network
    \begin{equation*}
        \hat{G}_{\alpha}=(\mathcal{V},\mathcal{E},\hat{\mathcal{W}}_{\alpha}),
    \end{equation*}
    where the hat denotes quantities associated with the fused network. Its edge weights are given by

    \begin{equation*}
        \label{main_weight}
        \begin{split}
        \hat{w}_{\alpha}(e_{ij})=\sum_{l=1}^L \alpha_l w_l(e_{ij}),\qquad \sum_{l=1}^L \alpha_l = 1.
        \end{split}
    \end{equation*}

    Because each layer is normalized and $\alpha \in \mathcal{S}$, the flattened network is normalized as well:
    
    \begin{equation*}
        \label{wbfm-normalization}
        \sum_{e_{ij}\in \mathcal{E}} \hat{w}_{\alpha}(e_{ij})= 1.
    \end{equation*}

    The resulting CD objective is
    \begin{equation}
        \label{eq:preliminary_Q_EF}
        Q_{\sf EF}(G,C;\gamma,\alpha)=Q(\hat{G}_{\alpha},C;\gamma)
    \end{equation}
    The analytical relation between this objective and the SF objective is derived in Section~\ref{sec:theorems} for arbitrary $L\ge2$ and $\gamma>0$, generalizing the two-layer result of \cite{Chunaev2021, ChunaevComplNetw2020}; see Remark~\ref{EF_remark_L=2}.

\subsection{{\sf SF} method}
\label{SF_description}
    The {\sf SF} representative uses information from all layers simultaneously through the objective~\cite{Magnani2021,Chunaev2019survey}
    
    \begin{equation}
        \label{eq:SF_objective_function_intro}
        Q_{\sf SF}(G,C; \gamma,  \alpha)=\sum_{l=1}^L \alpha_l Q(g_l,C; \gamma),
    \end{equation}

    with $\alpha\in \mathcal{S}$. Thus, for a given layer-weight vector $\alpha$, CD maximizes a convex combination of the single-layer modularity values over $C$ (Figure~\ref{fig:scheme_sim}).
    
    \begin{figure}[h]
        \centering
        \includegraphics[width=0.51\textwidth]{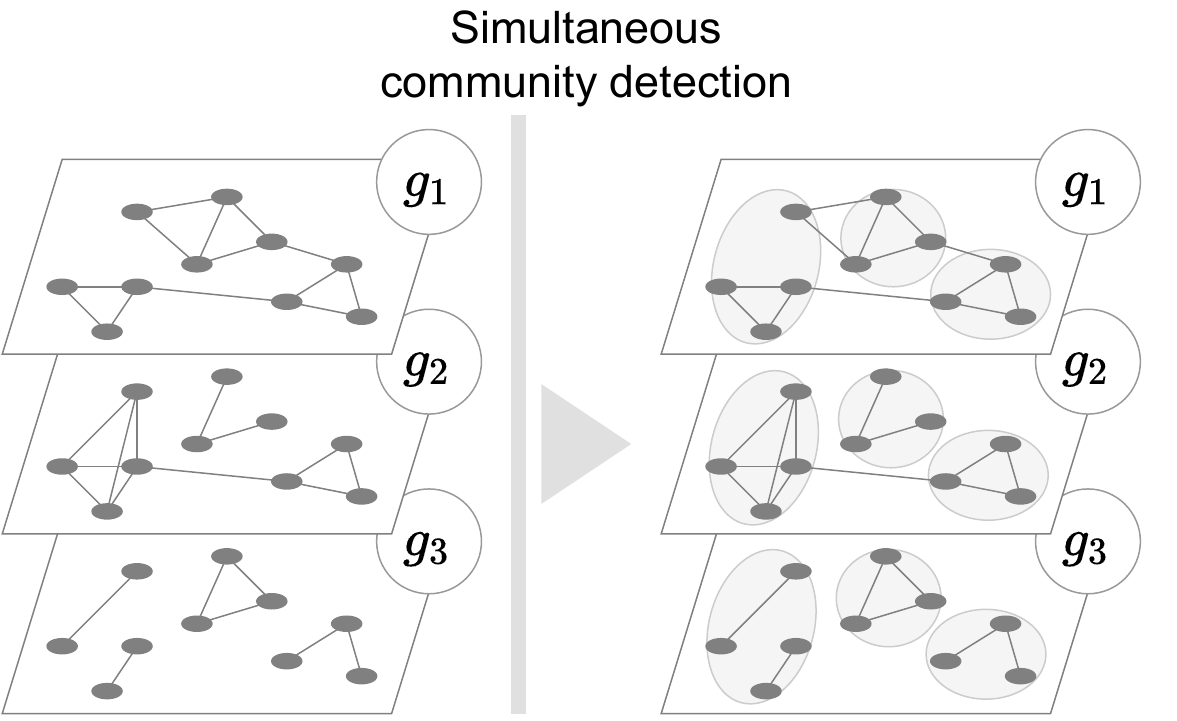}
        \caption{The scheme of the chosen {\sf SF} method.}
        \label{fig:scheme_sim}
    \par\smallskip{\small\noindent\textbf{Alt text:} SF optimizes a shared partition using the weighted sum of the individual layer modularities.\par}
    \end{figure}

        The {\sf SF} objective is closely related to the generalized multilayer modularity of~\cite{Mucha2010,Hanteer2020}. The generalized multilayer formulation allows interlayer coupling and layer-specific resolution parameters, whereas our setting imposes no interlayer coupling, a common partition across layers, and a common resolution parameter. Under the common-partition assumption, the community-membership factor associated with interlayer coupling between copies of the same node equals $1$.

        Under this common-partition restriction, we write generalized multilayer modularity in our notation as

        \begin{equation*}
        \begin{split}
            \label{eq:Mucha_modularity}
            Q_{\sf gen}(G,C;\Gamma) =& \frac{1}{2\mu} \sum_{l=1}^L \sum_{k=1}^K \sum_{v_i, v_j \in C_k} \left[ w_l(e_{ij}) - \gamma_l \frac{\kappa_l(v_i)\kappa_l(v_j)}{2m_l} \right] \\
            &+ \frac{1}{2\mu} \sum_{l=1}^L \sum_{\phantom{l}s=1\phantom{l}}^L \sum_{v_i \in \mathcal{V}} \mathcal{K}_{sl}(v_i),
            \end{split}
        \end{equation*}
        where $\kappa_l(v_i)$ is the weighted degree of node $v_i$ in layer $l$, $m_l$ is the total edge weight of layer $l$, $\Gamma=\{ \gamma_l \} |_{l=1}^L$ is the set of layer-specific resolution parameters, and $\mathcal{K}_{sl}(v_i)$ denotes the interslice coupling between node $v_i$ in layers $l$ and $s$. The normalization factor is
        
        \begin{equation*}
            \label{mu}
            \mu = \frac{1}{2} \sum_{l=1}^L \sum_{v_i \in \mathcal{V}} \left[ \kappa_l(v_i) + \sum_{s=1}^L \mathcal{K}_{sl}(v_i) \right].
        \end{equation*}

        In our multiplex setting, there is no interlayer coupling, so $\mathcal K_{sl}(v_i)=0$. Hence
        \begin{equation*}
            \label{mu_norm}
            \mu = \frac{1}{2} \sum_{l=1}^L \sum_{v_i \in \mathcal{V}} \kappa_l(v_i) = \frac{1}{2} \sum_{l=1}^L 2 m_l = L,\qquad m_l=1.
        \end{equation*}
        Because each layer is normalized according to Eq.~(\ref{eq:weights_normalization}), $m_l=1$ for every $l$, and therefore $\mu=L$.

        With a common resolution parameter across layers, $\Gamma=\{ \gamma,\ldots,\gamma \}$, the generalized multilayer modularity becomes
    
        \begin{equation*}
            \label{eq:Mucha_modularity_norm}
            Q_{\sf gen}(G,C;\gamma) = \frac{1}{L} \sum_{l=1}^L \frac{1}{2} \sum_{k=1}^K \sum_{v_i, v_j \in C_k} \left[ w_l(e_{ij}) - \gamma\frac{\kappa_l(v_i)\kappa_l(v_j)}{2}  \right].
        \end{equation*}

    By the definition of Newman modularity and the {\sf SF} objective, this is precisely
    
    \begin{equation*}
        \label{eq:Mucha_modularity_norm+}
        Q_{\sf gen}(G,C;\gamma) = Q_{\sf SF}(G,C; \gamma, (\tfrac{1}{L},\ldots,\tfrac{1}{L})).
    \end{equation*}

    A related multi-view formulation by Cruickshank et al.~\cite{cruickshank2020} also uses a weighted sum of resolution-adjusted within-view modularity terms but allows view-specific weights and resolution parameters to be updated iteratively. In contrast, our formulation uses a common resolution parameter, normalized layer weights, and a fixed layer-weight vector $\alpha\in\mathcal{S}$.

\subsection{{\sf LF} method}
\label{LF_description}
    The LF family first processes the layers independently and then combines the resulting community assignments. This general description leaves several choices open, including how communities are detected within each layer and how the layer-wise results are combined. Consequently, different LF methods can yield distinct objectives and analytical properties. In this paper, we consider a specific {\sf LF} construction, partly inspired by~\cite{Liu2020}, in which layer-wise partitions are used to construct transformed transient graphs prior to the final fusion step.

    The first step applies a specified modularity-optimization procedure separately to each layer $g_l$, $l=1,\ldots,L$. Because the layer-wise optimization may admit multiple optimal partitions, the procedure must include a deterministic tie-breaking rule; for heuristic optimization, its algorithmic settings and random seed may also affect the returned partition. We therefore fix a deterministic transient-construction operator $\tau$ that specifies the layer-wise optimization procedure, all settings needed to determine its output, and the graph transformation described below. The layer-wise resolution parameter is included in $\tau$, and $\tau$ is held fixed as $\alpha$ varies. The resulting layer-wise partitions are denoted by
    \begin{equation*}
        C'^{\tau}=\{C_l'^{\tau}\}_{l=1}^L, \text{ where }
        C_l'^{\tau}=\{C_{l,k}'^{\tau}\}_{k=1}^{K_l}.
    \end{equation*}
    We refer to these as the transient partitions. Singleton communities are allowed. A prime denotes quantities associated with the transformed networks constructed from these partitions.

    For transformed graphs, we extend $\mathcal{E}$ to include the singleton self-loops introduced below; original layers have zero weight on these loops. The second step consists of constructing new $L$ {\it transient} networks $g'^{\tau}_l=(\mathcal{V},\mathcal{E},W'^{\tau}_l)$, where $W'^{\tau}_l=\{w'^{\tau}_l(e_{ij})\}$ are based on the set of partitions $C'^{\tau}$. We denote the resulting multiplex by
    \begin{equation*}
        G'^{\tau}=\tau(G)=\{g'^{\tau}_l\}_{l=1}^L.
    \end{equation*}
    Within each transient community, we retain the original edge weights and remove all edges connecting distinct transient communities. For each removed edge, half of its weight is redistributed uniformly among the pairs of distinct vertices in each of its two endpoint communities, including pairs that had zero weight originally. More precisely, for a transient community $B$ with $|B|\ge2$, let $b_l(B)$ denote the total weight of the edges leaving $B$. Then
    \begin{equation*}
        w'^{\tau}_l(e_{ij})=w_l(e_{ij})+\frac{b_l(B)}{|B|(|B|-1)},\qquad v_i,v_j\in B,\quad i\ne j.
    \end{equation*}
    If $C'^{\tau}_{l,k}$ is a singleton ${v_i}$, the corresponding redistributed weight is assigned to the self-loop $e_{ii}$ of the transformed graph and interpreted according to the adjacency convention introduced in Section~\ref{sec:problem_statement}. Specifically,
    \begin{equation*}
        \label{singleton-self-loop}
        w'^{\tau}_l(e_{ii})=\tfrac{1}{2}\,\kappa_l(v_i),
    \end{equation*}
    so that, under the convention $A_{ii}=2w'^{\tau}_l(e_{ii})$, the transformed degree satisfies
    \begin{equation*}
        \kappa'^{\tau}_l(v_i)=2\,w'^{\tau}_l(e_{ii})=\kappa_l(v_i).
    \end{equation*}
    Since each removed edge contributes half of its weight to each endpoint community, the total edge weight is preserved:
    \begin{equation*}
        \sum_e w'^{\tau}_l(e)=1.
    \end{equation*}
    Thus, the transformed layers remain normalized and satisfy Eq.~(\ref{LF_construction_2}), including for singleton communities. Figure~\ref{fig:absorbing} illustrates the construction.

    \begin{figure}[!htbp]
        \centering
        \begin{minipage}[b]{\linewidth}
            \centering
            \includegraphics[width=0.68\textwidth]{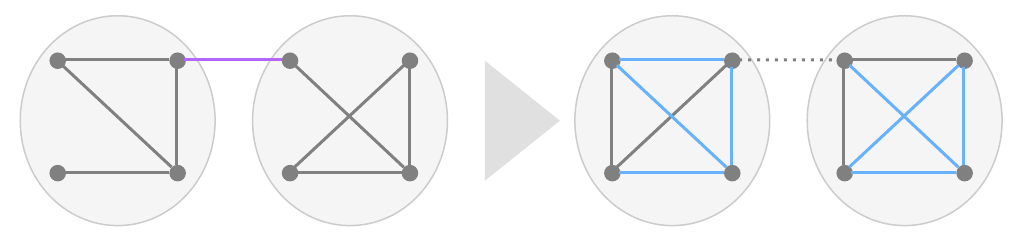}
            \\
            \includegraphics[width=0.05\linewidth]{figures/grey_line.pdf}
            link with weight $\frac{1}{20}$
            \quad
            \includegraphics[width=0.05\linewidth]{figures/blue_line.pdf}
            updated link with weight $\frac{1}{10}$
            \\
            \includegraphics[width=0.05\linewidth]{figures/purple_line.pdf}
            link with weight $\frac{3}{5}$
            \quad
            \includegraphics[width=0.05\linewidth]{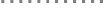}
            erased link
        \end{minipage}
        \caption{Illustration of the ``absorbing'' procedure used to construct the transformed layer network. The construction preserves the layer normalization and satisfies the properties in Eqs.~(\ref{LF_construction_1}) and (\ref{LF_construction_2}).}
        \label{fig:absorbing}
    \par\smallskip{\small\noindent\textbf{Alt text:} The absorbing construction removes inter-community edges and redistributes their weight within the endpoint communities; singleton communities receive self-loops.\par}
    \end{figure}

    By construction, for every layer $l$ and every community $C'^{\tau}_{l,k}$,
    \begin{equation}
        \label{LF_construction_1}
         \sum_{v_i, v_j \in C'^{\tau}_{l,k}} w'^{\tau}_l({e}_{ij}) \ge  \sum_{v_i, v_j \in C'^{\tau}_{l,k}}  w_l({e}_{ij}),
    \end{equation}
    
    \begin{equation}
        \label{LF_construction_2}
        \sum_{v_i \in C'^{\tau}_{l,k}} \kappa'^{\tau}_l(v_i) = \sum_{v_i \in C'^{\tau}_{l,k}} \kappa_l(v_i).
    \end{equation}

    These two properties imply that, for the transient partition $C_l'^{\tau}$, the modularity of the transformed layer cannot be smaller than that of the original layer:
    \begin{equation*}
        Q(g_l'^{\tau},C_l'^{\tau};\gamma) \ge Q(g_l,C_l'^{\tau};\gamma).
    \end{equation*}
    This follows from Lemma~\ref{lemma_lf_connected_to_sf} below. Thus, the construction uses the independently obtained layer-wise partitions to strengthen the within-community structure of the transformed layers. This is a deliberate modeling choice rather than a neutral rewriting of the original networks: redistributing inter-community weight within communities can increase the apparent strength of the corresponding communities.

    The third step applies {\sf EF} to the transient multiplex $G'^{\tau}$ at the chosen $\alpha$, seeking a common partition by maximizing
    \begin{equation}
        \label{preliminary_Q_LF}
        Q_{\sf LF}^{\tau}(G,C;\gamma,\alpha)=Q_{\sf EF}(G'^{\tau},C;\gamma,\alpha).
    \end{equation}
    Section~\ref{sec:theorems} derives its relation to the original layers for fixed $\tau$. The resulting objective depends on both the original multiplex network $G$ and the fixed transient-construction operator $\tau$; Section~\ref{sec:theorems} derives its relation to the original layer objectives. Thus, the analytical results for LF are conditional on the specific construction $\tau$ considered here and do not apply to the entire LF family. When the transient operator is clear from context, we suppress the superscript and write $G'$ and $Q_{\sf LF}$ to avoid overloading the notation. All LF formulas below should therefore be read conditionally on the fixed operator $\tau$.

    \begin{figure}[h]
        \centering
        \begin{minipage}[b]{0.9\linewidth}
            \centering
            \includegraphics[width=0.85\textwidth]{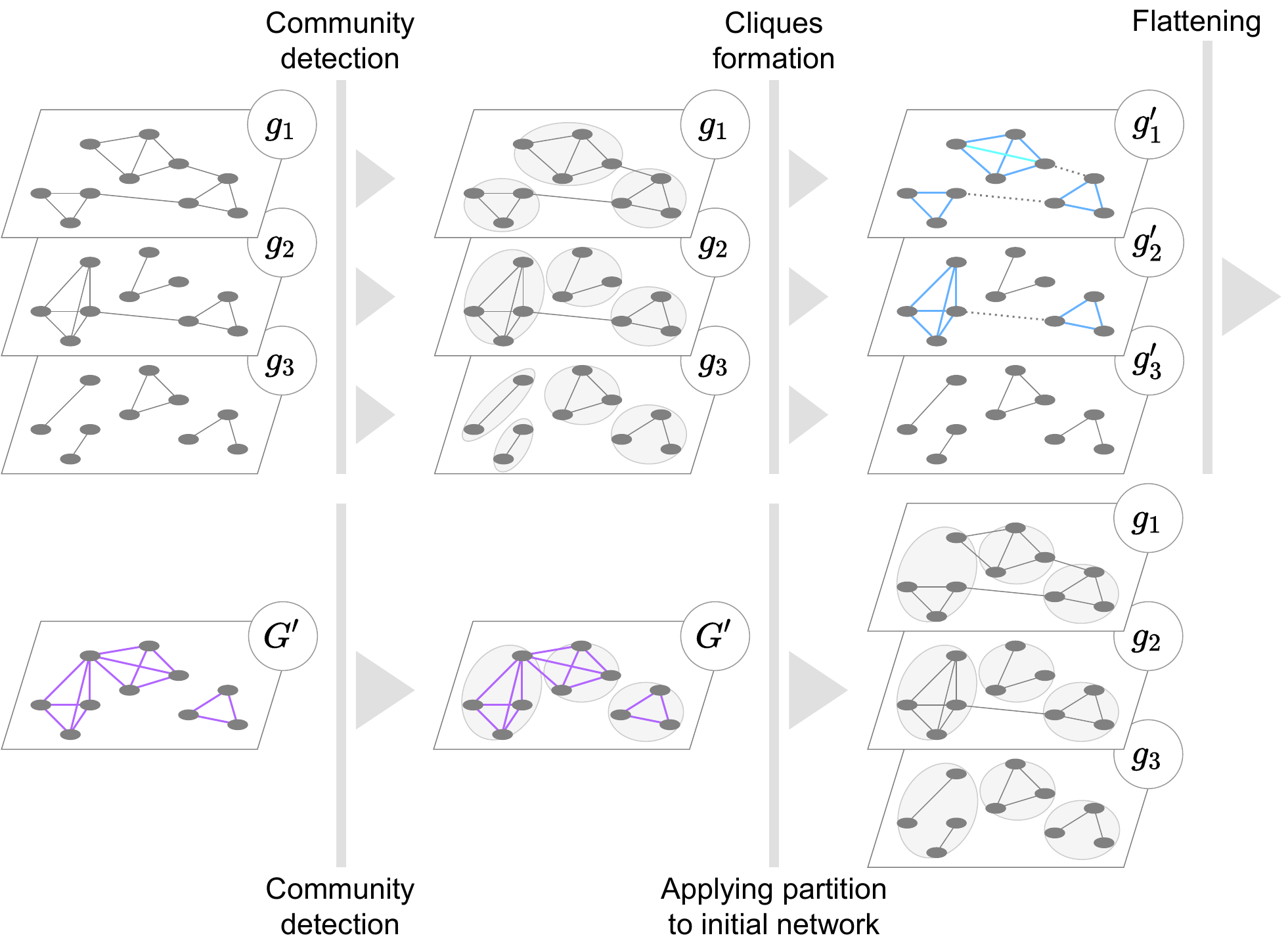}
            \\
            \includegraphics[width=0.05\linewidth]{figures/grey_line.pdf}
            link
            \includegraphics[width=0.05\linewidth]{figures/sky_blue_line.pdf}
            added link
            \includegraphics[width=0.05\linewidth]{figures/blue_line.pdf}
            updated link
            \includegraphics[width=0.05\linewidth]{figures/purple_line.pdf}
            fused link
            \includegraphics[width=0.05\linewidth]{figures/grey_dot_line.pdf}
            erased link
        \end{minipage}
        \caption{The scheme of the chosen {\sf LF} method.}
        \label{fig:scheme_late_new}
    \par\smallskip{\small\noindent\textbf{Alt text:} LF first partitions each layer, constructs transient layers by absorption, and then fuses and partitions the transient network.\par}
    \end{figure}

    The overall procedure is illustrated in Figure~\ref{fig:scheme_late_new}. We adopt the absorbing rule because it preserves community-wise degree sums and does not decrease community-wise internal edge weight. Appendix~\ref{app:liu_lf} contrasts this construction with a normalized cluster-network adaptation of \cite{Liu2020}, for which the Eq.~(\ref{LF_construction_1}) need not hold.

\section{Explicit objective functions}
\label{sec:theorems}
    We derive explicit objective functions for the representative EF and LF methods and relate them to SF.

\subsection{Explicit objective function of an {\sf EF} method}
        We derive the explicit {\sf EF} objective in Eq.~(\ref{eq:preliminary_Q_EF}) using the notation in Section~\ref{EF_description}.
    
    \begin{theorem}
        \label{theorem_ef_connected_to_sf}
        Under the normalization in Eq.~(\ref{eq:weights_normalization}), for every $\alpha\in\mathcal{S}$, partition $C\in\mathcal{C}$, and resolution parameter $\gamma>0$, it holds that
        \begin{equation}
            \label{composite_modularity_dependency}
            Q_{\sf EF}(G,C; \gamma,  \alpha)=\sum_{l=1}^L \alpha_l Q(g_l,C; \gamma) +  \gamma\Delta(G,C;\alpha), 
        \end{equation}
        where
        \begin{equation}
            \label{composite_modularity_diff_pos}
            \Delta(G,C;\alpha)= \frac{1}{4}\sum_{k=1}^K \Bigg[\sum_{l=1}^L \alpha_l \bigg[ \sum_{v_i \in C_k} \kappa_l(v_i) \bigg]^2 - \bigg[ \sum_{l=1}^L \alpha_l \sum_{v_i \in C_k} \kappa_l(v_i) \bigg]^2 \Bigg]\geq 0.
        \end{equation}

        Moreover, with $\operatorname{supp}(\alpha)=\{l:\alpha_l>0\}$, we have $\Delta(G,C;\alpha)=0$ if and only if
        \begin{equation}
            \label{condition_jensen_community}
            \sum_{v_i \in C_k} \kappa_{l}(v_i)=\sum_{v_i \in C_k} \kappa_{\lambda}(v_i),\quad \text{for all } k=1,\ldots,K \text{ and all } l,\lambda \in \operatorname{supp}(\alpha),
        \end{equation}
    \end{theorem}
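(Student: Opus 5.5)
The plan is a direct expansion of both sides using the normalization $m_l=1$ and linearity of degrees in the edge weights. First I will record the fused graph's quantities. Since $\hat w_\alpha(e_{ij})=\sum_l\alpha_l w_l(e_{ij})$, the fused degree is $\hat\kappa_\alpha(v_i)=\sum_l\alpha_l\kappa_l(v_i)$. Because $\alpha\in\mathcal S$ and every $m_l=1$, the fused total weight is $\hat m_\alpha=1$. Substituting into Eq.~(\ref{eq:modularity}) gives
\begin{equation*}
Q_{\sf EF}(G,C;\gamma,\alpha)=\frac12\sum_{k=1}^K\Big[\sum_{v_i,v_j\in C_k}\sum_l\alpha_l w_l(e_{ij})-\frac{\gamma}{2}\Big(\sum_{v_i\in C_k}\hat\kappa_\alpha(v_i)\Big)^2\Big].
\end{equation*}
Here I use that the double sum of $\kappa(v_i)\kappa(v_j)$ over $C_k$ factorizes as the square of the community volume.

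Next I will write the SF objective the same way. With $m_l=1$,
\begin{equation*}
\sum_l\alpha_lQ(g_l,C;\gamma)=\frac12\sum_k\Big[\sum_{v_i,v_j\in C_k}\sum_l\alpha_lw_l(e_{ij})-\frac{\gamma}{2}\sum_l\alpha_l\Big(\sum_{v_i\in C_k}\kappa_l(v_i)\Big)^2\Big].
\end{equation*}
The edge terms in the two expressions coincide by linearity. Subtracting, the difference is
\begin{equation*}
\frac{\gamma}{4}\sum_k\Big[\sum_l\alpha_l d_{lk}^2-\Big(\sum_l\alpha_l d_{lk}\Big)^2\Big],\qquad d_{lk}=\sum_{v_i\in C_k}\kappa_l(v_i),
\end{equation*}
which is exactly $\gamma\Delta(G,C;\alpha)$. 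This establishes Eq.~(\ref{composite_modularity_dependency}).

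For non-negativity, I fix $k$ and view $\alpha$ as a probability distribution on $\{1,\dots,L\}$. The bracket is then the variance of the random variable $l\mapsto d_{lk}$, so it equals $\sum_l\alpha_l(d_{lk}-\bar d_k)^2\ge0$ with $\bar d_k=\sum_l\alpha_l d_{lk}$. Equivalently, it is Jensen's inequality for $x\mapsto x^2$. I will verify the variance identity explicitly by expanding the square and using $\sum_l\alpha_l=1$.

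For the equality characterization, $\Delta=0$ holds if and only if every community's variance vanishes, since each summand is non-negative. A weighted variance $\sum_l\alpha_l(d_{lk}-\bar d_k)^2$ vanishes if and only if $d_{lk}=\bar d_k$ for every $l$ with $\alpha_l>0$, by strict positivity of those weights. That is exactly condition (\ref{condition_jensen_community}) on $\operatorname{supp}(\alpha)$; conversely, if the $d_{lk}$ agree on the support, then $\bar d_k$ equals that common value and each term vanishes.

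The only step needing care is the bookkeeping of the factors $\tfrac12$ and $\tfrac{1}{2m}$ together with the double counting in the vertex-indexed sums. This is needed to confirm that the constant in front of $\Delta$ is precisely $\gamma/4$. The substantive content beyond this is just the variance identity.
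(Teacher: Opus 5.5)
Your proposal is correct and follows essentially the same route as the paper. You expand $Q_{\sf EF}$ using $\hat m=1$, the convex-combination degrees, and the factorization $\sum_{v_i,v_j\in C_k}\kappa(v_i)\kappa(v_j)=(\sum_{v_i\in C_k}\kappa(v_i))^2$, isolate the SF term, and get non-negativity and the equality case from the weighted-variance (Jensen) form of $\Delta$; your explicit identity $\sum_l\alpha_l d_{lk}^2-\bar d_k^2=\sum_l\alpha_l(d_{lk}-\bar d_k)^2$ handles the equality case slightly more directly than the paper's appeal to strict convexity.
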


    \begin{equation*}
        d_{lk}(C)=\sum_{v_i\in C_k}\kappa_l(v_i)
        \quad\Longrightarrow\quad
        \Delta(G,C;\alpha)=\frac14\sum_{k=1}^K \operatorname{Var}_{\alpha}\bigl(d_{lk}(C)\bigr),
    \end{equation*}
    where $\operatorname{Var}_{\alpha}$ denotes the weighted variance across layers with weights $\alpha$. Thus, for a fixed partition, the {\sf EF} objective exceeds the {\sf SF} objective by $\gamma/4$ times the sum of the weighted variances of the community-wise degree weights across the layers with $\alpha_l>0$. In other words, {\sf EF} assigns additional objective value to partitions for which the total degree weight of the same community differs substantially between layers.
    
    \begin{proof}
        Fix $C$. We first rewrite the ingredients of  (\ref{eq:modularity}) in terms of (\ref{eq:weights_normalization}) for $\hat{G}_{\alpha}$:
        \begin{equation}
            \label{ingredients}
            \begin{split}
            &\hat{w}(e_{ij})=\sum_{l=1}^L \alpha_l w_l(e_{ij}), \quad \sum_{l=1}^L \alpha_l = 1,\\
            &\hat{\kappa}(v_i)=\sum_{l=1}^L \alpha_l \kappa_l(v_i), \quad \kappa_l(v_i)=\sum_{v_j \in \mathcal{V}} w_l(e_{ij}), \\
            &\hat{m}=\sum_{l=1}^L \alpha_l m_l = 1, \quad m_l = \sum_{e_{ij}\in\mathcal{E}} w_l(e_{ij})=\frac{1}{2} \sum_{v_i, v_j \in\mathcal{V}} w_l(e_{ij})=1.
            \end{split}
        \end{equation}
        
        Furthermore, for any network $g$ and partition $C$:
        \begin{equation}
            \label{kk}
            \begin{split}
             \sum_{v_i, v_j \in C_k} &\kappa(v_i) \kappa(v_j)=  \sum_{v_i \in C_k} \kappa(v_i) \sum_{v_j \in C_k} \kappa(v_j)  = \left[ \sum_{v_i \in C_k} \kappa(v_i) \right] ^2.
            \end{split}
        \end{equation}
        
        The expressions in Eqs.~(\ref{ingredients}) and (\ref{kk}) give for the right-hand side of Eq.~ (\ref{eq:preliminary_Q_EF}):
        \begin{equation}
            \label{early_sim_modularity}
            \begin{split}
            Q_{\sf EF}&(G,C; \gamma,\alpha) = \\
            &=\frac{1}{2} \sum_{k=1}^K \sum_{v_i, v_j \in C_k} \left[ \hat{w}(e_{ij}) - \gamma\frac{\hat{\kappa}(v_i) \hat{\kappa}(v_j)}{2}  \right]\\
            &=\frac{1}{2} \sum_{k=1}^K  \sum_{v_i, v_j \in C_k} \sum_{l=1}^L \alpha_l w_l(e_{ij}) -\frac{1}{2} \sum_{k=1}^K \frac{\gamma}{2} \left[ \sum_{v_i \in C_k} \sum_{l=1}^L \alpha_l \kappa_l(v_i) \right]^2 \\
            &=\sum_{l=1}^L\frac{\alpha_l}{2} \sum_{k=1}^K \sum_{v_i, v_j \in C_k}  \left[ w_l(e_{ij}) - \gamma\frac{\kappa_l(v_i) \kappa_l(v_j)}{2}  \right]\\
            &\qquad + \sum_{l=1}^L\frac{\alpha_l}{2} \sum_{k=1}^K \sum_{v_i, v_j \in C_k}  \left[ \gamma\frac{\kappa_l(v_i) \kappa_l(v_j)}{2}  \right] \\
            &\qquad - \frac{1}{2} \sum_{k=1}^K \frac{\gamma}{2} \left[ \sum_{v_i \in C_k} \sum_{l=1}^L \alpha_l \kappa_l(v_i) \right]^2
            \end{split}
        \end{equation}

        In terms of Eq.~(\ref{eq:modularity}), we get
        \begin{equation}
            \begin{split}
            Q_{\sf EF}&(G,C; \gamma,\alpha)= \\
            &=\sum_{l=1}^L \alpha_l Q(g_l,C; \gamma) \\
            &\qquad + \frac{1}{4}\gamma \sum_{k=1}^K \Bigg[\sum_{l=1}^L \alpha_l \bigg[ \sum_{v_i \in C_k} \kappa_l(v_i) \bigg]^2  - \bigg[ \sum_{l=1}^L \alpha_l  \sum_{v_i \in C_k} \kappa_l(v_i) \bigg]^2 \Bigg].
            \end{split}
        \end{equation}
        
        This is the identity in Eq.~(\ref{composite_modularity_dependency}). Since $\sum_{l=1}^L\alpha_l=1$ and $x\mapsto x^2$ is convex, \textit{Jensen's inequality} \cite{Hardy1952} gives
        \begin{equation}
            \label{jensen-modularity}
            \bigg[ \sum_{l=1}^L \alpha_l  \sum_{v_i \in C_k} \kappa_l(v_i) \bigg]^2\le \sum_{l=1}^L \alpha_l \bigg[ \sum_{v_i \in C_k} \kappa_l(v_i) \bigg]^2.
        \end{equation}
        
        Thus, the inequality in Eq.~(\ref{composite_modularity_diff_pos}) holds true.
        
        A weighted variance is zero exactly when its values are constant on $\operatorname{supp}(\alpha)$. This is immediate for singleton support and follows from the strict convexity of $x\mapsto x^2$ otherwise. Hence $\Delta(G,C;\alpha)=0$ exactly when Eq.~(\ref{condition_jensen_community}) holds for every $k$. The proof also applies to normalized layers with self-loops under the stated adjacency convention: it uses symmetry, degree sums, and total weight, but does not require a zero diagonal. Thus, the theorem applies to the transient multiplex $G'$ used below.
    \end{proof}
    
    \begin{remark}
        \label{EF_remark_L=2}
        Theorem~\ref{theorem_ef_connected_to_sf} is consistent with the result in \cite{ChunaevComplNetw2020,Chunaev2021} for $L=2$ and $\gamma=1$ and thus generalizes it. Note that Theorem~\ref{theorem_ef_connected_to_sf} connects modularities of $L$ layers and modularity of the fused single-layer network whose edge weights are convex combinations of the corresponding layer edge weights.
    
        The magnitude of $\Delta$ represents a measure of layer heterogeneity at the community-degree level. When the active layers have similar community-wise degree sums, $\Delta$ is small; larger differences can make it a substantial component of $Q_{\sf EF}$.
    \end{remark}
    
    \begin{figure}[H]
        \centering
        \includegraphics[width=0.255\linewidth]{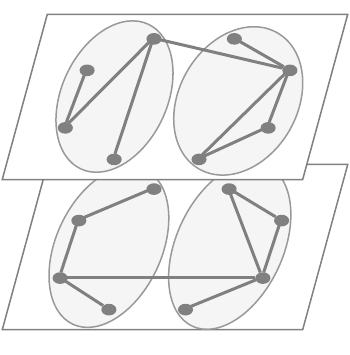}
        \caption{The example in Remark~\ref{remark_8}, where  $\Delta(G,C;\alpha)\equiv 0$ because the corresponding community-wise degree sums agree across layers, while the layers are not identical.}
        \label{fig:example-EF}
    \par\smallskip{\small\noindent\textbf{Alt text:} Two non-identical layers have equal community-wise degree sums for the illustrated partition, making the EF--SF heterogeneity correction zero.\par}
    \end{figure}

    \begin{remark}
        \label{remark_8}
        Identical layers are sufficient for $\Delta(G,C;\alpha)=0$, but are not necessary (Figure~\ref{fig:example-EF}).
    
        Choose a network $g_1$, a partition $C$, and a permutation of the vertex labels that maps each $C_k$ to itself but changes at least one edge weight. Let $g_2$ be the relabeled layer. The community-wise degree sums are unchanged, so $\Delta(G,C;\alpha)=0$ for every $\alpha$, although the labeled layers differ. Further community-preserving relabelings can be used to construct additional layers. 
    \end{remark}

\subsection{Explicit objective function of an {\sf LF} method}
    We derive the explicit {\sf LF} objective in Eq.~(\ref{preliminary_Q_LF}), conditional on the fixed transient-construction operator $\tau$ defined in Section~\ref{LF_description}.

    The following auxiliary result concerns a normalized single-layer network, an arbitrary test partition, and the transient partition selected by $\tau$.
    \begin{lemma}
        \label{lemma_lf_connected_to_sf}
        Fix $\gamma>0$ and a normalized single-layer network $g=(\mathcal{V},\mathcal{E},\mathcal{W})$. Let $C'$ be the transient partition selected by a fixed operator $\tau$, and let $g'=(\mathcal{V},\mathcal{E},\mathcal{W}')$ be obtained by the absorbing construction in Section~\ref{LF_description}, including its singleton self-loop convention. For any partition $C=\{C_k\}_{k=1}^K$,
        \begin{equation}
            \label{QNewQ}
            \begin{split}
            Q(g',C; \gamma)=Q(g,C; \gamma) + \theta(g,g',C; \gamma),
            \end{split}
        \end{equation}
        
        \begin{equation}
            \label{omega}
            \begin{split}
            \text{where } \theta(g,g',C; \gamma) =
            & \frac{1}{2}\sum_{k=1}^K  \sum_{v_i, v_j \in C_k} \left( w'({e}_{ij}) - w({e}_{ij}) \right)\\
            &-\frac{1}{4}\gamma \sum_{k=1}^K \Bigg[ \bigg[ \sum_{v_i \in C_k} \kappa'(v_i) \bigg]^2 - \bigg[ \sum_{v_i \in C_k} \kappa(v_i) \bigg]^2 \Bigg].
            \end{split}
        \end{equation}

        Moreover, for this transient partition $C'=\{C'_k\}_{k=1}^{K'}$,
        \begin{equation}
            \label{omega-transient}
            \theta(g, g', C'; \gamma) = \frac{1}{2}\sum_{k=1}^{K'}  \sum_{v_i, v_j \in C'_k} \left( w'({e}_{ij}) - w({e}_{ij}) \right) \ge 0,
        \end{equation}
        and $\theta(g, g', C'; \gamma)\equiv 0$ if and only if
        \begin{equation}
            \label{weight_estimation_zero}
            \sum_{v_i, v_j \in C'_{k}} w'(e_{ij})= \sum_{v_i, v_j \in C'_{k}} w(e_{ij}) \qquad \text{for all } k=1,\ldots,K'.
        \end{equation}
    \end{lemma}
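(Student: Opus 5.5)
The plan is to subtract the two modularity expressions term by term, using the normalization of both graphs, and then specialize to the transient partition using Eq.~(\ref{LF_construction_2}).

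\textbf{Step 1 (normalization of $g'$).} First I check that $g'$ is normalized, $m'=1$. This follows from the redistribution rule: each removed edge gives half its weight to each endpoint community, and singleton self-loops carry weight $\tfrac12\kappa(v_i)$, which is exactly half of the removed incident weight. Since $g$ is normalized by Eq.~(\ref{eq:weights_normalization}), both graphs have total weight $1$, so the prefactor $\frac{1}{2m}$ and the null-model denominator $2m$ are the same constant $2$ in both. As in the proof of Theorem~\ref{theorem_ef_connected_to_sf}, Eq.~(\ref{eq:modularity}) only uses symmetry, degree sums and total weight, so it applies to $g'$ with self-loops under the adjacency convention.

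\textbf{Step 2 (the general identity (\ref{QNewQ})--(\ref{omega})).} Using Eq.~(\ref{kk}), I rewrite both modularities as
\[
Q(g,C;\gamma)=\tfrac12\sum_{k}\sum_{v_i,v_j\in C_k}w(e_{ij})-\tfrac{\gamma}{4}\sum_k\Bigl[\sum_{v_i\in C_k}\kappa(v_i)\Bigr]^2,
\]
and likewise for $g'$ with $w'$ and $\kappa'$. Subtracting the two expressions gives $\theta$ exactly as in (\ref{omega}). This holds for an arbitrary test partition $C$, because nothing about $C$ was used.

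\textbf{Step 3 (the transient partition, (\ref{omega-transient})).} Now take $C=C'$. By Eq.~(\ref{LF_construction_2}), the community-wise degree sums coincide in $g$ and $g'$, so the bracketed $\gamma$-term vanishes. This leaves the first sum only.

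\textbf{Step 4 (nonnegativity and the zero case).} Nonnegativity follows summand by summand from Eq.~(\ref{LF_construction_1}). For the equality characterization, $\theta(g,g',C';\gamma)$ is a sum of nonnegative terms, so it is zero if and only if each term is zero, which is precisely (\ref{weight_estimation_zero}).

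\textbf{Main obstacle.} The only delicate point is to verify Eqs.~(\ref{LF_construction_1})--(\ref{LF_construction_2}) under the paper's self-loop and double-counting conventions rather than just quote them. I will first do this for a non-singleton community $B$:
\begin{itemize}
\item The internal ordered-pair sum gains $|B|(|B|-1)\cdot\frac{b(B)}{|B|(|B|-1)}=b(B)$.
\item Each internal pair is counted twice in the ordered-pair sum, so the degree sum gains $b(B)$ there; it also loses $b(B)$ from the removed outgoing edges.
\item Hence the degree sum is unchanged, and the internal sum increases by $b(B)\ge0$.
\end{itemize}
For a singleton $\{v_i\}$, the diagonal entry is $A_{ii}=2\cdot\tfrac12\kappa(v_i)=\kappa(v_i)$. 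This gives $\kappa'(v_i)=\kappa(v_i)$, and the internal sum goes from $0$ to $\kappa(v_i)\ge0$.

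These computations also show that the increase in (\ref{omega-transient}) equals $\tfrac12\sum_k b(C'_k)$. The zero case therefore happens exactly when no inter-community weight is present.
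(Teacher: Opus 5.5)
Your proof is correct and follows the same route as the paper. You subtract the two normalized modularity expressions using Eq.~(\ref{kk}) to isolate $\theta$, then specialize to $C'$ via Eq.~(\ref{LF_construction_2}) to drop the $\gamma$-term and via Eq.~(\ref{LF_construction_1}) for non-negativity and the zero case. Your explicit verification of those two construction properties, which the paper asserts ``by construction'', is also correct, including the resulting identity $\theta(g,g',C';\gamma)=\tfrac12\sum_k b(C'_k)$.
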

    
    \begin{proof}
        We first rewrite modularity in Eq.~(\ref{eq:modularity}) for the transient graph $g'$ under normalization in Eq.~(\ref{eq:weights_normalization}):
        \begin{equation}
            \label{th3_1}
            \begin{split}
            Q(g',C;\gamma)&= \frac{1}{2} \sum_{k=1}^K  \sum_{v_i, v_j \in C_k} \left[ w'(e_{ij}) - \gamma\frac{\kappa'(v_i) \kappa'(v_j)}{2}  \right] \\
            &= \frac{1}{2} \sum_{k=1}^K  \sum_{v_i, v_j \in C_k} \left[ w(e_{ij}) - \gamma\frac{\kappa(v_i) \kappa(v_j)}{2}  \right]\\
            &\quad + \frac{1}{2} \sum_{k=1}^K  \sum_{v_i, v_j \in C_k} \left( w'(e_{ij}) - w(e_{ij}) \right) \\
            &\quad - \frac{1}{4} \gamma \sum_{k=1}^K  \sum_{v_i, v_j \in C_k} \left(  \kappa'(v_i) \kappa'(v_j) - \kappa(v_i) \kappa(v_j) \right).
            \end{split}
        \end{equation}
        
        Notice that the first term in the latter expression is $Q(g,C;\gamma)$. Using the second term and applying (\ref{kk}) to the third term gives the exact definition of $\theta(g,g',C; \gamma)$ given in Eq.~(\ref{omega}).

        The inequality in Eq.~(\ref{omega-transient}) follows immediately from Eq.~(\ref{omega}) and the {\sf LF} method's properties in Eqs.~(\ref{LF_construction_1}) and (\ref{LF_construction_2}) for a single-layer network. Indeed,
        \begin{equation*}
            \label{omega_tilde_C}
            \begin{split}
            \theta(g,g',C'; \gamma) =
            & \sum_{k=1}^{K'} p_k,\qquad p_k:= \frac{1}{2}\sum_{v_i, v_j \in C'_k} \left( w'({e}_{ij}) - w({e}_{ij}) \right) \ge 0.
            \end{split}
        \end{equation*}
        
        This also implies Eq.~(\ref{weight_estimation_zero}), i.e. that $\theta(g,g',C'; \gamma)\equiv 0$ if and only if $p_k\equiv 0$ for all $k=1,\ldots,K'$.  
    \end{proof}

    The following theorem provides an explicit form of the objective function in Eq.~(\ref{preliminary_Q_LF}) within the {\sf LF} method under consideration.

    \begin{theorem}
        \label{theorem_3}
        For any partition $C$, resolution parameter $\gamma>0$, and fixed transient-construction operator $\tau$ as defined in Section~\ref{LF_description}, it holds that
        \begin{equation}
            \label{theorem2_equality}
            \begin{split}
            Q_{\sf LF}^{\tau}&(G,C; \gamma,  \alpha)= \sum_{l=1}^L \alpha_l Q(g'^{\tau}_l,C;\gamma) + \gamma \Delta(G'^{\tau}, C; \alpha)\\
            & = \sum_{l=1}^L \alpha_l Q(g_l,C;\gamma) +  \Theta(G,G'^{\tau},C;\alpha, \gamma) + \gamma \Delta(G'^{\tau}, C; \alpha),
            \end{split}
        \end{equation}
        where
        \begin{equation}
            \label{Theta}
            \Theta(G,G'^{\tau},C;\alpha, \gamma)= \sum_{l=1}^L \alpha_l \theta(g_l, g'^{\tau}_l,C; \gamma).
        \end{equation}
 
        Moreover, $\Delta(G'^{\tau},C;\alpha)\equiv 0$ if and only if $(\ref{condition_jensen_community})$ holds for all $k=1,\ldots,K$ for~$G'^{\tau}$.

        Furthermore, if $g_l\equiv g'_l$ for $l=1,\ldots,L$, then $\theta(g_l,g'_l,C; \gamma)\equiv 0$.
    \end{theorem}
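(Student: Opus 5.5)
The proof is essentially a composition of Theorem~\ref{theorem_ef_connected_to_sf} and Lemma~\ref{lemma_lf_connected_to_sf}, so I would proceed in three short steps.

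\emph{First equality.} By definition, $Q^{\tau}_{\sf LF}(G,C;\gamma,\alpha)=Q_{\sf EF}(G'^{\tau},C;\gamma,\alpha)$. I would apply Theorem~\ref{theorem_ef_connected_to_sf} to the multiplex $G'^{\tau}$ in place of $G$. Two hypotheses must hold for this.
\begin{itemize}
\item Each transient layer is normalized. This was checked in Section~\ref{LF_description}: half of each removed edge weight goes to each endpoint community, so $\sum_e w'^{\tau}_l(e)=1$.
\item The layers may carry self-loops. The closing paragraph of the proof of Theorem~\ref{theorem_ef_connected_to_sf} notes that the argument uses only symmetry, degree sums and total weight. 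It therefore holds under the adjacency convention $A_{ii}=2w(e_{ii})$.
\end{itemize}
This gives $Q^{\tau}_{\sf LF}=\sum_l\alpha_l Q(g'^{\tau}_l,C;\gamma)+\gamma\Delta(G'^{\tau},C;\alpha)$. The characterization of $\Delta(G'^{\tau},C;\alpha)=0$ is then the ``if and only if'' clause of the same theorem, with $\kappa'^{\tau}_l$ in place of $\kappa_l$.

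\emph{Second equality.} For each $l$, I would apply Lemma~\ref{lemma_lf_connected_to_sf} to the single layer $g=g_l$, its transient partition $C'^{\tau}_l$, and $g'=g'^{\tau}_l$. The identity \eqref{QNewQ} holds for an arbitrary test partition $C$, which gives $Q(g'^{\tau}_l,C;\gamma)=Q(g_l,C;\gamma)+\theta(g_l,g'^{\tau}_l,C;\gamma)$. I would multiply by $\alpha_l$ and sum over $l$. By linearity the $\theta$ terms collect into $\Theta$ as defined in \eqref{Theta}, which yields the second line of \eqref{theorem2_equality}.

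\emph{Last claim.} If $g_l\equiv g'_l$, then $w'=w$ on every edge, including self-loops, where both are zero. Consequently $\kappa'=\kappa$. Both brackets in \eqref{omega} then vanish termwise, so $\theta(g_l,g'_l,C;\gamma)=0$ for every $C$.

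The only point that needs care is the first step. One must confirm that the transient layers, with their singleton self-loops, really satisfy the standing hypotheses of Theorem~\ref{theorem_ef_connected_to_sf}. Concretely, the identity $\sum_{v_i,v_j\in C_k}\kappa(v_i)\kappa(v_j)=(\sum_{v_i\in C_k}\kappa(v_i))^2$ and $\hat m=1$ must hold when $\kappa'$ includes the diagonal entries $A_{ii}=2w'(e_{ii})$. I would verify this by noting that $\kappa'(v_i)=\sum_j A'_{ij}$ and that $m'=\tfrac12\sum_i\kappa'(v_i)=\sum_e w'(e)=1$. The rest is bookkeeping.
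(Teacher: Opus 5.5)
Your proposal is correct and follows the paper's own proof. The first equality and the characterization of $\Delta(G'^{\tau},C;\alpha)=0$ come from Theorem~\ref{theorem_ef_connected_to_sf} applied to $G'^{\tau}$, the second equality from Lemma~\ref{lemma_lf_connected_to_sf} summed with weights $\alpha_l$, and the last claim from both differences in \eqref{omega} vanishing; your explicit checks of normalization and the self-loop convention spell out what the paper records at the end of the proof of Theorem~\ref{theorem_ef_connected_to_sf}.
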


    \begin{proof}
        The former equality in Eq.~(\ref{theorem2_equality}) follows from Eq.~(\ref{preliminary_Q_LF}) and Theorem~\ref{theorem_ef_connected_to_sf}.

        Furthermore, Lemma \ref{lemma_lf_connected_to_sf} immediately gives the latter equality in Eq.~(\ref{theorem2_equality}). The equivalence for $\Delta(G'^{\tau},C;\alpha)$ follows from Theorem~\ref{theorem_ef_connected_to_sf} applied to $G'^{\tau}$. If $g_l=g'_l$, both differences in Eq.~(\ref{omega}) vanish, giving $\theta(g_l,g'_l,C;\gamma)=0$.  
    \end{proof}
    
    \begin{figure}[H]
        \centering
        \includegraphics[width=0.255\linewidth]{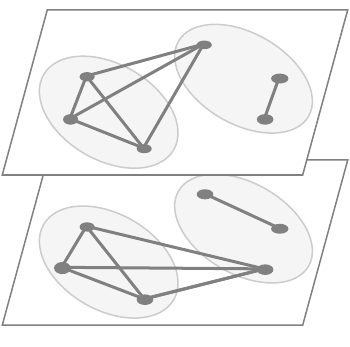}
        \caption{The example in Remark~\ref{remark_9} where the latter two terms in (\ref{theorem2_equality}) are zero simultaneously while the layers are not identical.}
        \label{fig:example-LF-new}
    \par\smallskip{\small\noindent\textbf{Alt text:} A multiplex example in which the two correction terms identified in the caption both vanish although the layers differ.\par}
    \end{figure}

    \begin{remark}
        \label{remark_9}
        A full structural characterization of when the latter two terms in Eq.~(\ref{theorem2_equality}) vanish simultaneously is outside the scope of this paper. The following example uses a fixed $\gamma=1$ and a fixed transient construction.
        
        The two terms can vanish even when the layers of $G$ differ, as Figure~\ref{fig:example-LF-new} illustrates. Each layer of $G$ consists of two isolated cliques. Assign weight $\tfrac17$ to each of the seven drawn edges in each layer. For a fixed operator $\tau$ selecting the two connected components as transient communities in each layer, $G'^{\tau}=G$ and thus $\theta(g_l, g'^{\tau}_l,C;\gamma)\equiv 0$, $l=1,2$, for any $C$. Consider the partition of $G'^{\tau}$ indicated in Figure~\ref{fig:example-LF-new} (in fact, the node numbers here are just rearranged within the communities as in Remark~\ref{remark_8}). The two community volumes of this test partition are $9/7$ and $5/7$ in both layers, so Jensen's inequality in Eq.~(\ref{condition_jensen_community}) is satisfied. Thus, the latter two terms in Eq.~(\ref{theorem2_equality}) are zero simultaneously, although the two layers are distinct as labeled graphs.
    \end{remark}

\subsection{Conclusions on the explicit objective functions}
    Here we summarize our theoretical results about the objective functions in Theorems~\ref{theorem_ef_connected_to_sf} and~\ref{theorem_3}. By means of Eqs.~(\ref{eq:preliminary_Q_EF}), (\ref{eq:SF_objective_function_intro}) and (\ref{preliminary_Q_LF}) we have:
    
    \begin{equation*}
        \label{conclusion_formulas}
        \begin{split}
        Q_{\sf EF}(G, C; \gamma,\alpha)&= Q_{\sf SF}(G, C; \gamma,\alpha) + \gamma \Delta(G,C;\alpha) \\
        Q_{\sf LF}^{\tau}(G, C; \gamma,\alpha) &= Q_{\sf EF}(G'^{\tau}, C; \gamma,\alpha)\\
        &= Q_{\sf SF}(G, C; \gamma,\alpha) + \Theta(G,G'^{\tau},C;\alpha,\gamma) + \gamma \Delta(G'^{\tau}, C;\alpha)\\
        &= Q_{\sf EF}(G, C; \gamma,\alpha) - \gamma\Delta(G,C;\alpha) \\
        &\quad + \Theta(G,G'^{\tau},C;\alpha,\gamma) + \gamma\Delta(G'^{\tau},C;\alpha).
        \end{split}
    \end{equation*}
    
    These identities give explicit objectives for the representative {\sf EF} and {\sf LF} methods. Their different data-transformation stages still distinguish them from {\sf SF}. The following consequences hold:
    \begin{itemize}
        \item for fixed $C$ and $\gamma>0$, $Q_{\sf EF}=Q_{\sf SF}$ exactly when $\Delta(G,C;\alpha)=0$, and $Q_{\sf LF}^{\tau}=Q_{\sf SF}$ exactly when $\Theta(G,G'^{\tau},C;\alpha,\gamma)+\gamma\Delta(G'^{\tau},C;\alpha)=0$. All three values agree when both conditions hold, which does not require identical layers (Remarks~\ref{remark_8} and~\ref{remark_9}). Equality at one partition does not make the objective functions identical. Conversely, distinct objective functions can share the same argmax set;
        \item $\Delta(G,C;\alpha)$ is non-negative and can be large \cite{ChunaevComplNetw2020,Chunaev2021}. Lemma~\ref{lemma_lf_connected_to_sf} guarantees $\theta(g_l,g'^{\tau}_l,C;\gamma)\ge0$ for the transient partition $C=C'^{\tau}_l$, not for an arbitrary partition. LF is evaluated on the modified layers, so its absolute values should not be ranked against EF and SF values;
        \item the explicit objectives specify the criterion for a \textit{community} adopted by each method (also noted by \cite{Magnani2021}). Evaluating a returned partition under the optimized objective measures optimization quality. External or auxiliary criteria assess different properties and can be appropriate for the scientific task. Our comparisons concern internal objective values; their relation to external quality criteria is outside the scope of this paper.
    \end{itemize}

    Choose one maximizer of each objective and write
    \begin{align}
        C_{\sf EF}&\in\arg\max_C Q_{\sf EF}(G, C; \gamma,\alpha),\\
        C_{\sf SF}&\in\arg\max_C Q_{\sf SF}(G, C; \gamma,\alpha),\\
        C_{\sf LF}^{\tau}&\in\arg\max_C Q_{\sf LF}^{\tau}(G, C; \gamma,\alpha).
    \end{align}
    By definition, each optimizer dominates the competing partitions under its own objective. This bookkeeping observation is tautological and should not be interpreted as a substantive cross-method ranking on a common scale. The meaningful comparison is instead cross-objective: for a partition returned by one method, how do the other objective functions evaluate it? We adopt this interpretation in the experimental section.

    Consequently, if a heuristic optimizer returns a partition whose cross-objective ranking differs from the exact argmax bookkeeping above, this does not contradict the theory. It only reflects optimizer dependence.

\section{Finding optimal \texorpdfstring{$\alpha$}{alpha}}
\label{sec:synergy}
    For a fixed partition $C$, the dependence of the three objective functions on $\alpha$ has a simple geometric interpretation. The {\sf SF} objective is affine in $\alpha$, whereas the {\sf EF} objective is a concave quadratic perturbation of the same affine term. Consequently, optimal layer weights behave very differently for {\sf SF} and {\sf EF}.

    \begin{theorem}
        \label{theorem:synergy_SF}
        For any multiplex graph $G$ and any $\gamma>0$,
        \begin{equation}
            \max_{\alpha \in \mathcal{S},\, C \in \mathcal{C}} Q_{\sf SF}(G, C; \gamma, \alpha)
            =
            \max_{\alpha \in \mathcal{S}_0,\, C \in \mathcal{C}} Q_{\sf SF}(G, C; \gamma, \alpha),
        \end{equation}
        where $\mathcal{S}_0$ is the set of simplex vertices, each assigning weight 1 to one layer and 0 to all others. In particular, there exists $\alpha^*\in\mathcal{S}_0$ that attains the global maximum.
    \end{theorem}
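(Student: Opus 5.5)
The argument rests on the linearity of $Q_{\sf SF}$ in $\alpha$ for a fixed partition, combined with the finiteness of $\mathcal{C}$. We first check that both maxima exist. The set $\mathcal{C}$ of partitions of the finite node set $\mathcal{V}$ is finite. For each fixed $C\in\mathcal{C}$, the map $\alpha\mapsto Q_{\sf SF}(G,C;\gamma,\alpha)=\sum_{l=1}^L \alpha_l Q(g_l,C;\gamma)$ is affine, hence continuous, on the compact simplex $\mathcal{S}$. Therefore the maximum over $\mathcal{S}\times\mathcal{C}$ is a maximum over finitely many continuous functions on a compact set, and it is attained. The maximum over $\mathcal{S}_0\times\mathcal{C}$ is a maximum over a finite set.

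Next we prove the equality. Since $\mathcal{S}_0\subseteq\mathcal{S}$, the right-hand side is at most the left-hand side. For the reverse inequality, fix any $C\in\mathcal{C}$ and $\alpha\in\mathcal{S}$, and write $q_l(C)=Q(g_l,C;\gamma)$. Because $\alpha_l\ge 0$ and $\sum_l\alpha_l=1$ by Eq.~(\ref{eq:alpha-condition}),
\begin{equation*}
Q_{\sf SF}(G,C;\gamma,\alpha)=\sum_{l=1}^L\alpha_l q_l(C)\le \max_{1\le l\le L} q_l(C)=\max_{e\in\mathcal{S}_0} Q_{\sf SF}(G,C;\gamma,e),
\end{equation*}
where the last equality holds because the vertex $e_l$ gives $Q_{\sf SF}(G,C;\gamma,e_l)=q_l(C)$. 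Taking the supremum over $(\alpha,C)$ yields the reverse inequality, so the two maxima coincide.

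For the final claim, let $(\alpha^\dagger,C^*)$ attain the global maximum. Choose $l^*\in\arg\max_l q_l(C^*)$ and set $\alpha^*=e_{l^*}\in\mathcal{S}_0$. The displayed inequality then gives $Q_{\sf SF}(G,C^*;\gamma,\alpha^*)\ge Q_{\sf SF}(G,C^*;\gamma,\alpha^\dagger)$, so $(\alpha^*,C^*)$ also attains the global maximum. Equivalently, the global optimum equals $\max_l\max_C Q(g_l,C;\gamma)$, the best single-layer modularity.

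There is no real obstacle here. The only point needing care is to make explicit that attainment follows from the finiteness of $\mathcal{C}$ and the compactness of $\mathcal{S}$. We should also note that ties among the $q_l(C^*)$ allow non-vertex maximizers, which is consistent with the theorem's phrasing ``there exists.''
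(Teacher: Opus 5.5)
Your proof is correct and follows essentially the same route as the paper: for a fixed partition, $Q_{\sf SF}$ is affine in $\alpha$, so its maximum over $\mathcal{S}$ is attained at a vertex, and maximizing over the finite set $\mathcal{C}$ preserves this. Your version spells out two things the paper leaves implicit, namely that the maxima are attained and the explicit bound of a convex combination by $\max_l Q(g_l,C;\gamma)$.
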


    \begin{proof}
        Fix a partition $\tilde{C}\in\mathcal{C}$. Then
        \begin{equation}
            Q_{\sf SF}(G,\tilde{C};\gamma,\alpha)=\sum_{l=1}^L \alpha_l Q(g_l,\tilde{C};\gamma),
        \end{equation}
        which is affine in $\alpha$ on the simplex $\mathcal{S}$. Therefore, its maximum over $\mathcal{S}$ is attained at a vertex of $\mathcal{S}$. Maximizing over all $\tilde{C}\in\mathcal{C}$ preserves this property.
    \end{proof}

    \begin{remark}
        Theorem~\ref{theorem:synergy_SF} guarantees a vertex maximizer, not uniqueness. For a fixed globally optimal partition $C^*$, any convex combination of vertices whose layers attain the global maximum at $C^*$ is also optimal. These weights form a face of $\mathcal{S}$ for that partition; different optimal partitions need not give the same face.
    \end{remark}

    \begin{theorem}
        \label{theorem:ef_fixed_partition}
        Fix a multiplex graph $G$, a partition $C$, and a resolution parameter $\gamma\ge 0$, using the continuous extension of modularity in Eq.~(\ref{eq:modularity}) at $\gamma=0$. Then $Q_{\sf EF}(G,C;\gamma,\alpha)$ is a concave function of $\alpha$ on the simplex $\mathcal{S}$. In particular, an {\sf EF} optimum may occur on the boundary or in the interior of the simplex.

        For $L=2$, writing $\alpha=(a,1-a)$ with $a\in[0,1]$, one has
        \begin{equation}
            \label{eq:ef_two_layer_alpha}
            \begin{split}
            Q_{\sf EF}(G,C;\gamma,(a,1-a))
            &=
            aQ(g_1,C;\gamma)+(1-a)Q(g_2,C;\gamma)
            \\
            &+\frac{\gamma}{4}a(1-a)\sum_{k=1}^K \big(d_{1k}(C)-d_{2k}(C)\big)^2,
            \end{split}
        \end{equation}
        where $d_{lk}(C)=\sum_{v_i\in C_k}\kappa_l(v_i)$.
    \end{theorem}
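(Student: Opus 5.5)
The plan is to start from the decomposition in Theorem~\ref{theorem_ef_connected_to_sf}. It extends verbatim to $\gamma=0$, since the proof only uses the normalization $\hat m=1$ and algebraic rearrangement. Fix $C$ and write $d_{lk}=d_{lk}(C)$. For $\alpha\in\mathcal{S}$ the identity reads
\begin{equation*}
Q_{\sf EF}(G,C;\gamma,\alpha)=\sum_{l=1}^L\alpha_l Q(g_l,C;\gamma)+\frac{\gamma}{4}\sum_{k=1}^K\Big[\sum_{l=1}^L\alpha_l d_{lk}^2-\Big(\sum_{l=1}^L\alpha_l d_{lk}\Big)^2\Big].
\end{equation*}
I would then check concavity term by term. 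The first sum and the terms $\sum_l\alpha_l d_{lk}^2$ are linear in $\alpha$, because $Q(g_l,C;\gamma)$ and $d_{lk}$ do not depend on $\alpha$. Each remaining term $-\big(\sum_l\alpha_l d_{lk}\big)^2$ is minus the square of a linear form, so it is concave. Since $\gamma/4\ge0$, the sum is concave on $\mathcal{S}$.

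An alternative is to verify concavity directly from Eq.~(\ref{eq:preliminary_Q_EF}). The edge term is linear in $\alpha$, because $\hat w_\alpha$ is linear and $\hat m=1$ on $\mathcal{S}$. The null-model term is $-\frac{\gamma}{4}\sum_k(\sum_l\alpha_l d_{lk})^2$, which is again concave. The Hessian of this term is $-\frac{\gamma}{2}\sum_k d_k d_k^{\top}$ with $d_k=(d_{lk})_l$, which is negative semidefinite.

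For the remark that an optimum may lie on the boundary or in the interior, concavity alone only allows interior maxima; it does not force them. I would therefore argue by example, using the $L=2$ formula. If $Q(g_1,C)=Q(g_2,C)$ and $\sum_k(d_{1k}-d_{2k})^2>0$, the maximum is at $a=1/2$, which is interior. If $\gamma=0$, or if $\Delta\equiv0$ as in Remark~\ref{remark_8}, the objective is affine and a vertex optimum occurs.

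For the two-layer formula I would set $\alpha=(a,1-a)$. The bracket becomes $a d_1^2+(1-a)d_2^2-(a d_1+(1-a)d_2)^2$, which simplifies to $a(1-a)(d_1-d_2)^2$. This gives Eq.~(\ref{eq:ef_two_layer_alpha}) directly.

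The main obstacle is justifying the case $\gamma=0$ and making the "may occur" claim precise. At $\gamma=0$ the continuous extension simply drops the null-model term, which leaves an affine and hence concave function. For the interior claim, I would exhibit the explicit condition $Q(g_1,C)-Q(g_2,C)\in\big(-\tfrac{\gamma}{4}S,\tfrac{\gamma}{4}S\big)$ with $S=\sum_k(d_{1k}-d_{2k})^2$. Under this condition the derivative of the concave quadratic in $a$ is positive at $a=0$ and negative at $a=1$, so the maximizer in $a$ is interior.
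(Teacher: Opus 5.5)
Your proposal is correct and follows the paper's proof. Both use the decomposition of Theorem~\ref{theorem_ef_connected_to_sf}, note that each community's contribution to $\Delta$ is affine minus the square of a linear form (hence concave), and get the two-layer formula from $ad_1^2+(1-a)d_2^2-(ad_1+(1-a)d_2)^2=a(1-a)(d_1-d_2)^2$. Two small differences: the paper handles $\gamma=0$ directly as the affine case rather than extending Theorem~\ref{theorem_ef_connected_to_sf}, and it does not prove the ``may occur'' claim. Your explicit interior condition $Q(g_1,C)-Q(g_2,C)\in(-\tfrac{\gamma}{4}S,\tfrac{\gamma}{4}S)$ (which needs $\gamma>0$ and $S>0$) therefore sharpens the paper's argument; the example in Remark~\ref{remark:ef_interior_counterexample} realizes it with $Q_1=Q_2=0$ and $S=8$.
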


    \begin{proof}
        For $\gamma=0$, $Q_{\sf EF}$ is affine in $\alpha$ directly from Eqs.~(\ref{eq:modularity}) and (\ref{eq:preliminary_Q_EF}), and the two-layer formula follows without a quadratic term. Assume $\gamma>0$ for the remaining argument.

        Theorem~\ref{theorem_ef_connected_to_sf} yields
        \begin{equation*}
            Q_{\sf EF}(G,C;\gamma,\alpha)=\sum_{l=1}^L \alpha_l Q(g_l,C;\gamma)+\gamma\Delta(G,C;\alpha).
        \end{equation*}
        The first term is affine in $\alpha$. For each community $C_k$, the contribution to $\Delta$ has the form
        \begin{equation*}
            \frac{1}{4}\left(\sum_{l=1}^L \alpha_l d_{lk}(C)^2-\left[\sum_{l=1}^L \alpha_l d_{lk}(C)\right]^2\right),
        \end{equation*}
        which is affine minus a convex quadratic term and is therefore concave on $\mathcal{S}$. Summing over $k$ preserves concavity.

        For $L=2$, the identity
        \begin{equation*}
            ad_{1k}^2+(1-a)d_{2k}^2-(ad_{1k}+(1-a)d_{2k})^2=a(1-a)(d_{1k}-d_{2k})^2
        \end{equation*}
        gives Eq.~(\ref{eq:ef_two_layer_alpha}).
    \end{proof}

    The profile $\alpha\mapsto\max_C Q_{\sf EF}(G,C;\gamma,\alpha)$ is the pointwise maximum of fixed-partition concave quadratics and is therefore not necessarily concave globally.

    \begin{remark}
        \label{remark:ef_interior_counterexample}
        The {\sf EF} optimum over $\alpha$ need not lie at a simplex vertex. Consider a two-layer multiplex on nodes $\{1,2,3,4\}$ with one unit-weight edge $(1,2)$ in layer $1$ and one unit-weight edge $(3,4)$ in layer $2$. Let $\gamma=1$. For each individual layer, the maximum modularity is $0$, so every simplex vertex gives value $0$. However, for $a=\tfrac12$ and partition $\{\{1,2\},\{3,4\}\}$, the fused graph has two disjoint edges of weight $\tfrac12$ and modularity $\tfrac12$. Hence
        \begin{equation*}
            \max_{\alpha\in\mathcal{S},\,C} Q_{\sf EF}(G,C;\gamma,\alpha)
            >
            \max_{\alpha\in\mathcal{S}_0,\,C} Q_{\sf EF}(G,C;\gamma,\alpha).
        \end{equation*}
        Nontrivial fusion can therefore be optimal for the {\sf EF} objective.
    \end{remark}

    For LF, no equally general vertex theorem is available at the level of the original graph $G$. Once the transient multiplex $G'$ is fixed, the final {\sf LF} optimization inherits the {\sf EF} geometry on $G'$. However, the transient construction itself depends on the returned layer-wise partitions and on the absorbing rule, so LF remains more construction-dependent than SF or EF.

\subsection{Interaction with the modularity resolution limit}
\label{sec:resolution_limit}

    The three formulations retain the resolution-limit concern of modularity \cite{Fortunato2007,Lancichinetti2011}. Multiplying all of a layer's edge weights by a positive constant leaves its modularity unchanged, so normalization to total weight one does not remove this limitation or make different layers share a detection scale. Fusion changes the effective weights or objective, and its effect on detectable communities depends on the layer and community structure. Our objective identities do not provide a separate resolution-limit theorem for the fused networks.

    These observations are not peculiar to our setting: the resolution limit is a property of the modularity objective itself, and the same caveat would apply to any modularity-based EF, SF, or LF construction. Resolution-limit-free alternatives such as the constant-Potts model of \cite{Traag2011CPM} replace the null model and lie outside the scope of this paper, but adapting the EF--SF--LF decomposition to those objectives is a natural direction for future work.

\section{Experimental study}
\label{sec:experiments}
    In this section, we present theoretical conclusions regarding the objective functions of the {\sf EF}, {\sf SF}, and {\sf LF} methods. For simplicity, we set $\gamma=1$ everywhere below. The exact statements in Section~\ref{sec:synergy} therefore apply to the toy examples and to the objective functions themselves.

    We use a two-level experimental design. First, we consider brute-force optimization for toy multiplex networks in Section~\ref{exp_toy}. These experiments are exact and therefore isolate the geometry of the objective functions from optimizer effects. Second, we study real-world node-attributed and multiplex networks in Section~\ref{exp_real}. Those experiments rely on Louvain and Leiden heuristics, so they should be interpreted as illustrations of optimizer dependence rather than as tests of exact argmax theory.

\subsection{Toy examples of multiplex networks and brute-force optimization}
    \label{exp_toy}
    \begin{figure}[!htbp]
        \centering
        \begin{subfigure}[b]{0.49\textwidth}
            \centering
            \centering\includegraphics[width=0.357\linewidth]{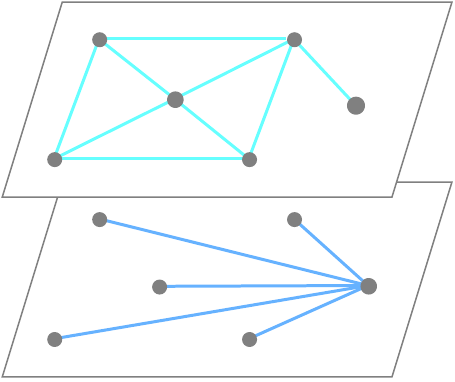}
           \caption{ }
            \label{fig:BF1_graph}
        \end{subfigure}
        \begin{subfigure}[b]{0.49\textwidth}
            \centering
            \centering\includegraphics[width=0.357\linewidth]{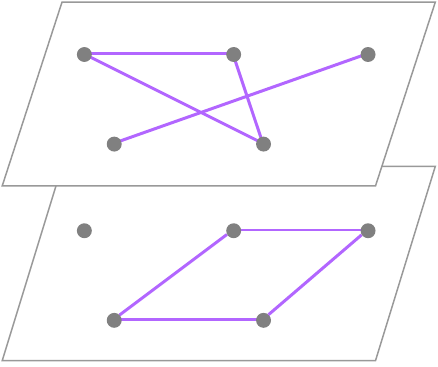}
            \caption{ }
            \label{fig:BF2_graph}
        \end{subfigure}
        \begin{subfigure}[b]{0.8\textwidth}
            \centering
            \includegraphics[width=0.05\linewidth]{figures/sky_blue_line.pdf}
            link with weight $\frac{1}{9}$
            \hfill
            \includegraphics[width=0.05\linewidth]{figures/blue_line.pdf}
            link with weight $\frac{1}{5}$
            \hfill \\
            \includegraphics[width=0.05\linewidth]{figures/purple_line.pdf}
            link with weight $\frac{1}{4}$
        \end{subfigure}
        \caption{Small two-layer multiplex networks for the brute force experiments: (a) Example 1, (b) Example 2. The examples have six and five nodes, respectively, allowing exhaustive enumeration of 203 and 52 set partitions.}
        \label{brute-force-examples}
    \par\smallskip{\small\noindent\textbf{Alt text:} Two toy multiplexes, each with two layers on shared nodes, provide the small graphs used for exhaustive partition optimization in Examples 1 and 2.\par}
    \end{figure}

    \begin{figure}[!htbp]
        \centering
    \scalebox{0.8}{%
    \begin{minipage}{\textwidth}
    \centering
        \begin{subfigure}[b]{0.49\textwidth}
            \centering\includegraphics[width=\linewidth]{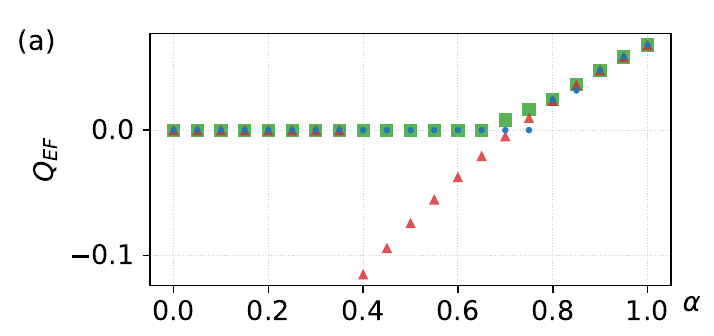}
            \phantomsubcaption\label{fig:toy1_ef}
        \end{subfigure}
        \begin{subfigure}[b]{0.49\textwidth}
            \centering\includegraphics[width=\linewidth]{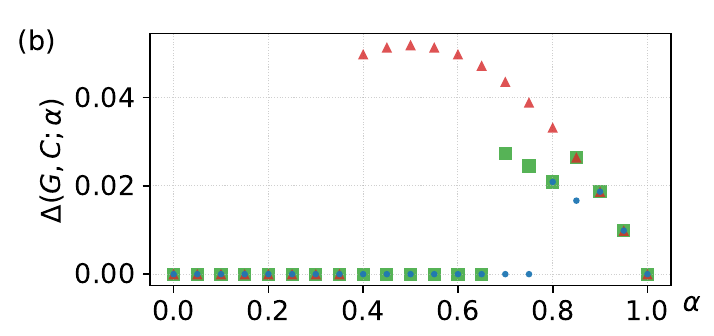}
            \phantomsubcaption\label{fig:toy1_delta}
        \end{subfigure}
        \begin{subfigure}[b]{0.49\textwidth}
            \centering\includegraphics[width=\linewidth]{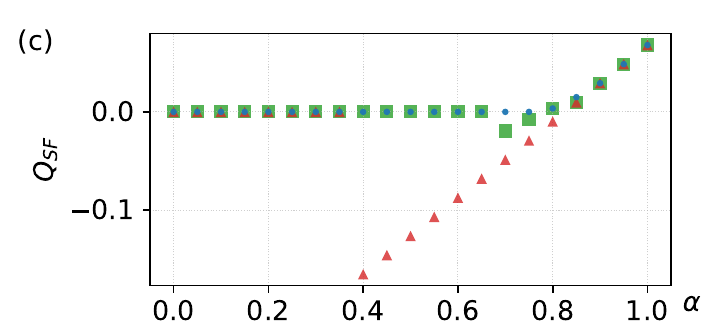}
            \phantomsubcaption\label{fig:toy1_sf}
        \end{subfigure}
        \begin{subfigure}[b]{0.49\textwidth}
            \centering\includegraphics[width=\linewidth]{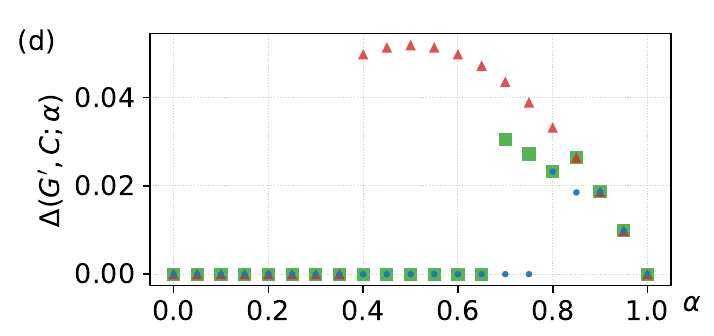}
            \phantomsubcaption\label{fig:toy1_delta_new}
        \end{subfigure}
        \begin{subfigure}[b]{0.49\textwidth}
            \centering\includegraphics[width=\linewidth]{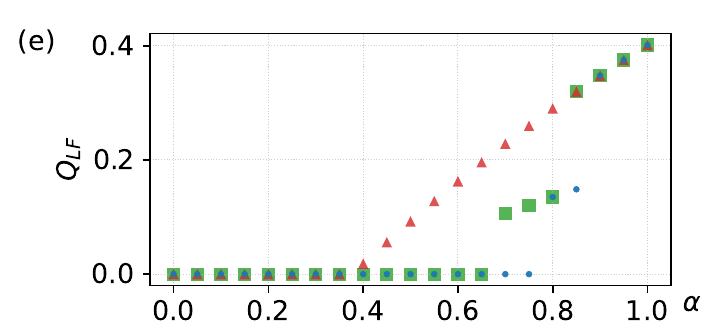}
            \phantomsubcaption\label{fig:toy1_lf}
        \end{subfigure}
        \begin{subfigure}[b]{0.49\textwidth}
            \centering\includegraphics[width=\linewidth]{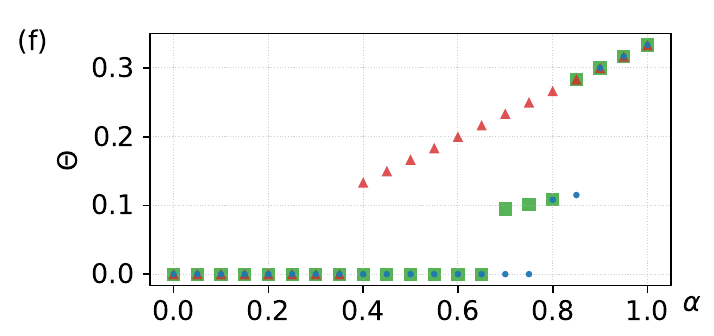}
            \phantomsubcaption\label{fig:toy1_theta}
        \end{subfigure}
    
    \toymethodlegend
    \end{minipage}%
    }
    \caption{Toy Example 1. Brute force results of the {\sf EF}, {\sf SF} and {\sf LF} methods (producing correspondingly the partitions $C_{\sf EF}$, $C_{\sf SF}$ and $C_{\sf LF}$) in terms of the functions $Q_{\sf EF}$, $Q_{\sf SF}$, $Q_{\sf LF}$, $\Delta(G,C;\alpha)$, $\Delta(G'^{\tau},C;\alpha)$ and $\Theta$ for $\alpha=(a,1-a)$ with $a\in\{0.00,0.05,\ldots,1.00\}$. The vertical axis of the $Q_{\sf LF}^{\tau}$ panel is not on the same raw scale as $Q_{\sf EF}$ and $Q_{\sf SF}$ because $Q_{\sf LF}^{\tau}$ is evaluated on the transient layers $G'^{\tau}$; only within-panel comparisons of $Q_{\sf LF}$ values are meaningful across $\alpha$.}
        \label{brute_force_example_result_1}
    \par\smallskip{\small\noindent\textbf{Alt text:} Six panels plot the three modularity objectives and the three correction or auxiliary quantities against the first-layer weight for Toy Example 1. Symbols distinguish partitions returned by SF, EF and LF; LF scores use the transient graph.\par}
    \end{figure}

    \begin{figure}[!htbp]
        \centering
    \scalebox{0.8}{%
    \begin{minipage}{\textwidth}
    \centering
        \begin{subfigure}[b]{0.49\textwidth}
            \centering\includegraphics[width=\linewidth]{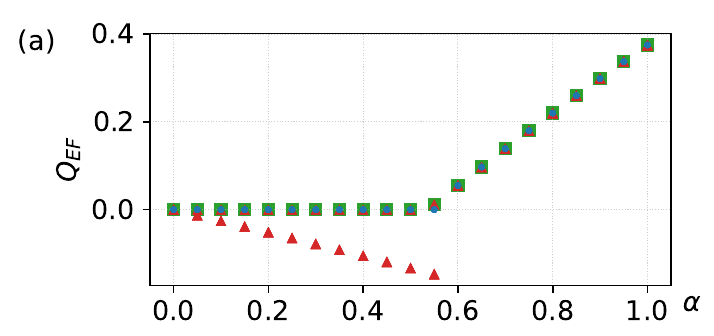}
            \phantomsubcaption\label{fig:toy2_ef}
        \end{subfigure}
        \begin{subfigure}[b]{0.49\textwidth}
            \centering\includegraphics[width=\linewidth]{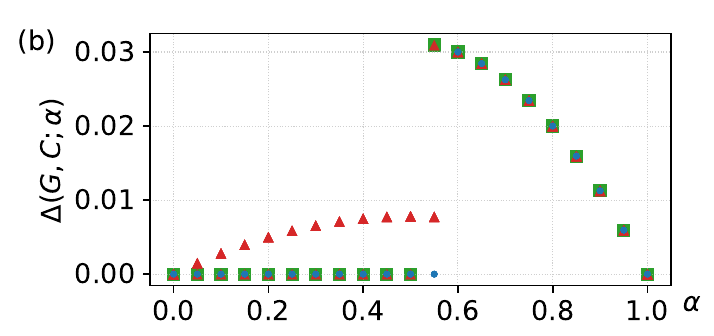}
            \phantomsubcaption\label{fig:toy2_delta}
        \end{subfigure}
        \begin{subfigure}[b]{0.49\textwidth}
            \centering\includegraphics[width=\linewidth]{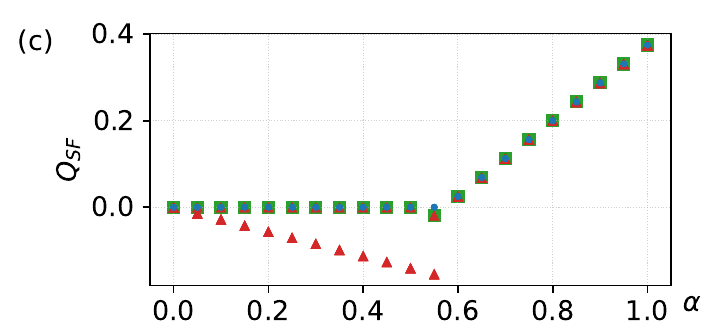}
            \phantomsubcaption\label{fig:toy2_sf}
        \end{subfigure}
        \begin{subfigure}[b]{0.49\textwidth}
            \centering\includegraphics[width=\linewidth]{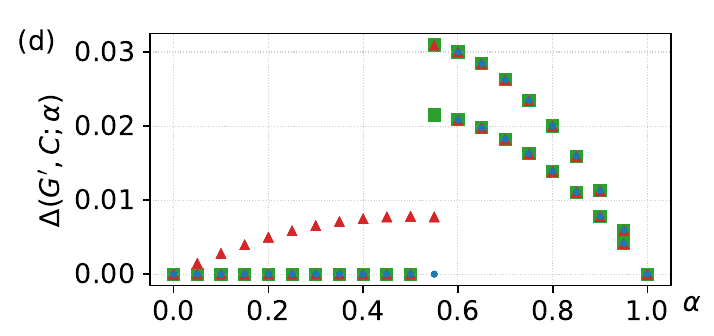}
            \phantomsubcaption\label{fig:toy2_delta_new}
        \end{subfigure}
        \begin{subfigure}[b]{0.49\textwidth}
            \centering\includegraphics[width=\linewidth]{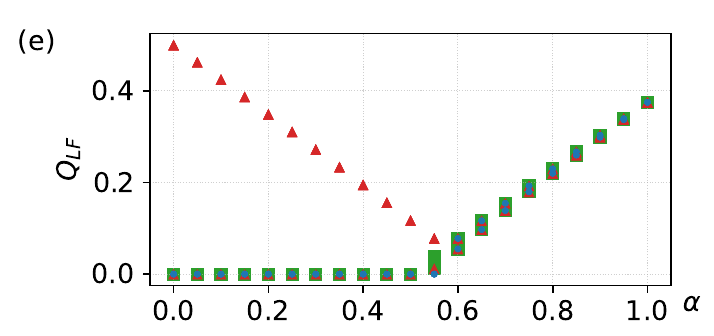}
            \phantomsubcaption\label{fig:toy2_lf}
        \end{subfigure}
        \begin{subfigure}[b]{0.49\textwidth}
            \centering\includegraphics[width=\linewidth]{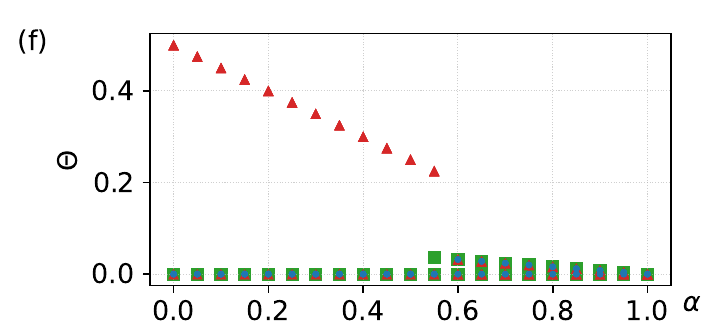}
            \phantomsubcaption\label{fig:toy2_theta}
        \end{subfigure}
        
    \toymethodlegend
    \end{minipage}%
    }
    \caption{Toy Example 2. Brute force results of the {\sf EF}, {\sf SF} and {\sf LF} methods (producing correspondingly the partitions $C_{\sf EF}$, $C_{\sf SF}$ and $C_{\sf LF}$) in terms of the functions $Q_{\sf EF}$, $Q_{\sf SF}$, $Q_{\sf LF}$, $\Delta(G,C;\alpha)$, $\Delta(G'^{\tau},C;\alpha)$ and $\Theta$ for $\alpha=(a,1-a)$ with $a\in\{0.00,0.05,\ldots,1.00\}$. As in Figure~\ref{brute_force_example_result_1}, the $Q_{\sf LF}$ axis is evaluated on the transient layers $G'^{\tau}$ and is therefore not directly comparable to the $Q_{\sf EF}$/$Q_{\sf SF}$ axes in the other objective panels.}
        \label{brute_force_example_result_2}
    \par\smallskip{\small\noindent\textbf{Alt text:} Six panels plot the three modularity objectives and the three correction or auxiliary quantities against the first-layer weight for Toy Example 2. Multiple LF branches reflect tied transient partitions; LF scores use the transient graph.\par}
    \end{figure}

    The toy examples demonstrate that, for some $G$ and $\alpha$, the exact cross-objective scores satisfy
    \begin{equation}
        \label{quality_hierarchy_more}
        \begin{split}
        &Q_{\sf EF}(G, C_{\sf EF}; \gamma, \alpha)>Q_{\sf EF}(G, C_{\sf SF}; \gamma, \alpha)>Q_{\sf EF}(G, C_{\sf LF}; \gamma, \alpha),\\
        &Q_{\sf SF}(G, C_{\sf SF}; \gamma, \alpha)>Q_{\sf SF}(G, C_{\sf EF}; \gamma, \alpha)>Q_{\sf SF}(G, C_{\sf LF}; \gamma, \alpha),\\
        &Q_{\sf LF}^{\tau}(G, C_{\sf LF}; \gamma, \alpha)>Q_{\sf LF}^{\tau}(G, C_{\sf EF}; \gamma, \alpha)> Q_{\sf LF}^{\tau}(G, C_{\sf SF}; \gamma, \alpha).
        \end{split}
    \end{equation}

    Here, each row uses one objective function to score the three partitions returned by EF, SF, and LF. These strict rankings hold for particular examples and weights, not for all networks and $\alpha$.

    For each grid point, we exhaustively enumerate all set partitions. If several partitions maximize EF or SF, we retain the first in enumeration order. For LF, we consider every combination of maximizing layer partitions; each combination defines a transient graph, for which we retain the first final LF maximizer in enumeration order. The figures show all these transient-graph choices, but not every tied final partition. We evaluate the retained partitions under each objective at $\alpha=(a,1-a)$ for $a\in\{0.00,0.05,\ldots,1.00\}$.

    Figure~\ref{brute-force-examples}a has a star in the lower layer and a denser upper layer, giving different community-wise degree sums across layers and illustrating the contribution of $\Delta$. In Figure~\ref{brute_force_example_result_1}, the EF- and SF-returned partitions have different cross-objective scores precisely at $a=0.70,0.75,0.85$, where their $\Delta$ values differ.

    At the tested weights $a\in\{0.00,0.05,\ldots,0.35\}\cup\{0.90,0.95,1.00\}$, the three methods return partitions with coincident scores under each objective. At other tested weights, at least one cross-objective score differs. In particular, the strict hierarchy (\ref{quality_hierarchy_more}) holds for $a=0.70$. At $a=0.75$, the LF partition scores above the SF partition under $Q_{\sf EF}$, so that hierarchy does not hold.

    Figure~\ref{brute-force-examples}b has two cliques in the upper layer and a structurally different lower layer, with similar degree distributions. Figure~\ref{brute_force_example_result_2} shows the corresponding results. 

    This example shows how tied layer-wise maximizers can produce different LF objectives and final partitions. If layer $l$ has $T_l$ transient maximizers, there are $\prod_{l=1}^L T_l$ combinations of transient partitions; different combinations need not yield distinct transient graphs or final solutions. Figure~\ref{fig:transient_diversity} shows all lower-layer transient partitions for the network in Figure~\ref{brute-force-examples}b; each attains the maximum layer modularity of $0$. Absorption produces different transient graphs, yielding multiple LF solutions under all objectives. Because evaluating $Q_{\sf LF}$ also requires a transient graph, every source method can have multiple $Q_{\sf LF}$ scores (Figure~\ref{brute_force_example_result_2}).

    At $a=0.55$, some LF branches coincide with the EF optimum, while others score lower under $Q_{\sf EF}$ and $Q_{\sf SF}$. Thus the strict hierarchy (\ref{quality_hierarchy_more}) does not hold for every transient choice. Coincident branches at other grid points likewise do not imply uniqueness. Since the transient partitions are all layer-wise modularity maximizers, the spread among their induced LF values cannot be ignored: for this network it is largest at $a=0$, where all weight sits on the lower layer (which has eight transient maximizers), and decreases monotonically to zero at $a=1$, where all weight sits on the upper layer (whose transient maximizer is unique). This monotone collapse is specific to the transient-multiplicity structure of this example rather than a general property of {\sf LF}.
    
    These exact toy experiments illustrate the main theoretical messages of Section~\ref{sec:synergy}. For each fixed partition, SF is affine and EF is concave in $\alpha$; their optimized profiles may switch between partitions, and EF may prefer nontrivial fusion. The LF results depend strongly on the multiplicity of transient partitions, which is precisely the construction-dependence emphasized above.

    \begin{figure}[t]
        \centering
        \includegraphics[width=0.765\linewidth]{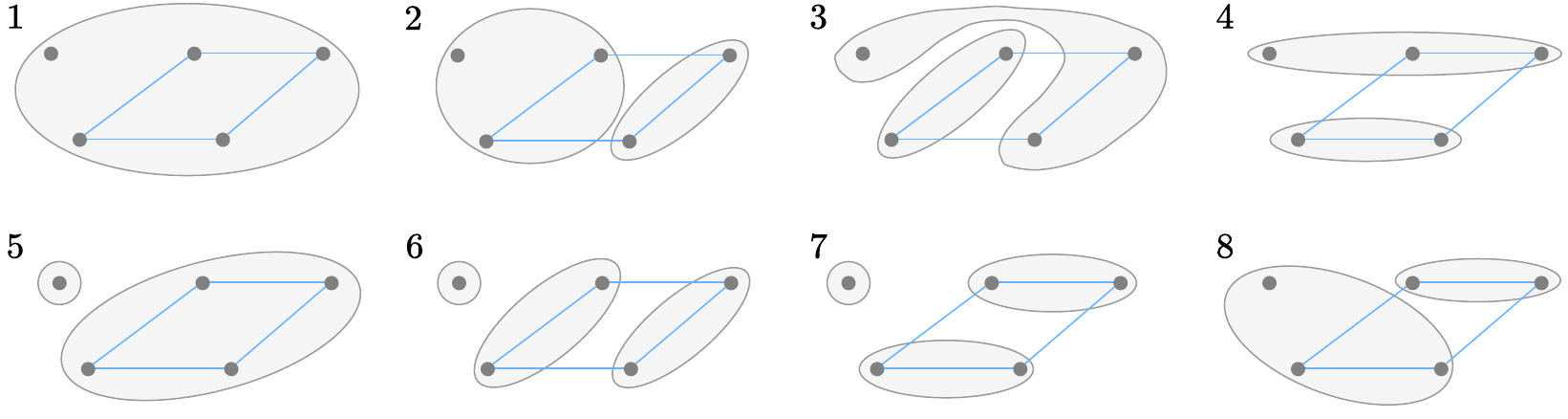}
        \\
        \includegraphics[width=0.05\linewidth]{figures/blue_line.pdf}
        link with weight $\frac{1}{4}$
        \caption{Transient partition diversity for the lower layer of the network in Figure~\ref{brute-force-examples}b. The maximum modularity value in this case is $0$.}
        \label{fig:transient_diversity}
    \par\smallskip{\small\noindent\textbf{Alt text:} Alternative partitions of the lower layer in Toy Example 2 attain the same maximum modularity of zero and yield different transient constructions.\par}
    \end{figure}

\subsection{Heuristic modularity optimization}
\label{exp_real}
    We compare Louvain \cite{Blondel2008,Yang2016LFR} and Leiden \cite{Traag2019} on real-world networks. Both algorithms are heuristics and need not attain a global maximum of the tested objectives.

    The two-layer grid varies $\alpha_1\in[0,1]$ with step $0.05$ and $\alpha_2=1-\alpha_1$ (the plots abbreviate $\alpha_1$ as $\alpha$); three-layer experiments use the corresponding simplex grid. Each grid point has 30 restarts with seeds $\{0,\ldots,29\}$, and Leiden uses two optimization iterations per restart. Two-layer figures show median trajectories with ribbons spanning the first to third quartiles (the interquartile range, IQR); three-layer figures show median contour maps. We also report exploratory seed-resampling intervals for winning medians and winner gaps, runtimes and community counts.

    \noindent\textbf{Reporting protocol.} For each dataset, optimizer, source method, target objective, and grid point, we summarize the 30 restart scores by their median and IQR. Restart dispersion is measured by mean pairwise normalized variation of information,
    \begin{equation}
        \operatorname{NVI}(C,D)=\frac{H(C)+H(D)-2I(C;D)}{\log |\mathcal V|},
    \end{equation}
    where $C$ and $D$ are partitions of the same node set, $H$ is Shannon entropy of their community-size proportions, and $I(C;D)$ is mutual information between their community assignments, using natural logarithms and the joint proportions $|C_k\cap D_j|/n$, with zero-probability terms contributing zero. The denominator is positive because $n\ge2$. For a fixed dataset, optimizer, and target objective, a configuration is a source method together with a grid point $\alpha$; the winner and runner-up are ranked by their median scores, with the best restart score breaking median ties. The winner interval uses the 2.5th and 97.5th percentiles of medians from 2000 bootstrap resamples of its 30 restart scores. The winner-gap interval uses the same percentiles of the difference between separately resampled winner and runner-up medians. The two configurations are selected on these same runs and resampled independently rather than as seed-paired observations. These are conditional descriptive summaries of seed variability, not confidence intervals for selection or population effects; they do not quantify uncertainty over datasets, optimizers, or grid choices. The runner-up may use the same $\alpha$ with a different source method, so this interval does not establish that the winning layer weights are distinct from other grid points. All reported ``best'' values are grid-restricted best medians under the tested optimizer settings, not exact optima.

    The code repository and artifact release status are described in the Data and code availability statement.

\subsubsection{Real-world two-layer networks and modularity optimization by Louvain and Leiden}
    \begin{table}[b]
        \caption{Statistics of the analyzed networks. Link counts follow layer order (layer 1 / layer 2 / layer 3, where applicable); node-attributed datasets are shown after their two-layer transformation.}
        \label{tab1}
        \begin{center}
        \footnotesize
        \begin{adjustbox}{max width=\linewidth}
        \begin{tabular}{lcccccc}
        Dataset  & Network & Nodes & Links & Attr.  & Attr.  & Number \\
        & type &   &  & dim. & format & of layers \\
        \hline
        Political Blogs & Node-attr. & 1224 & 16715 / 15140 & 1 & Binary & 2 \\
Cora & Node-attr. & 2708 & 5278 / 49314 & 1440 & Binary \& Categorical & 2 \\
Major Airlines & Multiplex & 128 & 69 / 244 / 66 & --- & --- & 3 \\
London Stations & Multiplex & 369 & 312 / 83 / 46 & --- & --- & 3 \\
Immune Trials (2L) & Multiplex & 1325 & 15909 / 24777 & 3670 & Real-valued & 2 \\
Immune Trials (3L) & Multiplex & 1325 & 15909 / 30537 / 14393 & --- & --- & 3 \\

        \end{tabular}
        \end{adjustbox}
        \end{center}
    \end{table}

    We use undirected, cleaned, shared-node versions of the following public datasets. Table~\ref{tab1} reports the processed counts used in the experiments after duplicate removal, self-loop handling, and layer synchronization. The analyzed common node set consists of vertices incident to at least one retained layer edge. A node isolated in one layer but present in another is retained in every layer.
    \begin{itemize}
        \item \href{https://networkdata.ics.uci.edu/data/polblogs/}{\textit{Political Blogs}}~\cite{PoliticalBlogsData} is a network of 1,224 webblogs (nodes) on US politics. In the processed undirected artifact used here, the structural layer contains 16,715 cleaned hyperlinks. It provides a relatively simple node-attributed sanity check after the structure/attribute transformation to a two-layer multiplex, rather than a demanding fusion benchmark. Each node has a single binary attribute describing its political leaning (liberal or conservative). The attribute layer $g_2$ retains exactly those structural links whose endpoints share the leaning label, normalized to a total weight of 1; see~(\ref{polblogs-attribute-layer}). By construction, it contains no cross-leaning edges, so its signal is fully aligned with the leaning labels.
        \item \href{https://linqs-data.soe.ucsc.edu/public/lbc/cora.tgz}{\textit{Cora}}~\cite{CoraData} is a network of machine learning papers with 2,708 papers (nodes). In the processed undirected artifact used here, the structural layer contains 5,278 cleaned citation links. It provides a richer node-attributed case because the attribute layer is derived from high-dimensional textual and categorical data. Each node is attributed with a 1,433-dimensional binary vector indicating the absence/presence of words from the dictionary of words collected from the corpus of papers, and one categorical attribute, for which we have done one-hot encoding and received 7 additional attributes for each vertex. The Cora attribute layer uses the sparse cosine nearest-neighbor construction defined below.
        \item {\it Immune Trials} is a two-layer multiplex network over registered clinical trials. Unlike Political Blogs and Cora, it is \emph{not} obtained by the node-attributed transformation of Section~\ref{sec:multiplex_networks}: registered trials carry no natural link structure, so neither layer is structural in the hyperlink or citation sense. Both layers are attribute-derived similarity graphs on a shared node set, which is all our analysis requires: the results assume only normalized non-negative layers on a common node set, never that one layer is the structure of an underlying graph. The selection rule retains interventional Phase~3 trials carrying at least one of the Medical Subject Headings (MeSH) condition terms \emph{Arthritis, Rheumatoid}, \emph{Asthma}, \emph{Colitis, Ulcerative}, \emph{Crohn Disease} or \emph{Psoriasis}, which yields 1{,}578 trials; the 1{,}325 of these that carry condition, intervention and sponsor annotations together with at least 200 characters of free text form the node set. We take $g_1$ to be the top-25 cosine nearest-neighbor graph over binary shared-intervention MeSH incidence (15{,}909 edges) and $g_2$ to be the top-25 cosine nearest-neighbor graph over a term frequency--inverse document frequency (TF-IDF) encoding of the brief summary and eligibility criteria text (3{,}670 terms, 24{,}777 edges). It provides a two-layer case in which the layers are only weakly aligned: they share $13.5\%$ of their edges by Jaccard overlap, whereas for Political Blogs the attribute layer is by construction a subgraph retaining about $91\%$ of the structural edges.
    \end{itemize}
    
    For Political Blogs and Cora, the transformation in Section~\ref{sec:multiplex_networks} gives $G=(g_1,g_2)$, where $g_1$ is structural and $g_2$ is attribute-based. For nonzero, non-negative attribute vectors and $i\ne j$, the unnormalized cosine similarity is
    \begin{equation}
        \label{mu-nu}
        s_{ij}=\frac{A(v_i)\cdot A(v_j)}{\|A(v_i)\|_2\|A(v_j)\|_2}.
    \end{equation}
    Here $s_{ij}\in[0,1]$ by non-negativity and the Cauchy--Schwarz inequality. If either vector is zero, we set $s_{ij}=0$ by convention, as in the implementation; we also set $s_{ii}=0$ to exclude self-loops. These similarities are sparsified and normalized to obtain layer weights.

    For Political Blogs, the attribute layer is the same-label subgraph of the structural layer, induced by the binary leaning attribute:
    \begin{equation}
        \label{polblogs-attribute-layer}
        \begin{split}
        w_2(e_{ij}) &= \frac{\mathbf{1}\{w_1(e_{ij})>0,\ \ell(v_i)=\ell(v_j)\}}{Z},\\
        Z &= \sum_{i<j}\mathbf{1}\{w_1(e_{ij})>0,\ \ell(v_i)=\ell(v_j)\},
        \end{split}
    \end{equation}
    where $\ell(v_i)\in\{\text{liberal},\text{conservative}\}$ and $w_1$ denotes the structural-layer weights. By construction, every edge of $g_2$ is a structural hyperlink between two same-leaning blogs and $\sum_{e_{ij}\in\mathcal{E}} w_2(e_{ij})=1$. This is the convention used in the public artifact, where the cleaned dataset yields $|\mathcal{V}|=1{,}224$ and an attribute layer retaining $15{,}140$ of the $16{,}715$ structural edges (about $91\%$ of hyperlinks connect same-leaning blogs). The construction reflects that the leaning attribute is single and binary: the cosine similarity of one-hot leaning encodings is exactly the same-label indicator, and the artifact evaluates this indicator on the structural edge set, so~(\ref{polblogs-attribute-layer}) is the direct description of the layer actually used.

    For Cora, the canonical processed layer is
    \begin{equation}
        W_A=\mathrm{Normalize}\bigl(\mathrm{Sym}_{\mathrm{union}}(\mathrm{kNN}_{25}((s_{ij})_{i,j=1}^n))\bigr),
    \end{equation}
    where $\mathrm{kNN}_{25}$ selects up to 25 neighbors with the largest positive cosine similarities for each node, excluding the node itself. The union symmetrization $\mathrm{Sym}_{\mathrm{union}}$ retains an undirected edge if either endpoint selects the other, with its cosine weight. Finally, $\mathrm{Normalize}$ divides each retained edge weight by the sum of all retained edge weights, as in (\ref{eq:weights_normalization}).

    For Immune Trials, both layers use the same rule as the Cora attribute layer, differing only in the feature matrix that enters the cosine similarity: binary shared-intervention MeSH incidence for $g_1$ and the TF-IDF text encoding for $g_2$. One caveat is specific to the binary-incidence layers. Binary features produce many exactly tied cosine similarities, so the choice of which equally similar neighbors occupy the last of the $k=25$ slots is not determined by the similarity values alone: a tie straddles the $k$-th boundary for $714$ of $1{,}325$ nodes in the intervention layer and for $1{,}320$ of $1{,}325$ in the condition layer used below, against only $8$ of $1{,}325$ in the continuous TF-IDF layer. The layers materialized in the artifact are therefore the canonical ones, and edge counts recomputed by a different implementation of the same construction may differ by a few tenths of a percent without changing component structure.

    For each two-layer comparison, all three fusion methods use the same optimizer family. EF and LF follow Sections~\ref{EF_description} and \ref{LF_description}; LF uses that optimizer for both transient and final partitions. For SF, the Louvain runs use our multiplex extension described next, and the Leiden runs use the multiplex extension of \cite{Traag2019}.

Our Louvain extension retains the two phases of \cite{Blondel2008}. First, it makes node moves that improve the convex combination of layer modularities in (\ref{eq:SF_objective_function_intro}), using $\alpha$ as an input. Second, it aggregates the communities into nodes in each of the $L$ layers. These phases repeat until no improving move remains.

\begin{figure}[!htbp]
    \centering
    \scalebox{0.8}{%
    \begin{minipage}{\textwidth}
    \centering
    \begin{subfigure}[b]{0.49\textwidth}
        \centering\includegraphics[width=\linewidth]{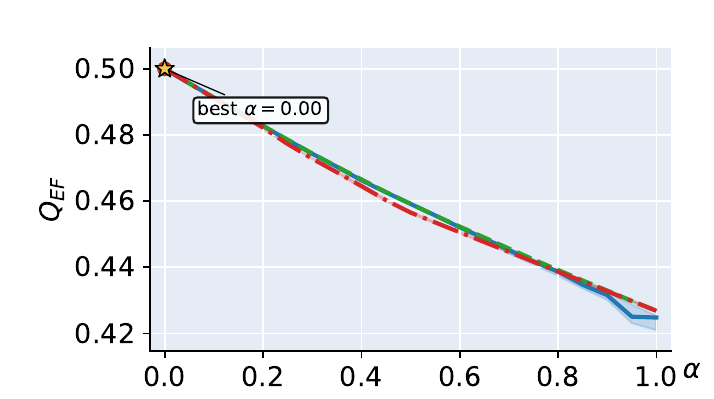}
    \end{subfigure}
    \hfill
    \begin{subfigure}[b]{0.49\textwidth}
        \centering\includegraphics[width=\linewidth]{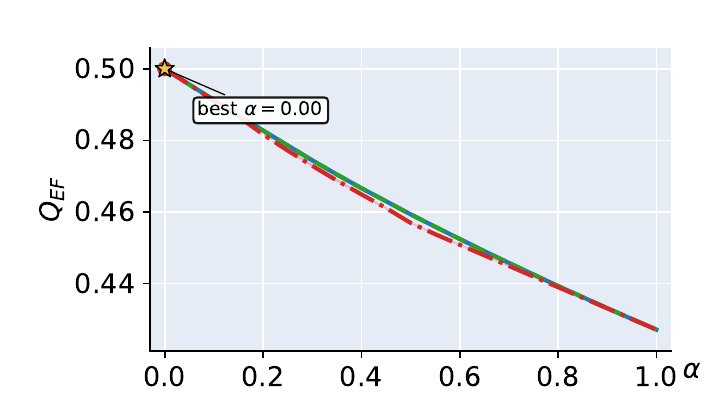}
    \end{subfigure}
    \hfill
    \begin{subfigure}[b]{0.49\textwidth}
        \centering\includegraphics[width=\linewidth]{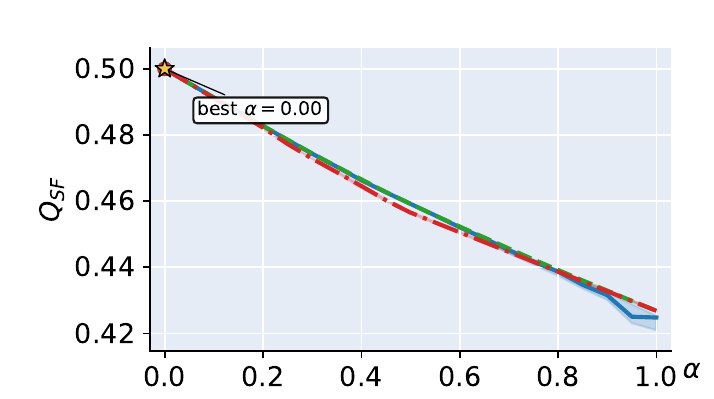}
    \end{subfigure}
    \hfill
    \begin{subfigure}[b]{0.49\textwidth}
        \centering\includegraphics[width=\linewidth]{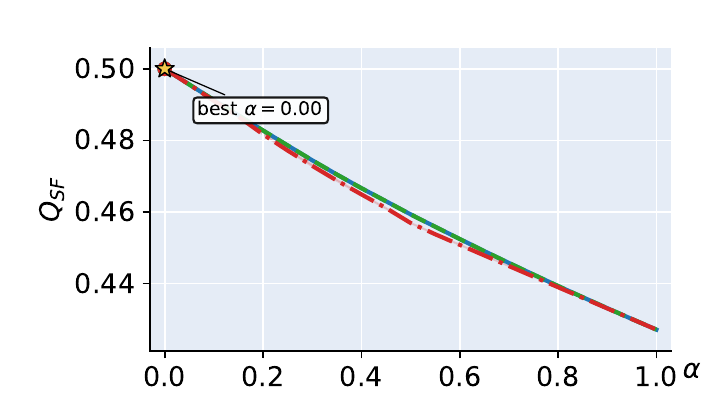}
    \end{subfigure}
    \hfill
    \begin{subfigure}[b]{0.49\textwidth}
        \centering\includegraphics[width=\linewidth]{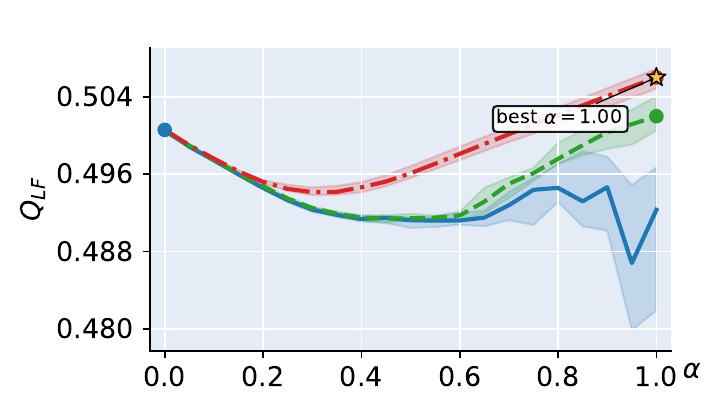}
    \end{subfigure}
    \hfill
    \begin{subfigure}[b]{0.49\textwidth}
        \centering\includegraphics[width=\linewidth]{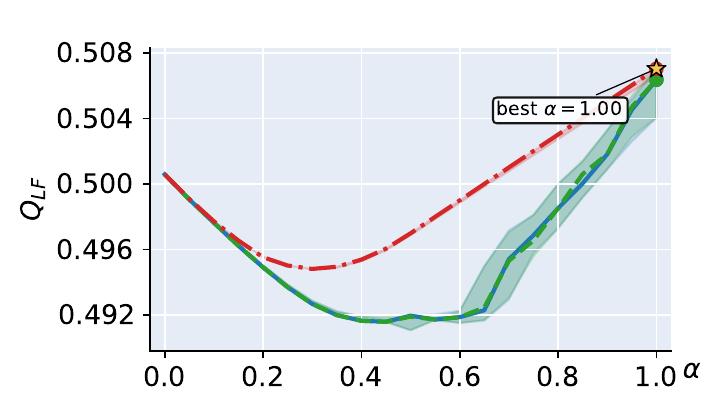}
    \end{subfigure}
    
    \partitionlegend

    \end{minipage}%
    }
    \caption{Political Blogs graph modularity optimization performed by Louvain-based (left) and Leiden (right) algorithms. Target functions. The $Q_{\sf LF}^{\tau}$ row is evaluated on the transient layers $G'^{\tau}$ and is not on the same raw scale as $Q_{\sf EF}$/$Q_{\sf SF}$; compare $Q_{\sf LF}$ values across $\alpha$ within each panel, not against the $Q_{\sf EF}$/$Q_{\sf SF}$ panels.}
    \label{fig:PolBlogs_target_functions}
    \par\smallskip{\small\noindent\textbf{Alt text:} Six panels compare Political Blogs objective profiles across the first-layer weight: Louvain on the left and Leiden on the right, with EF, SF and LF objectives in successive rows. Line styles distinguish the source partitions. LF scores use transient layers and a separate scale.\par}
\end{figure}

\begin{figure}[!htbp]
    \centering
    \scalebox{0.8}{%
    \begin{minipage}{\textwidth}
    \centering
    \begin{subfigure}[b]{0.49\textwidth}
        \centering\includegraphics[width=\linewidth]{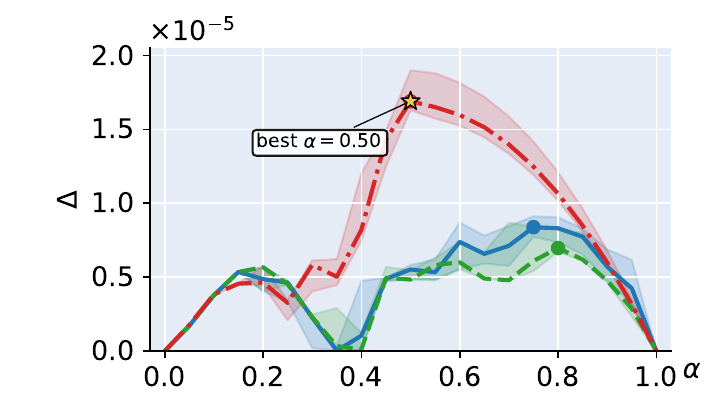}
    \end{subfigure}
    \hfill
    \begin{subfigure}[b]{0.49\textwidth}
        \centering\includegraphics[width=\linewidth]{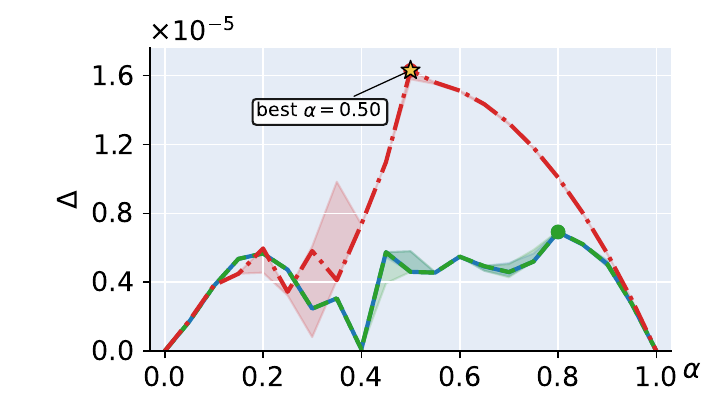}
    \end{subfigure}
    \hfill
    \begin{subfigure}[b]{0.49\textwidth}
        \centering\includegraphics[width=\linewidth]{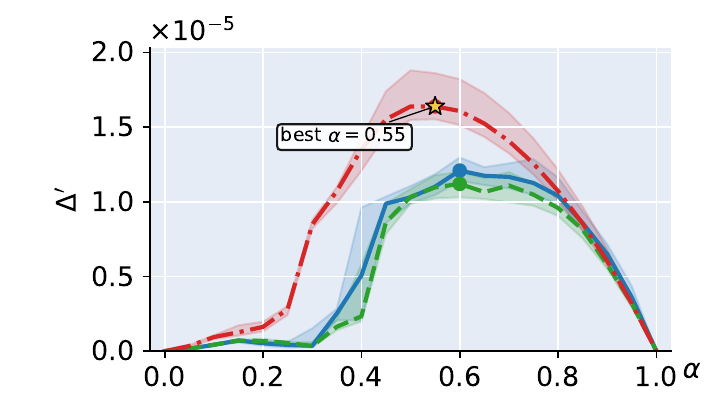}
    \end{subfigure}
    \hfill
    \begin{subfigure}[b]{0.49\textwidth}
        \centering\includegraphics[width=\linewidth]{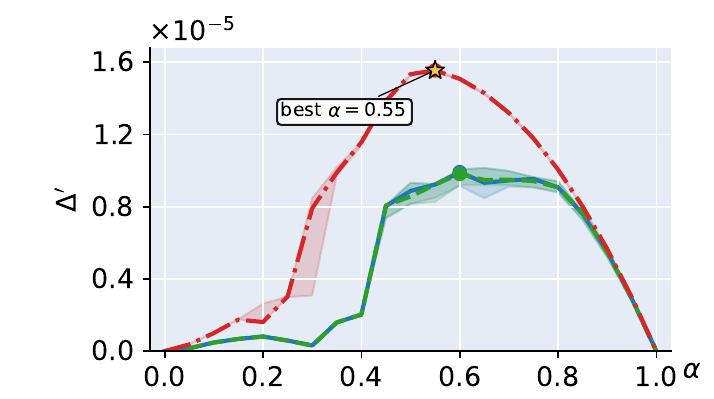}
    \end{subfigure}
    \hfill
    \begin{subfigure}[b]{0.49\textwidth}
        \centering\includegraphics[width=\linewidth]{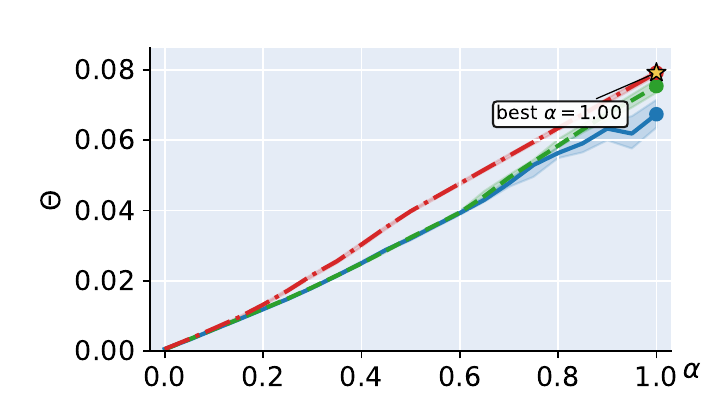}
    \end{subfigure}
    \hfill
    \begin{subfigure}[b]{0.49\textwidth}
        \centering\includegraphics[width=\linewidth]{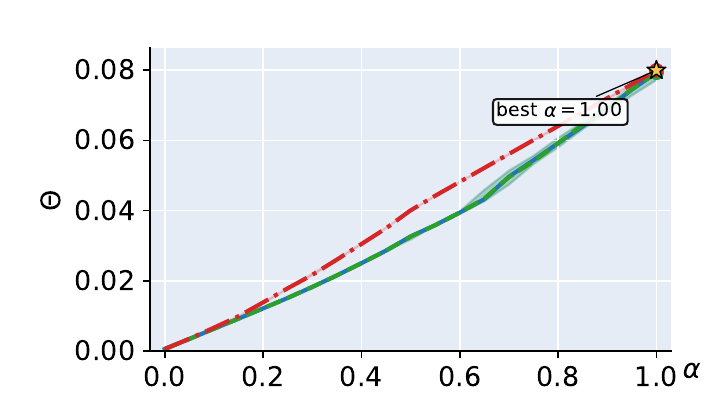}
    \end{subfigure}
    
    \partitionlegend

    \end{minipage}%
    }
    \caption{Political Blogs graph modularity optimization performed by Louvain-based (left) and Leiden-based (right) algorithms. Correction terms and auxiliary quantities.}
    \label{fig:PolBlogs_particles}
    \par\smallskip{\small\noindent\textbf{Alt text:} Six panels compare Political Blogs correction and auxiliary profiles across the first-layer weight: Louvain on the left and Leiden on the right. Successive rows show Delta on the original graph, Delta on the transient graph, and Theta; line styles distinguish the source partitions.\par}
\end{figure}

\begin{figure}[!htbp]
    \centering
    \scalebox{0.8}{%
    \begin{minipage}{\textwidth}
    \centering
    \begin{subfigure}[b]{0.49\textwidth}
        \centering\includegraphics[width=\linewidth]{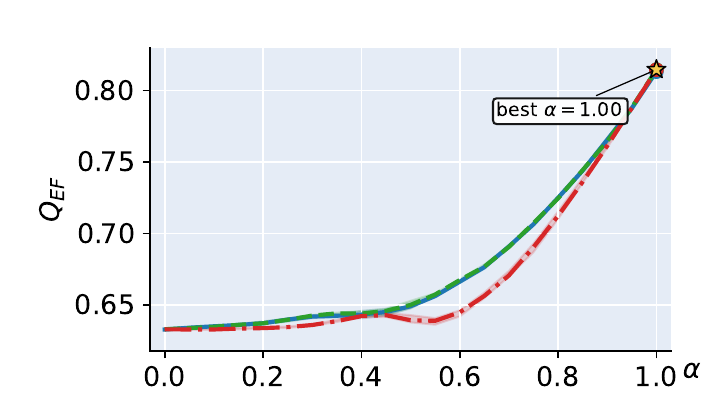}
    \end{subfigure}
    \hfill
    \begin{subfigure}[b]{0.49\textwidth}
        \centering\includegraphics[width=\linewidth]{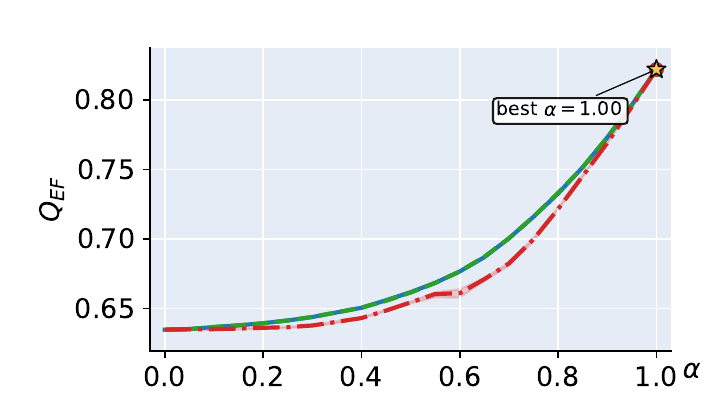}
    \end{subfigure}
    \hfill
    \begin{subfigure}[b]{0.49\textwidth}
        \centering\includegraphics[width=\linewidth]{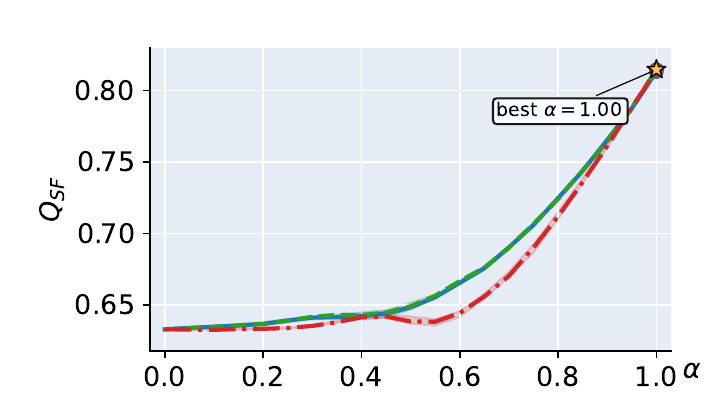}
    \end{subfigure}
    \hfill
    \begin{subfigure}[b]{0.49\textwidth}
        \centering\includegraphics[width=\linewidth]{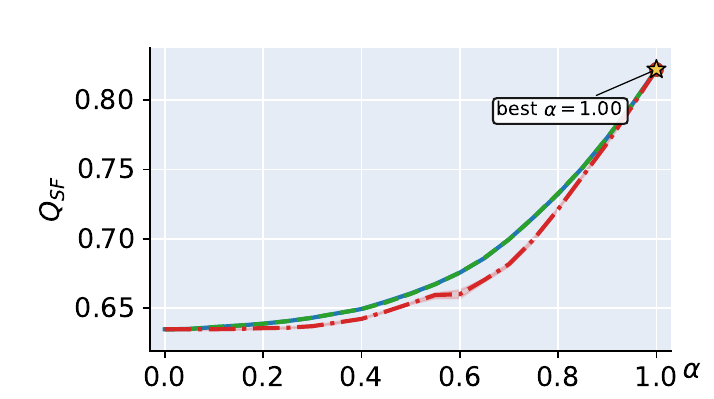}
    \end{subfigure}
    \hfill
    \begin{subfigure}[b]{0.49\textwidth}
        \centering\includegraphics[width=\linewidth]{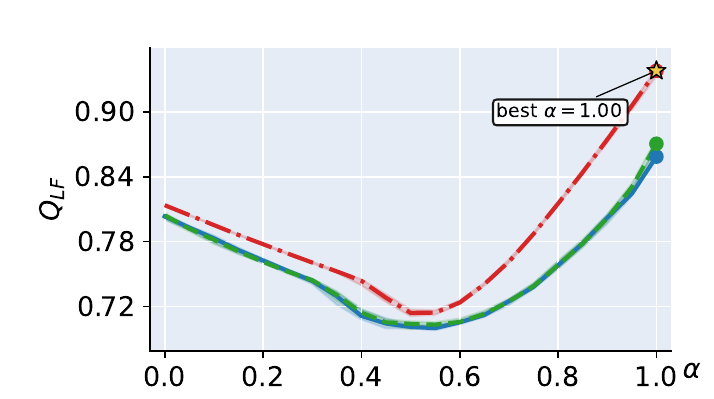}
    \end{subfigure}
    \hfill
    \begin{subfigure}[b]{0.49\textwidth}
        \centering\includegraphics[width=\linewidth]{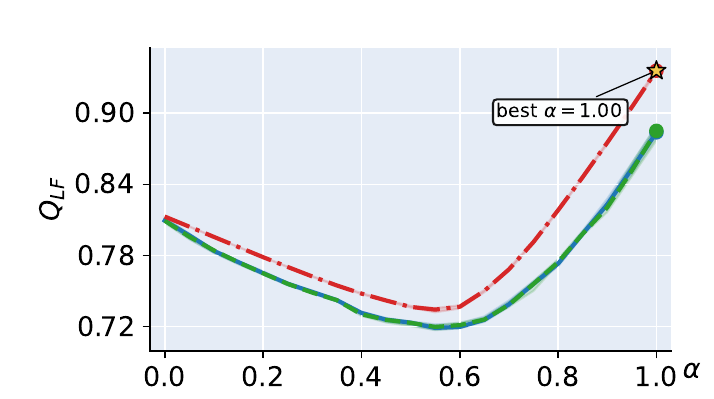}
    \end{subfigure}
    
    \partitionlegend

    \end{minipage}%
    }
    \caption{Cora graph modularity optimization performed by Louvain-based (left) and Leiden-based (right) algorithms. Target functions. The $Q_{\sf LF}^{\tau}$ row is evaluated on the transient layers $G'^{\tau}$ and is not on the same raw scale as $Q_{\sf EF}$/$Q_{\sf SF}$.}
    \label{fig:Cora_target_functions}
    \par\smallskip{\small\noindent\textbf{Alt text:} Six panels compare Cora objective profiles across the first-layer weight: Louvain on the left and Leiden on the right, with EF, SF and LF objectives in successive rows. Line styles distinguish the source partitions. LF scores use transient layers and a separate scale.\par}
\end{figure}

\begin{figure}[!htbp]
    \centering
    \scalebox{0.8}{%
    \begin{minipage}{\textwidth}
    \centering
    \begin{subfigure}[b]{0.49\textwidth}
        \centering\includegraphics[width=\linewidth]{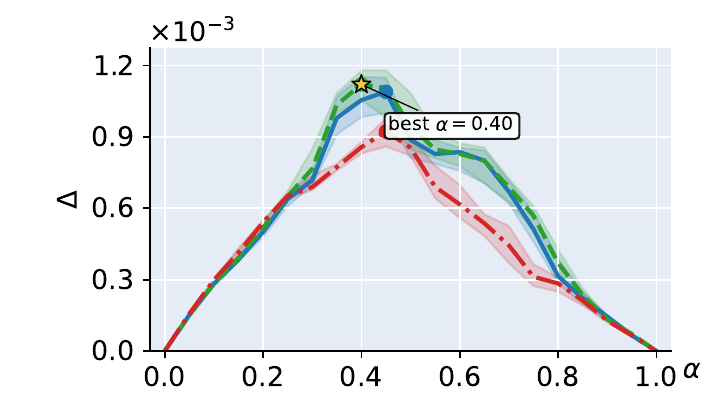}
    \end{subfigure}
    \hfill
    \begin{subfigure}[b]{0.49\textwidth}
        \centering\includegraphics[width=\linewidth]{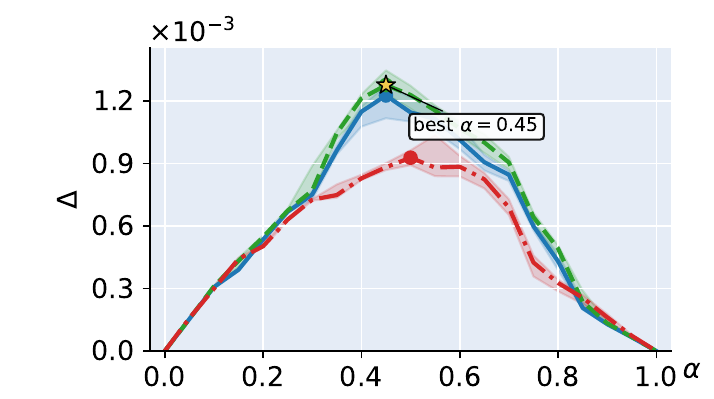}
    \end{subfigure}
    \hfill
    \begin{subfigure}[b]{0.49\textwidth}
        \centering\includegraphics[width=\linewidth]{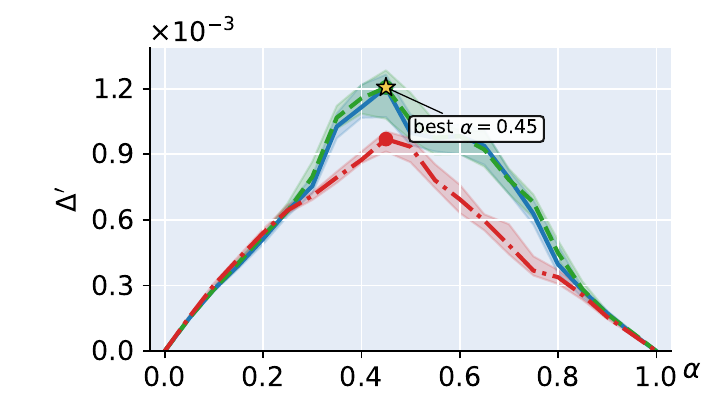}
    \end{subfigure}
    \hfill
    \begin{subfigure}[b]{0.49\textwidth}
        \centering\includegraphics[width=\linewidth]{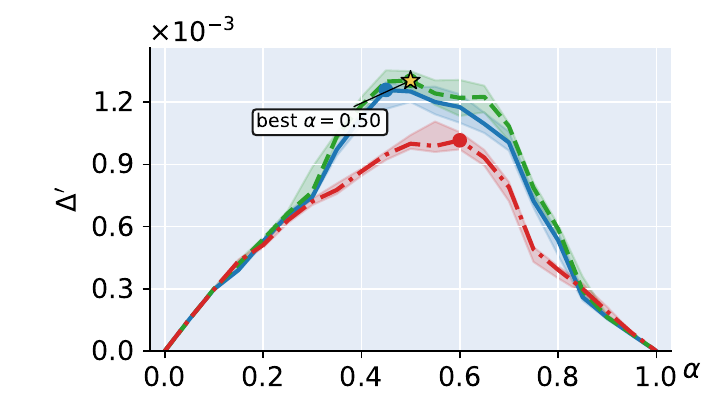}
    \end{subfigure}
    \hfill
    \begin{subfigure}[b]{0.49\textwidth}
        \centering\includegraphics[width=\linewidth]{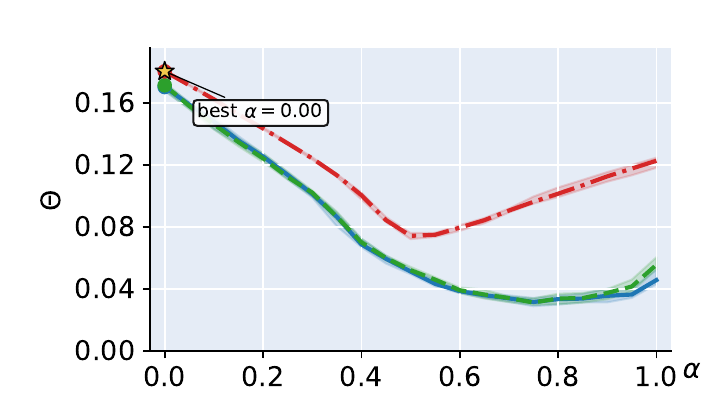}
    \end{subfigure}
    \hfill
    \begin{subfigure}[b]{0.49\textwidth}
        \centering\includegraphics[width=\linewidth]{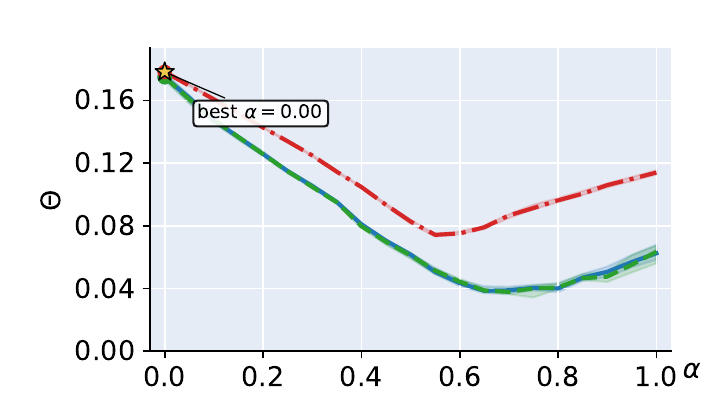}
    \end{subfigure}
    
    \partitionlegend

    \end{minipage}%
    }
    \caption{Cora graph modularity optimization performed by Louvain-based (left) and Leiden-based (right) algorithms. Correction terms and auxiliary quantities.}
    \label{fig:Cora_particles}
    \par\smallskip{\small\noindent\textbf{Alt text:} Six panels compare Cora correction and auxiliary profiles across the first-layer weight: Louvain on the left and Leiden on the right. Successive rows show Delta on the original graph, Delta on the transient graph, and Theta; line styles distinguish the source partitions.\par}
\end{figure}

\begin{figure}[!htbp]
    \centering
    \scalebox{0.8}{%
    \begin{minipage}{\textwidth}
    \centering
    \begin{subfigure}[b]{0.49\textwidth}
        \centering\includegraphics[width=\linewidth]{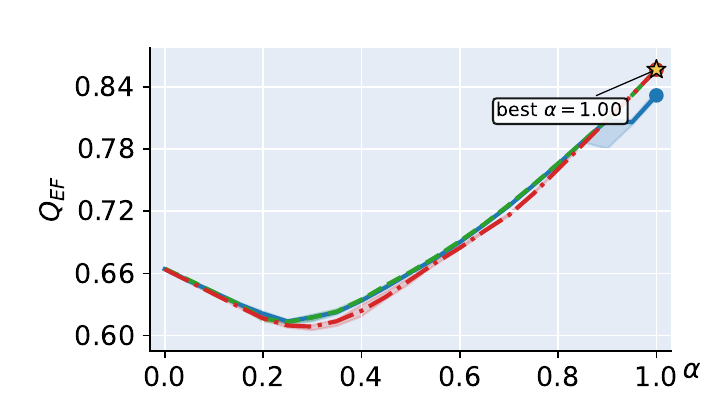}
    \end{subfigure}
    \hfill
    \begin{subfigure}[b]{0.49\textwidth}
        \centering\includegraphics[width=\linewidth]{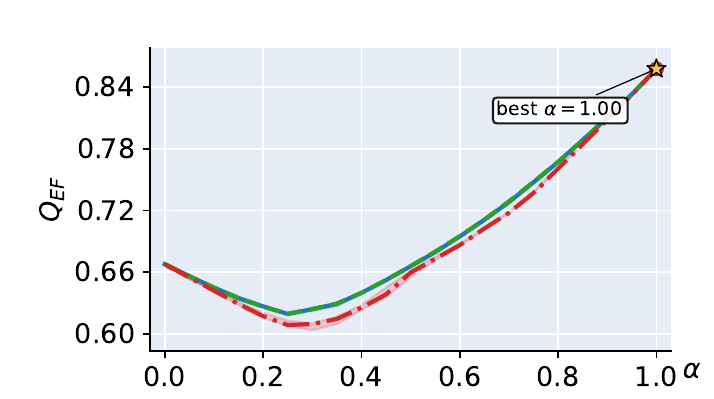}
    \end{subfigure}
    \hfill
    \begin{subfigure}[b]{0.49\textwidth}
        \centering\includegraphics[width=\linewidth]{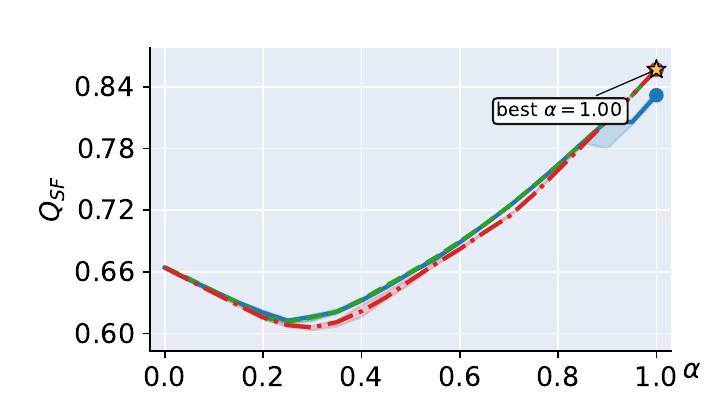}
    \end{subfigure}
    \hfill
    \begin{subfigure}[b]{0.49\textwidth}
        \centering\includegraphics[width=\linewidth]{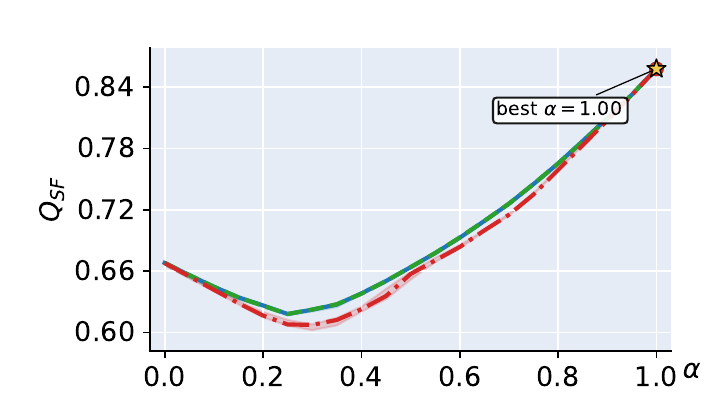}
    \end{subfigure}
    \hfill
    \begin{subfigure}[b]{0.49\textwidth}
        \centering\includegraphics[width=\linewidth]{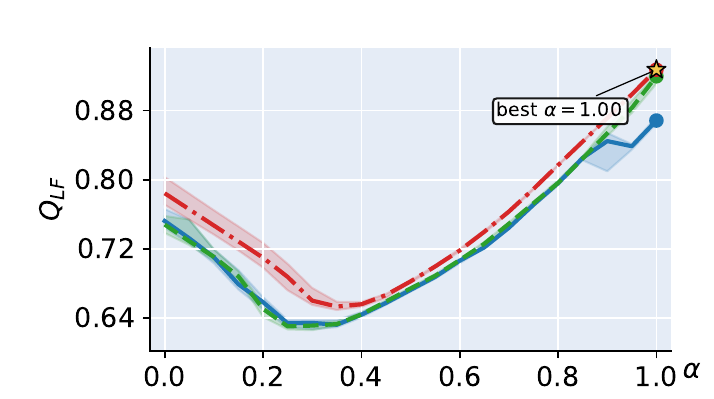}
    \end{subfigure}
    \hfill
    \begin{subfigure}[b]{0.49\textwidth}
        \centering\includegraphics[width=\linewidth]{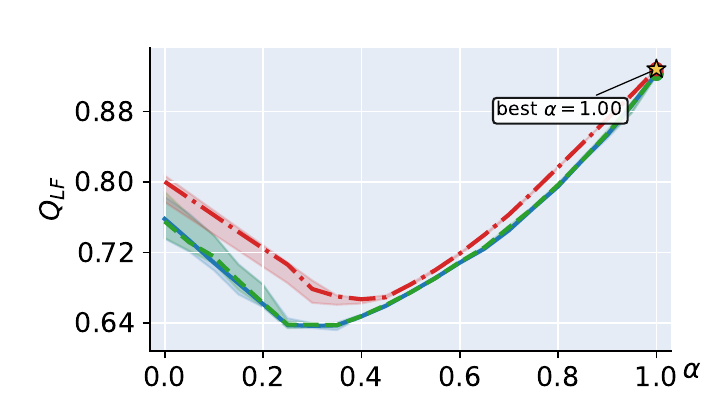}
    \end{subfigure}
    
    \partitionlegend

    \end{minipage}%
    }
    \caption{Immune Trials two-layer multiplex modularity optimization performed by Louvain-based (left) and Leiden-based (right) algorithms. Target functions. The $Q_{\sf LF}^{\tau}$ row is evaluated on the transient layers $G'^{\tau}$ and is not on the same raw scale as $Q_{\sf EF}$/$Q_{\sf SF}$.}
    \label{fig:ctgov_attr_target_functions}
    \par\smallskip{\small\noindent\textbf{Alt text:} Six panels compare two-layer Immune Trials objective profiles across the first-layer weight: Louvain on the left and Leiden on the right, with EF, SF and LF objectives in successive rows. Line styles distinguish the source partitions. LF scores use transient layers and a separate scale.\par}
\end{figure}

\begin{figure}[!htbp]
    \centering
    \scalebox{0.8}{%
    \begin{minipage}{\textwidth}
    \centering
    \begin{subfigure}[b]{0.49\textwidth}
        \centering\includegraphics[width=\linewidth]{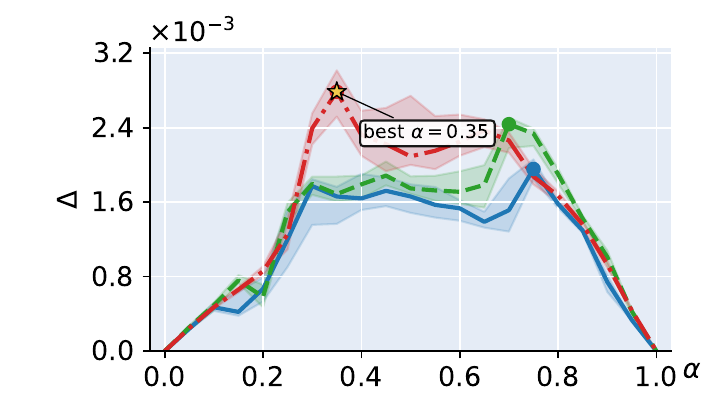}
    \end{subfigure}
    \hfill
    \begin{subfigure}[b]{0.49\textwidth}
        \centering\includegraphics[width=\linewidth]{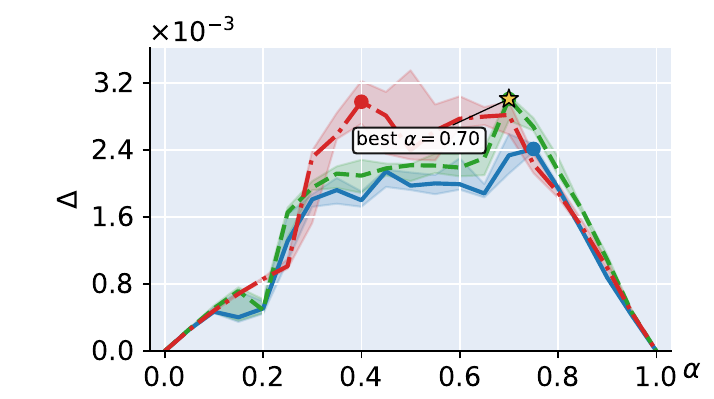}
    \end{subfigure}
    \hfill
    \begin{subfigure}[b]{0.49\textwidth}
        \centering\includegraphics[width=\linewidth]{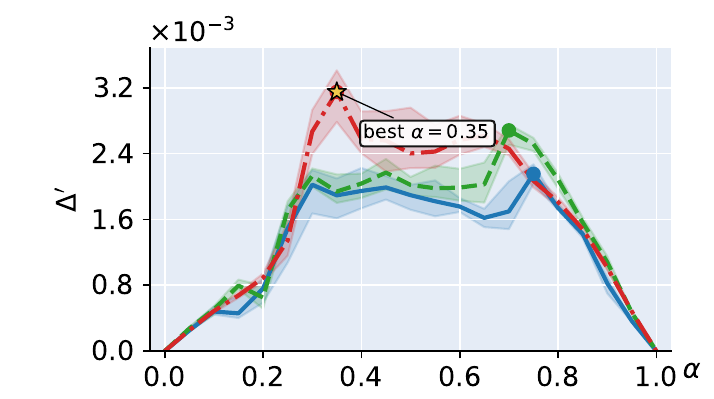}
    \end{subfigure}
    \hfill
    \begin{subfigure}[b]{0.49\textwidth}
        \centering\includegraphics[width=\linewidth]{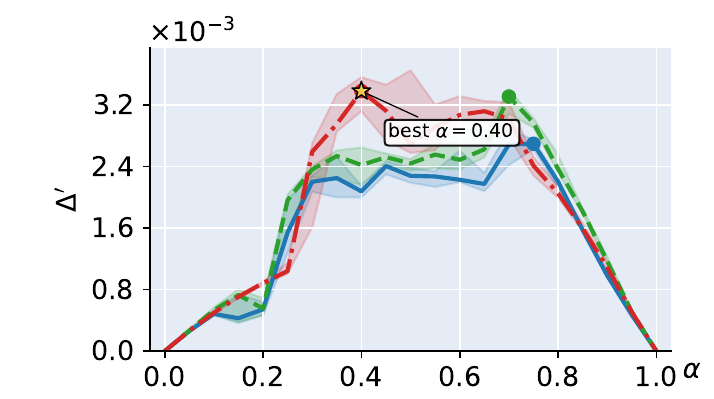}
    \end{subfigure}
    \hfill
    \begin{subfigure}[b]{0.49\textwidth}
        \centering\includegraphics[width=\linewidth]{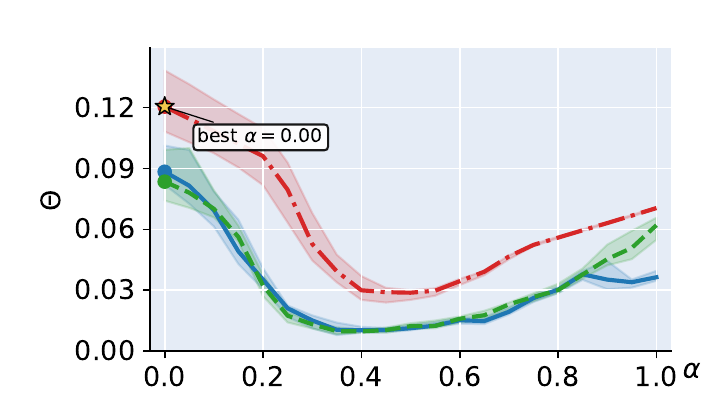}
    \end{subfigure}
    \hfill
    \begin{subfigure}[b]{0.49\textwidth}
        \centering\includegraphics[width=\linewidth]{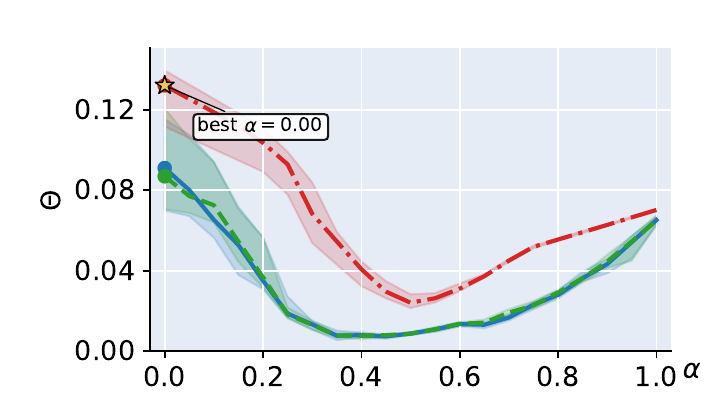}
    \end{subfigure}
    
    \partitionlegend

    \end{minipage}%
    }
    \caption{Immune Trials two-layer multiplex modularity optimization performed by Louvain-based (left) and Leiden-based (right) algorithms. Correction terms and auxiliary quantities.}
    \label{fig:ctgov_attr_particles}
    \par\smallskip{\small\noindent\textbf{Alt text:} Six panels compare two-layer Immune Trials correction and auxiliary profiles across the first-layer weight: Louvain on the left and Leiden on the right. Successive rows show Delta on the original graph, Delta on the transient graph, and Theta; line styles distinguish the source partitions.\par}
\end{figure}

\subsubsection{Real-world multiplex networks and modularity optimization by Leiden}

The implementations of the {\sf EF} and {\sf LF} methods follow the schemes described in Sections~\ref{EF_description} and \ref{LF_description}, respectively, where the modularity optimization is performed by Leiden \cite{Traag2019} (both for finding transient and final partitions in the {\sf LF} method). The {\sf SF} method here is the extension of Leiden for multiplex networks proposed by the authors of \cite{Traag2019} with the added control of layer contributions by the set of fusion parameters $\alpha$. The three-layer experiments use Leiden for all three methods. For LF cross-objective evaluation, the score $Q_{\sf LF}$ is computed on the transient graph $G'$ produced for the same dataset, optimizer, and seed and reused across the $\alpha$ grid; hence LF scores are conditional on that transient construction rather than being a single seed-independent deterministic scale across all runs.

For the three-layer multiplex experiments, each layer represents a different relation or attribute-derived similarity on the same node set.

We will consider the following publicly available datasets:
\begin{itemize}
    \item \href{https://manliodedomenico.com/data.php}{\textit{London Stations Multiplex Network}} \cite{Domenico8351,LondonMultiplexData}. It was collected in 2013 from the official website of \href{https://www.tfl.gov.uk/}{Transport for London} and manually cross-checked. Nodes are train stations in London and edges encode existing routes between stations. Underground, Overground and Docklands Light Railway (DLR) stations are considered. The analyzed multiplex contains 369 stations and three layers corresponding to: (a) the aggregation to a single weighted graph of the networks of stations corresponding to each underground line (e.g., District, Circle, etc.); (b) the network of stations connected by Overground; (c) the network of stations connected by DLR. The number of edges on layer (a) is 312; on layer (b) -- 83; on layer (c) -- 46.
\item \href{http://complex.unizar.es/~atnmultiplex/}{\textit{Major Airlines Multiplex Network}} \cite{airplanes_dataset,EuropeanAirlinesData}. Each layer naturally aligns with a distinct airline operator, which makes the dataset a clear test case for layer-specific community structure. The analyzed multiplex network consists of 3 layers each one corresponding to one of the three biggest airlines operating in Europe: Air France, Lufthansa, British Airways. The data are taken from the complete list of airlines operating Instrumental Flight Rules (IFR) flights between European airports on a certain day obtained from EUROCON-TROL and the Complex World Network in the context of the SESAR Work Package E42. The source catalogue lists 450 airports; the analysis uses the 128 airports incident to at least one edge in the three selected-airline layers. The remaining 322 catalogue airports are absent from these layers and excluded from the analyzed node set. The number of edges is 69 for the Air France layer, 244 for Lufthansa, and 66 for British Airways.
\item {\it Immune Trials Multiplex Network} (same source, snapshot and license terms as the two-layer Immune Trials network above) is built over exactly the same 1{,}325-node trial set as that network, but modeled as three layers, each recording a different kind of shared registry annotation between the same trials: (a) shared intervention MeSH terms; (b) shared condition MeSH terms; (c) a shared sponsor or collaborator. Each layer is the same top-25 cosine nearest-neighbor construction applied to the corresponding binary incidence matrix. The number of edges on layer (a) is 15{,}909; on layer (b) -- 30{,}537; on layer (c) -- 14{,}393. The three layers have low pairwise edge Jaccard overlaps: $0.041$ for (a)--(b), $0.161$ for (a)--(c) and $0.027$ for (b)--(c). Layer (a) is edge-identical to the layer $g_1$ of the two-layer Immune Trials network, so results on the two datasets are directly comparable node-for-node. They are not, however, a nested pair: the two-layer network's text layer does not appear among the three multiplex layers, so a difference between the two datasets mixes the change in the number of layers with a change in their content and cannot be read as a controlled two-layer versus three-layer comparison.
\end{itemize}

Because every layer is normalized to total weight one before optimization, raw layer-size differences are intentionally removed. The three-layer experiments therefore ask which normalized layer signal dominates or combines best under the objective, not which raw layer contains the most edges or traffic.

\begin{figure}[!htbp]
    \centering
    \scalebox{0.8}{%
    \begin{minipage}{\textwidth}
    \centering
    \begin{subfigure}[b]{0.98\textwidth}
        \centering\includegraphics[width=\linewidth]{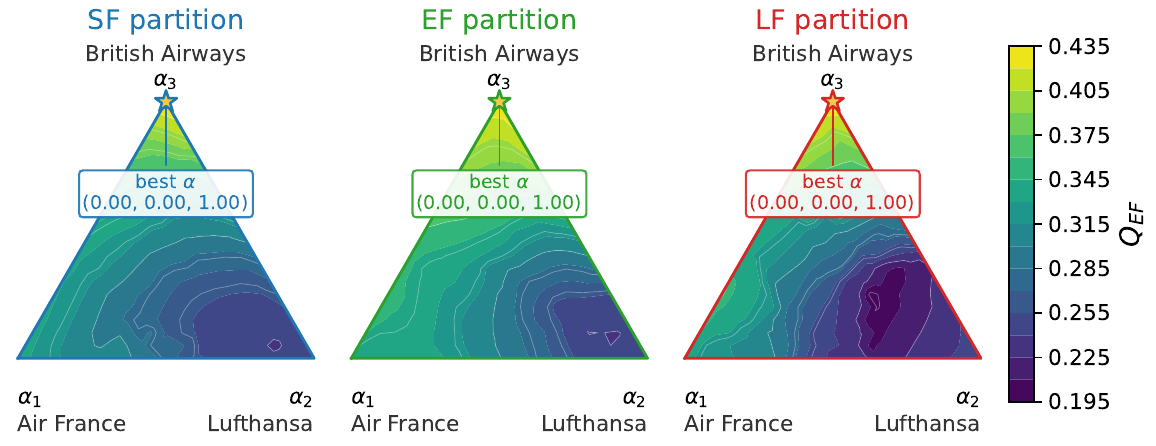}
    \end{subfigure}
    \vspace{0.6em}
    \begin{subfigure}[b]{0.98\textwidth}
        \centering\includegraphics[width=\linewidth]{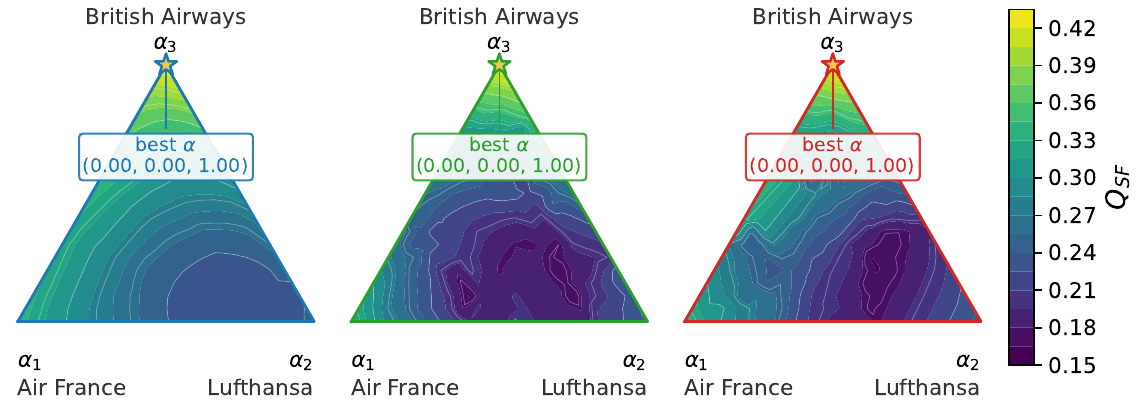}
    \end{subfigure}
    \vspace{0.6em}
    \begin{subfigure}[b]{0.98\textwidth}
        \centering\includegraphics[width=\linewidth]{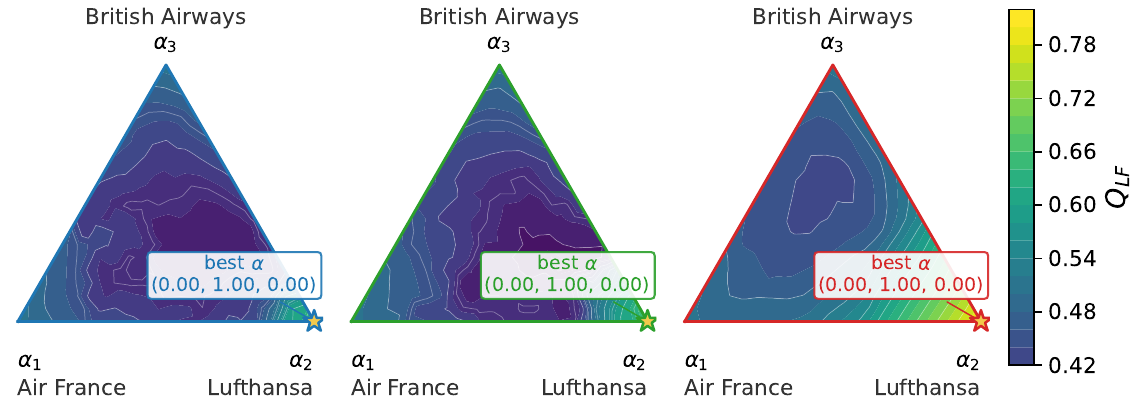}
    \end{subfigure}

    \end{minipage}%
    }
    \caption{Major Airlines multiplex network modularity optimization performed by Leiden-based algorithms. Target functions. The $Q_{\sf LF}^{\tau}$ row is evaluated on the transient layers $G'^{\tau}$ and is not on the same raw scale as $Q_{\sf EF}$/$Q_{\sf SF}$.}
    \label{fig:major_airlines_target}
    \par\smallskip{\small\noindent\textbf{Alt text:} Nine triangular panels show objective surfaces over three-layer weight combinations for Major Airlines. Columns distinguish SF, EF and LF source partitions returned by Leiden; rows show the EF, SF and LF objectives. The non-negative weights sum to one. LF scores use transient layers and a separate scale.\par}
\end{figure}

\begin{figure}[!htbp]
    \centering
    \scalebox{0.8}{%
    \begin{minipage}{\textwidth}
    \centering
    \begin{subfigure}[b]{0.98\textwidth}
        \centering\includegraphics[width=\linewidth]{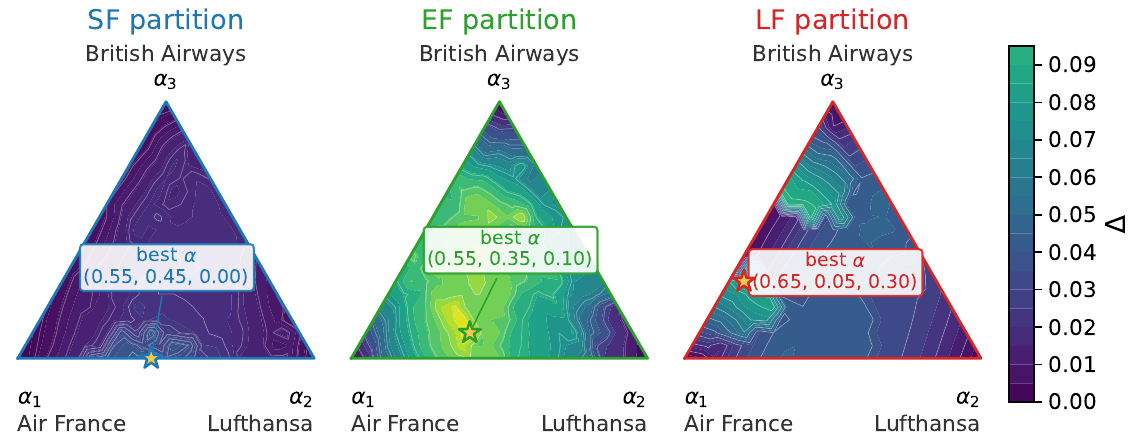}
    \end{subfigure}
    \vspace{0.6em}
    \begin{subfigure}[b]{0.98\textwidth}
        \centering\includegraphics[width=\linewidth]{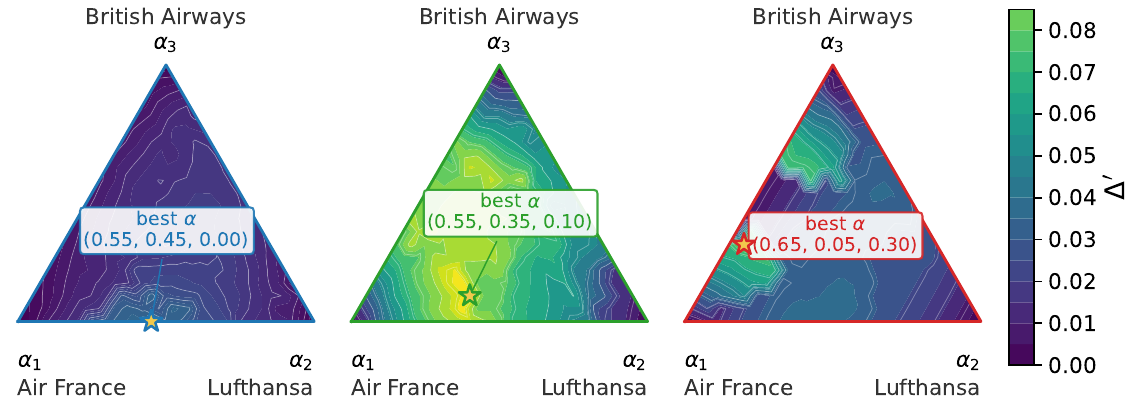}
    \end{subfigure}
    \vspace{0.6em}
    \begin{subfigure}[b]{0.98\textwidth}
        \centering\includegraphics[width=\linewidth]{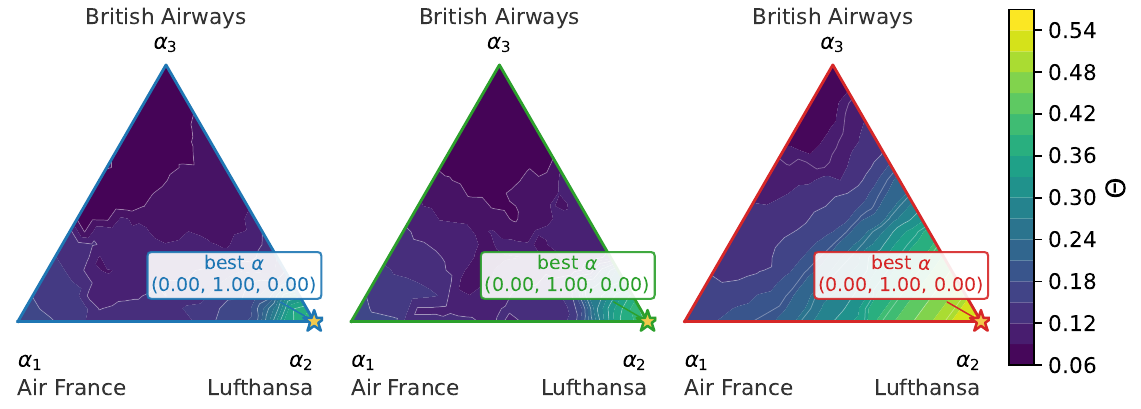}
    \end{subfigure}
    
    \end{minipage}%
    }
    \caption{Major Airlines multiplex network modularity optimization performed by Leiden-based algorithms. Correction terms and auxiliary quantities.}
    \label{fig:major_airlines_aux}
    \par\smallskip{\small\noindent\textbf{Alt text:} Nine triangular panels show correction and auxiliary surfaces over three-layer weight combinations for Major Airlines. Columns distinguish SF, EF and LF source partitions returned by Leiden; rows show Delta on the original graph, Delta on the transient graph, and Theta.\par}
\end{figure}

\begin{figure}[!htbp]
    \centering
    \scalebox{0.8}{%
    \begin{minipage}{\textwidth}
    \centering
    \begin{subfigure}[b]{0.98\textwidth}
        \centering\includegraphics[width=\linewidth]{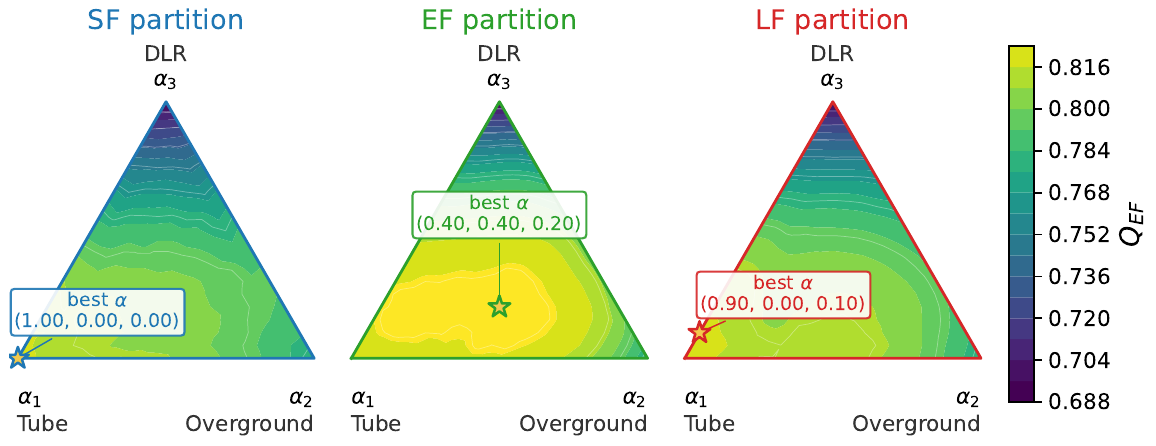}
    \end{subfigure}
    \vspace{0.6em}
    \begin{subfigure}[b]{0.98\textwidth}
        \centering\includegraphics[width=\linewidth]{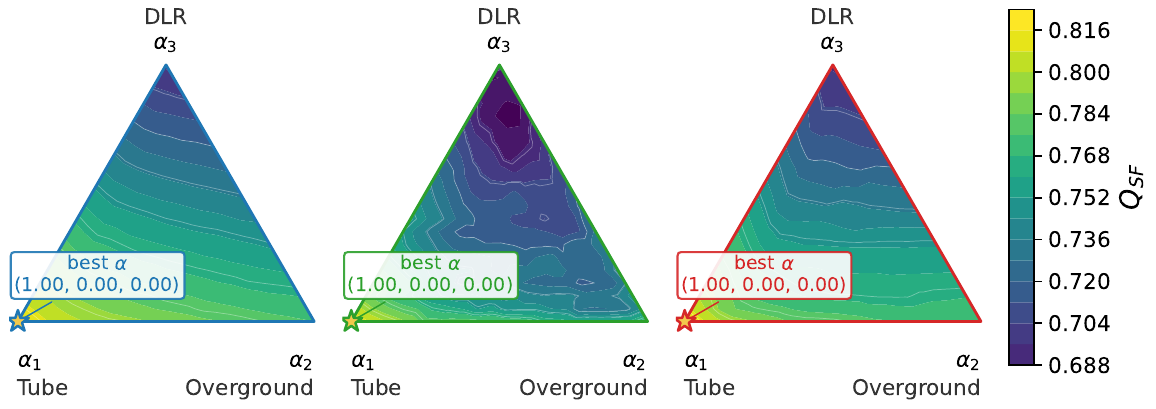}
    \end{subfigure}
    \vspace{0.6em}
    \begin{subfigure}[b]{0.98\textwidth}
        \centering\includegraphics[width=\linewidth]{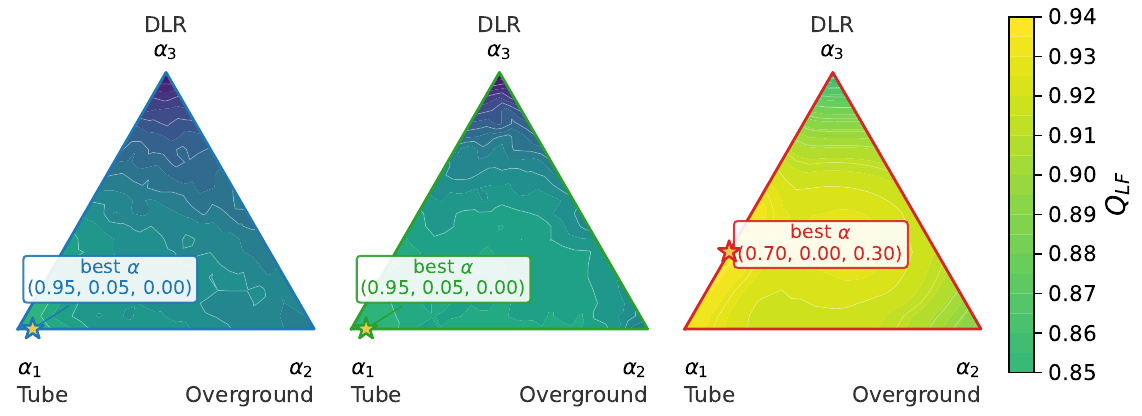}
    \end{subfigure}

    \end{minipage}%
    }
    \caption{London Stations multiplex network modularity optimization performed by Leiden-based algorithms. DLR denotes the Docklands Light Railway. Target functions. The $Q_{\sf LF}^{\tau}$ row is evaluated on the transient layers $G'^{\tau}$ and is not on the same raw scale as $Q_{\sf EF}$/$Q_{\sf SF}$.}
    \label{fig:london_stations_target}
    \par\smallskip{\small\noindent\textbf{Alt text:} Nine triangular panels show objective surfaces over three-layer weight combinations for London Stations. Columns distinguish SF, EF and LF source partitions returned by Leiden; rows show the EF, SF and LF objectives. The non-negative weights sum to one. LF scores use transient layers and a separate scale.\par}
\end{figure}

\begin{figure}[!htbp]
    \centering
    \scalebox{0.8}{%
    \begin{minipage}{\textwidth}
    \centering
    \begin{subfigure}[b]{0.98\textwidth}
        \centering\includegraphics[width=\linewidth]{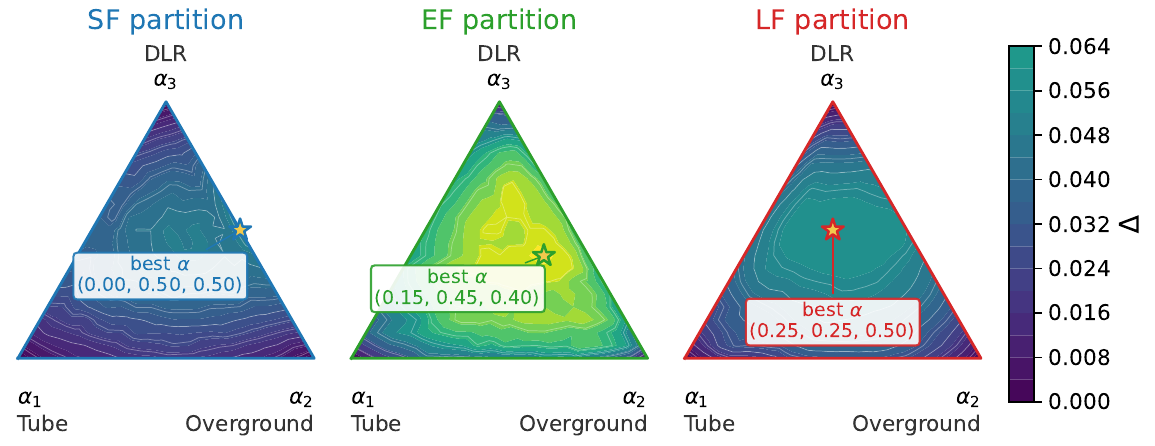}
    \end{subfigure}
    \vspace{0.6em}
    \begin{subfigure}[b]{0.98\textwidth}
        \centering\includegraphics[width=\linewidth]{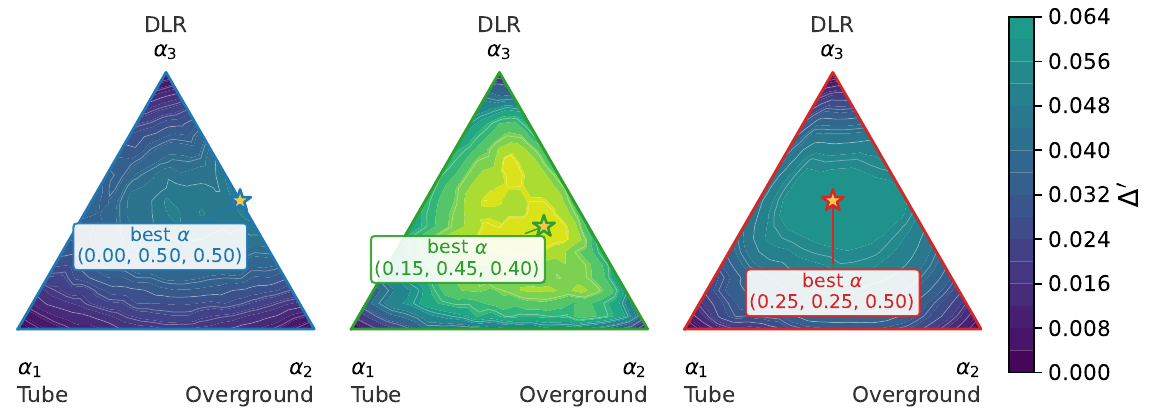}
    \end{subfigure}
    \vspace{0.6em}
    \begin{subfigure}[b]{0.98\textwidth}
        \centering\includegraphics[width=\linewidth]{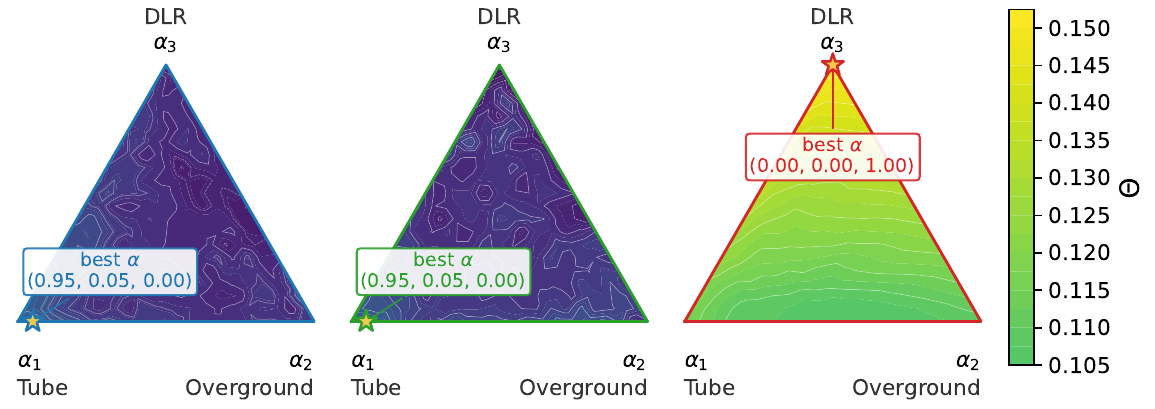}
    \end{subfigure}
    
    \end{minipage}%
    }
    \caption{London Stations multiplex network modularity optimization performed by Leiden-based algorithms. DLR denotes the Docklands Light Railway. Correction terms and auxiliary quantities.}
    \label{fig:london_stations_aux}
    \par\smallskip{\small\noindent\textbf{Alt text:} Nine triangular panels show correction and auxiliary surfaces over three-layer weight combinations for London Stations. Columns distinguish SF, EF and LF source partitions returned by Leiden; rows show Delta on the original graph, Delta on the transient graph, and Theta.\par}
\end{figure}

\begin{figure}[!htbp]
    \centering
    \scalebox{0.8}{%
    \begin{minipage}{\textwidth}
    \centering
    \begin{subfigure}[b]{0.98\textwidth}
        \centering\includegraphics[width=\linewidth]{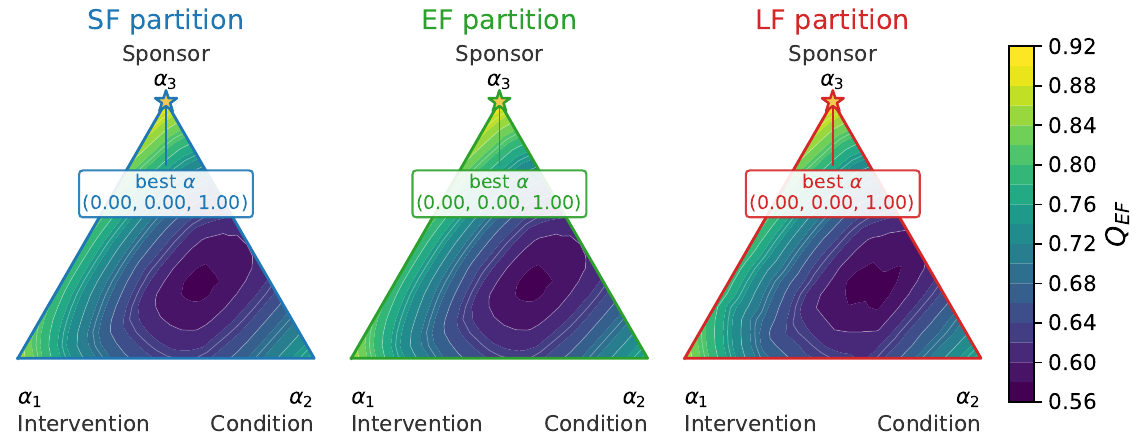}
    \end{subfigure}
    \vspace{0.6em}
    \begin{subfigure}[b]{0.98\textwidth}
        \centering\includegraphics[width=\linewidth]{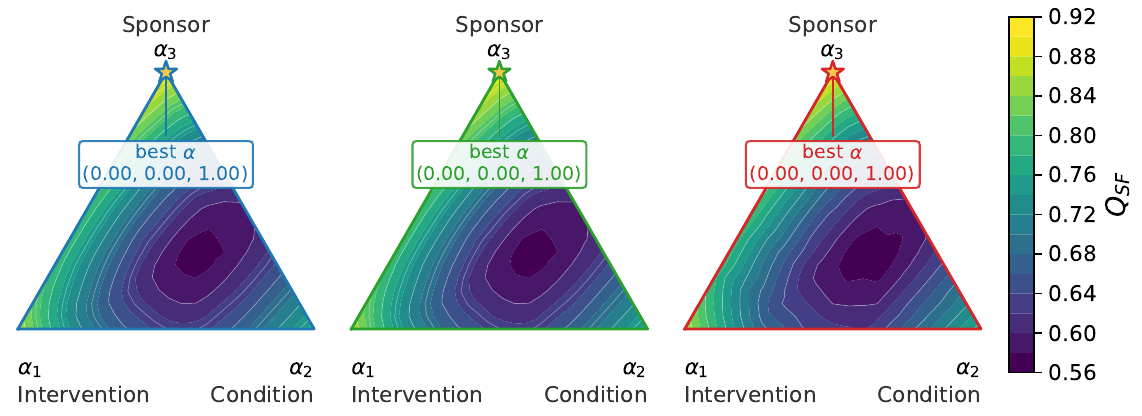}
    \end{subfigure}
    \vspace{0.6em}
    \begin{subfigure}[b]{0.98\textwidth}
        \centering\includegraphics[width=\linewidth]{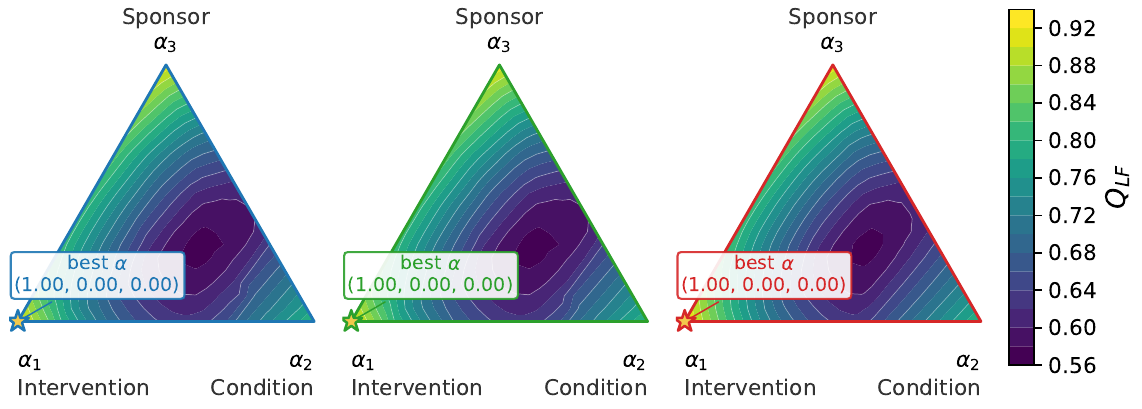}
    \end{subfigure}

    \end{minipage}%
    }
    \caption{Immune Trials multiplex network modularity optimization performed by Leiden-based algorithms. Target functions. The $Q_{\sf LF}^{\tau}$ row is evaluated on the transient layers $G'^{\tau}$ and is not on the same raw scale as $Q_{\sf EF}$/$Q_{\sf SF}$.}
    \label{fig:ctgov_multiplex_target}
    \par\smallskip{\small\noindent\textbf{Alt text:} Nine triangular panels show objective surfaces over three-layer weight combinations for three-layer Immune Trials. Columns distinguish SF, EF and LF source partitions returned by Leiden; rows show the EF, SF and LF objectives. The non-negative weights sum to one. LF scores use transient layers and a separate scale.\par}
\end{figure}

\begin{figure}[!htbp]
    \centering
    \scalebox{0.8}{%
    \begin{minipage}{\textwidth}
    \centering
    \begin{subfigure}[b]{0.98\textwidth}
        \centering\includegraphics[width=\linewidth]{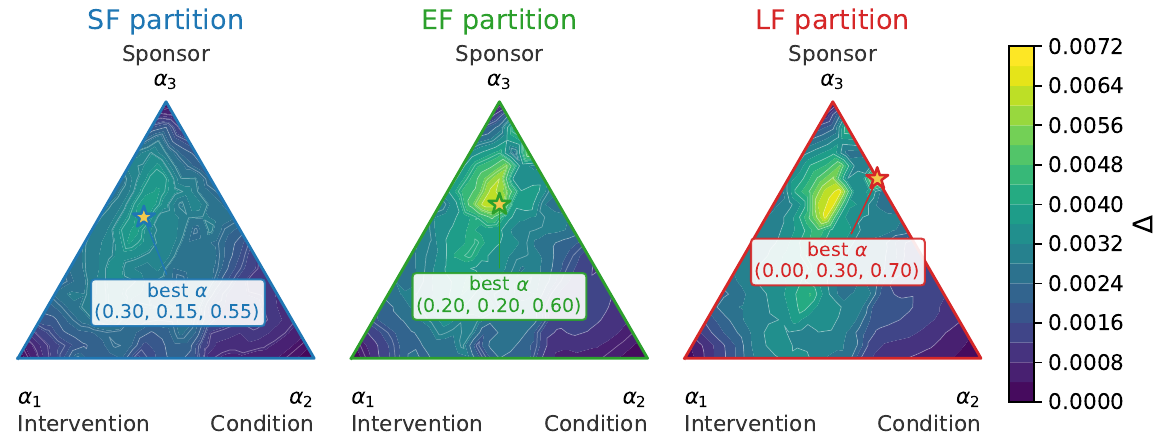}
    \end{subfigure}
    \vspace{0.6em}
    \begin{subfigure}[b]{0.98\textwidth}
        \centering\includegraphics[width=\linewidth]{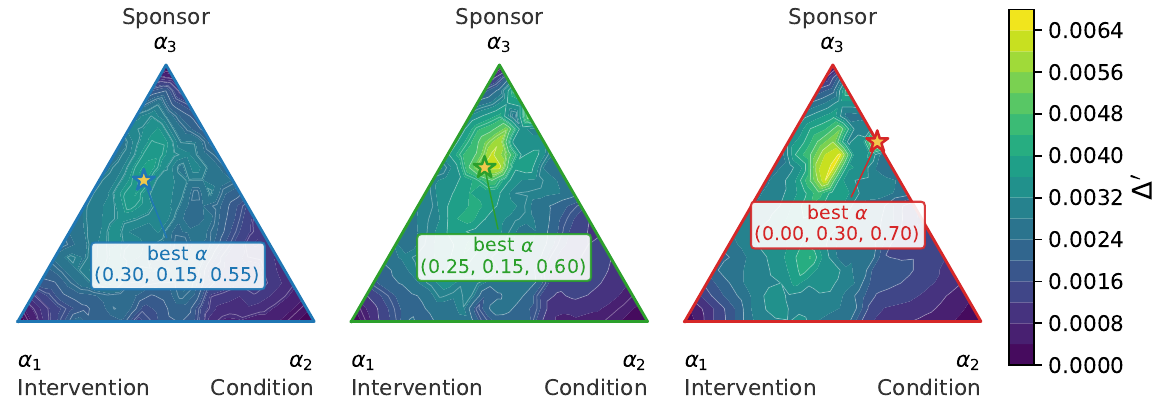}
    \end{subfigure}
    \vspace{0.6em}
    \begin{subfigure}[b]{0.98\textwidth}
        \centering\includegraphics[width=\linewidth]{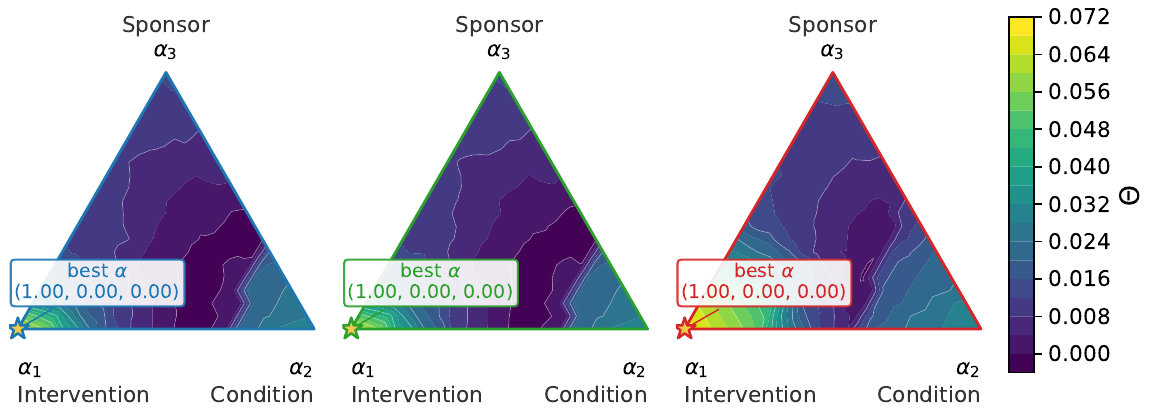}
    \end{subfigure}

    \end{minipage}%
    }
    \caption{Immune Trials multiplex network modularity optimization performed by Leiden-based algorithms. Correction terms and auxiliary quantities.}
    \label{fig:ctgov_multiplex_aux}
    \par\smallskip{\small\noindent\textbf{Alt text:} Nine triangular panels show correction and auxiliary surfaces over three-layer weight combinations for three-layer Immune Trials. Columns distinguish SF, EF and LF source partitions returned by Leiden; rows show Delta on the original graph, Delta on the transient graph, and Theta.\par}
\end{figure}

Tables~\ref{tab:auto_best_results} and~\ref{tab:auto_aux_results} summarize the best observed scores, layer weights, and restart variability. The following results use Leiden throughout; the tables also report Louvain for the two-layer networks. All winners are configurations of source method and layer weights, ranked separately for each target objective.

\begin{table}[t]
    \centering
    \scriptsize
    \caption{Best median heuristic scores for the main objectives. Each row reports the best median across returned partitions for the stated dataset, optimizer and target objective. Here ``Winner'' means the source method whose returned partitions attain the best median value of the stated target objective, together with the winner location in the simplex and the variability retained at that winning $\alpha$.}
    \label{tab:auto_best_results}
    \setlength{\tabcolsep}{2pt}
    \resizebox{\textwidth}{!}{%
    \begin{tabular}{lllllllllll}
        \toprule
        Dataset & Optimizer & Objective & Winner & Best median & $\alpha$ & Location & IQR & VI@win & Comm. & Time(s) \\
        \midrule
        London Stations & Leiden & $Q_{SF}$ & LF & 0.8175$^{\dagger}$ & (1.00, 0.00, 0.00) & vertex & 0.0017 & 0.0686 & 117.0 & 0.016 \\
        London Stations & Leiden & $Q_{EF}$ & EF & 0.8282 & (0.40, 0.40, 0.20) & interior & 0.0015 & 0.1020 & 21.0 & 0.007 \\
        London Stations & Leiden & $Q_{LF}$ & LF & 0.9379$^{\dagger}$ & (0.70, 0.00, 0.30) & edge & 0.0059 & 0.0877 & 83.0 & 0.016 \\
        \midrule
        Political Blogs & Leiden & $Q_{SF}$ & SF & 0.5000$^{\dagger}$ & (0.00, 1.00) & vertex & 0.0000 & 0.0000 & 34.0 & 0.247 \\
        Political Blogs & Leiden & $Q_{EF}$ & SF & 0.5000$^{\dagger}$ & (0.00, 1.00) & vertex & 0.0000 & 0.0000 & 34.0 & 0.247 \\
        Political Blogs & Leiden & $Q_{LF}$ & LF & 0.5070$^{\dagger}$ & (1.00, 0.00) & vertex & 0.0004 & 0.0067 & 12.0 & 3.050 \\
        Political Blogs & Louvain & $Q_{SF}$ & EF & 0.5000$^{\dagger}$ & (0.00, 1.00) & vertex & $4.49\times10^{-6}$ & 0.0007 & 34.0 & 0.201 \\
        Political Blogs & Louvain & $Q_{EF}$ & SF & 0.5000$^{\dagger}$ & (0.00, 1.00) & vertex & 0.0000 & 0.0003 & 34.0 & 0.369 \\
        Political Blogs & Louvain & $Q_{LF}$ & LF & 0.5060$^{\dagger}$ & (1.00, 0.00) & vertex & 0.0021 & 0.0170 & 11.5 & 4.531 \\
        \midrule
        Cora & Leiden & $Q_{SF}$ & SF & 0.8220$^{\dagger}$ & (1.00, 0.00) & vertex & 0.0008 & 0.0873 & 106.5 & 1.839 \\
        Cora & Leiden & $Q_{EF}$ & SF & 0.8220$^{\dagger}$ & (1.00, 0.00) & vertex & 0.0008 & 0.0873 & 106.5 & 1.839 \\
        Cora & Leiden & $Q_{LF}$ & LF & 0.9359 & (1.00, 0.00) & vertex & 0.0026 & 0.0897 & 106.0 & 2.623 \\
        Cora & Louvain & $Q_{SF}$ & EF & 0.8146$^{\dagger}$ & (1.00, 0.00) & vertex & 0.0014 & 0.1181 & 104.0 & 0.345 \\
        Cora & Louvain & $Q_{EF}$ & EF & 0.8146$^{\dagger}$ & (1.00, 0.00) & vertex & 0.0014 & 0.1181 & 104.0 & 0.345 \\
        Cora & Louvain & $Q_{LF}$ & LF & 0.9379 & (1.00, 0.00) & vertex & 0.0064 & 0.1118 & 104.5 & 5.300 \\
        \midrule
        Major Airlines & Leiden & $Q_{SF}$ & SF & 0.4349$^{\dagger}$ & (0.00, 0.00, 1.00) & vertex & 0.0000 & 0.0000 & 67.0 & 0.003 \\
        Major Airlines & Leiden & $Q_{EF}$ & SF & 0.4349$^{\dagger}$ & (0.00, 0.00, 1.00) & vertex & 0.0000 & 0.0000 & 67.0 & 0.003 \\
        Major Airlines & Leiden & $Q_{LF}$ & LF & 0.8036 & (0.00, 1.00, 0.00) & vertex & 0.0206 & 0.0934 & 28.0 & 0.005 \\
        \midrule
        Immune Trials (2L) & Leiden & $Q_{SF}$ & LF & 0.8577$^{\dagger}$ & (1.00, 0.00) & vertex & 0.0001 & 0.0107 & 136.0 & 0.301 \\
        Immune Trials (2L) & Leiden & $Q_{EF}$ & LF & 0.8577$^{\dagger}$ & (1.00, 0.00) & vertex & 0.0001 & 0.0107 & 136.0 & 0.301 \\
        Immune Trials (2L) & Leiden & $Q_{LF}$ & LF & 0.9280 & (1.00, 0.00) & vertex & 0.0007 & 0.0107 & 136.0 & 0.301 \\
        Immune Trials (2L) & Louvain & $Q_{SF}$ & LF & 0.8567$^{\dagger}$ & (1.00, 0.00) & vertex & 0.0010 & 0.0203 & 135.0 & 0.422 \\
        Immune Trials (2L) & Louvain & $Q_{EF}$ & LF & 0.8567$^{\dagger}$ & (1.00, 0.00) & vertex & 0.0010 & 0.0203 & 135.0 & 0.422 \\
        Immune Trials (2L) & Louvain & $Q_{LF}$ & LF & 0.9271 & (1.00, 0.00) & vertex & 0.0016 & 0.0203 & 135.0 & 0.422 \\
        \midrule
        Immune Trials (3L) & Leiden & $Q_{SF}$ & SF & 0.9009$^{\dagger}$ & (0.00, 0.00, 1.00) & vertex & $1.11\times10^{-16}$ & 0.0002 & 238.0 & 0.208 \\
        Immune Trials (3L) & Leiden & $Q_{EF}$ & SF & 0.9009$^{\dagger}$ & (0.00, 0.00, 1.00) & vertex & 0.0000 & 0.0002 & 238.0 & 0.208 \\
        Immune Trials (3L) & Leiden & $Q_{LF}$ & LF & 0.9280 & (1.00, 0.00, 0.00) & vertex & 0.0007 & 0.0107 & 136.0 & 0.270 \\
        \bottomrule
    \end{tabular}
    }
    \par\smallskip\raggedright\footnotesize IQR is the interquartile width. Time(s) is the median per-configuration runtime; for LF, it includes the layer-wise phase-one cost amortized uniformly over all grid points. Comm. denotes the median number of communities at the winning grid point; VI@win denotes mean pairwise normalized variation of information across restarts. A dagger ($^{\dagger}$) on a Best-median value marks a winner--runner-up median-difference interval that includes zero, including $[0,0]$ for tied constant scores. This bootstrap interval summarizes variation across seed restarts; it does not establish statistical equivalence of methods. The two configurations may use the same $\alpha$.
\end{table}

\begin{table}[t]
    \centering
    \scriptsize
    \caption{Best median heuristic scores for auxiliary correction terms. These rows summarize where the EF and LF correction terms are largest and should not be read as main objective winners.}
    \label{tab:auto_aux_results}
    \setlength{\tabcolsep}{2pt}
    \resizebox{\textwidth}{!}{%
    \begin{tabular}{lllllllllll}
        \toprule
        Dataset & Optimizer & Objective & Winner & Best median & $\alpha$ & Location & IQR & VI@win & Comm. & Time(s) \\
        \midrule
        London Stations & Leiden & $\Delta$ & EF & 0.1160$^{\dagger}$ & (0.15, 0.45, 0.40) & interior & 0.0114 & 0.0722 & 19.0 & 0.007 \\
        London Stations & Leiden & $\Delta'$ & EF & 0.1145$^{\dagger}$ & (0.15, 0.45, 0.40) & interior & 0.0109 & 0.0722 & 19.0 & 0.007 \\
        London Stations & Leiden & $\Theta$ & LF & 0.1522$^{\dagger}$ & (0.00, 0.00, 1.00) & vertex & 0.0217 & 0.0143 & 331.0 & 0.007 \\
        \midrule
        Political Blogs & Leiden & $\Delta$ & LF & $1.63\times10^{-5}$ & (0.50, 0.50) & interior & $5.17\times10^{-7}$ & 0.0067 & 12.0 & 3.432 \\
        Political Blogs & Leiden & $\Delta'$ & LF & $1.56\times10^{-5}$ & (0.55, 0.45) & interior & $8.51\times10^{-8}$ & 0.0067 & 12.0 & 3.748 \\
        Political Blogs & Leiden & $\Theta$ & LF & 0.0799$^{\dagger}$ & (1.00, 0.00) & vertex & 0.0003 & 0.0067 & 12.0 & 3.050 \\
        Political Blogs & Louvain & $\Delta$ & LF & $1.69\times10^{-5}$$^{\dagger}$ & (0.50, 0.50) & interior & $2.69\times10^{-6}$ & 0.0164 & 12.0 & 5.250 \\
        Political Blogs & Louvain & $\Delta'$ & LF & $1.64\times10^{-5}$$^{\dagger}$ & (0.55, 0.45) & interior & $3.09\times10^{-6}$ & 0.0168 & 11.0 & 5.516 \\
        Political Blogs & Louvain & $\Theta$ & LF & 0.0792 & (1.00, 0.00) & vertex & 0.0015 & 0.0170 & 11.5 & 4.531 \\
        \midrule
        Cora & Leiden & $\Delta$ & EF & 0.0013 & (0.45, 0.55) & interior & 0.0001 & 0.0285 & 10.0 & 0.685 \\
        Cora & Leiden & $\Delta'$ & EF & 0.0013$^{\dagger}$ & (0.50, 0.50) & interior & $8.28\times10^{-5}$ & 0.0365 & 10.0 & 0.765 \\
        Cora & Leiden & $\Theta$ & LF & 0.1781 & (0.00, 1.00) & vertex & 0.0011 & 0.0045 & 7.0 & 6.407 \\
        Cora & Louvain & $\Delta$ & EF & 0.0011$^{\dagger}$ & (0.40, 0.60) & interior & 0.0001 & 0.0514 & 10.0 & 2.278 \\
        Cora & Louvain & $\Delta'$ & EF & 0.0012$^{\dagger}$ & (0.45, 0.55) & interior & 0.0002 & 0.0614 & 9.0 & 2.874 \\
        Cora & Louvain & $\Theta$ & LF & 0.1803 & (0.00, 1.00) & vertex & 0.0008 & 0.0167 & 7.0 & 14.956 \\
        \midrule
        Major Airlines & Leiden & $\Delta$ & EF & 0.1472$^{\dagger}$ & (0.55, 0.35, 0.10) & interior & 0.0146 & 0.0441 & 5.0 & 0.003 \\
        Major Airlines & Leiden & $\Delta'$ & EF & 0.1068$^{\dagger}$ & (0.55, 0.35, 0.10) & interior & 0.0059 & 0.0441 & 5.0 & 0.003 \\
        Major Airlines & Leiden & $\Theta$ & LF & 0.5574 & (0.00, 1.00, 0.00) & vertex & 0.0205 & 0.0934 & 28.0 & 0.005 \\
        \midrule
        Immune Trials (2L) & Leiden & $\Delta$ & EF & 0.0030$^{\dagger}$ & (0.70, 0.30) & interior & 0.0003 & 0.0374 & 15.0 & 0.199 \\
        Immune Trials (2L) & Leiden & $\Delta'$ & LF & 0.0034$^{\dagger}$ & (0.40, 0.60) & interior & 0.0004 & 0.0803 & 10.0 & 1.329 \\
        Immune Trials (2L) & Leiden & $\Theta$ & LF & 0.1322 & (0.00, 1.00) & vertex & 0.0280 & 0.0416 & 6.0 & 1.069 \\
        Immune Trials (2L) & Louvain & $\Delta$ & LF & 0.0028 & (0.35, 0.65) & interior & 0.0005 & 0.0950 & 9.0 & 2.594 \\
        Immune Trials (2L) & Louvain & $\Delta'$ & LF & 0.0031 & (0.35, 0.65) & interior & 0.0006 & 0.0950 & 9.0 & 2.594 \\
        Immune Trials (2L) & Louvain & $\Theta$ & LF & 0.1205$^{\dagger}$ & (0.00, 1.00) & vertex & 0.0302 & 0.0511 & 6.0 & 2.250 \\
        \midrule
        Immune Trials (3L) & Leiden & $\Delta$ & LF & 0.0069$^{\dagger}$ & (0.00, 0.30, 0.70) & edge & 0.0055 & 0.0609 & 15.0 & 1.078 \\
        Immune Trials (3L) & Leiden & $\Delta'$ & LF & 0.0068$^{\dagger}$ & (0.00, 0.30, 0.70) & edge & 0.0054 & 0.0609 & 15.0 & 1.078 \\
        Immune Trials (3L) & Leiden & $\Theta$ & LF & 0.0702 & (1.00, 0.00, 0.00) & vertex & 0.0008 & 0.0107 & 136.0 & 0.270 \\
        \bottomrule
    \end{tabular}
    }
    \par\smallskip\raggedright\footnotesize IQR is the interquartile width. Time(s) is the median per-configuration runtime; for LF, it includes the layer-wise phase-one cost amortized uniformly over all grid points. Comm. denotes the median number of communities at the winning grid point; VI@win denotes mean pairwise normalized variation of information across restarts. A dagger ($^{\dagger}$) on a Best-median value marks a winner--runner-up median-difference interval that includes zero, including $[0,0]$ for tied constant scores. This bootstrap interval summarizes variation across seed restarts; it does not establish statistical equivalence of methods. The two configurations may use the same $\alpha$.
\end{table}

\paragraph{London Stations.} The best median $Q_{\sf SF}$ is 0.8175 at $\alpha=(1.00, 0.00, 0.00)$ (LF partitions). The best median $Q_{\sf EF}$ is 0.8282 at $\alpha=(0.40, 0.40, 0.20)$ (EF partitions; interior). Its exploratory bootstrap interval is [0.8276, 0.8284]; the gap to the runner-up configuration (EF, $\alpha=(0.40, 0.35, 0.25)$) has interval [0.0003, 0.0013]. The largest median $\Delta$ is 0.1160 at $\alpha=(0.15, 0.45, 0.40)$ (interior); $\Delta'$ reaches 0.1145 at $\alpha=(0.15, 0.45, 0.40)$ (interior). On the transient-graph scale, $Q_{\sf LF}$ reaches 0.9379 at $\alpha=(0.70, 0.00, 0.30)$ (LF partitions).

\paragraph{Political Blogs.} The best median $Q_{\sf SF}$ is 0.5000 at $\alpha=(0.00, 1.00)$ (SF partitions). The best median $Q_{\sf EF}$ is 0.5000 at $\alpha=(0.00, 1.00)$ (SF partitions; vertex). Its exploratory bootstrap interval is [0.5000, 0.5000]; the gap to the runner-up configuration (EF, $\alpha=(0.00, 1.00)$) has interval [0.0000, 0.0000]. The largest median $\Delta$ is $1.63\times10^{-5}$ at $\alpha=(0.50, 0.50)$ (interior); $\Delta'$ reaches $1.56\times10^{-5}$ at $\alpha=(0.55, 0.45)$ (interior). On the transient-graph scale, $Q_{\sf LF}$ reaches 0.5070 at $\alpha=(1.00, 0.00)$ (LF partitions).

\paragraph{Cora.} The best median $Q_{\sf SF}$ is 0.8220 at $\alpha=(1.00, 0.00)$ (SF partitions). The best median $Q_{\sf EF}$ is 0.8220 at $\alpha=(1.00, 0.00)$ (SF partitions; vertex). Its exploratory bootstrap interval is [0.8218, 0.8223]; the gap to the runner-up configuration (LF, $\alpha=(1.00, 0.00)$) has interval [-0.0004, 0.0004]. The largest median $\Delta$ is 0.0013 at $\alpha=(0.45, 0.55)$ (interior); $\Delta'$ reaches 0.0013 at $\alpha=(0.50, 0.50)$ (interior). On the transient-graph scale, $Q_{\sf LF}$ reaches 0.9359 at $\alpha=(1.00, 0.00)$ (LF partitions).

\paragraph{Major Airlines.} The best median $Q_{\sf SF}$ is 0.4349 at $\alpha=(0.00, 0.00, 1.00)$ (SF partitions). The best median $Q_{\sf EF}$ is 0.4349 at $\alpha=(0.00, 0.00, 1.00)$ (SF partitions; vertex). Its exploratory bootstrap interval is [0.4349, 0.4349]; the gap to the runner-up configuration (EF, $\alpha=(0.00, 0.00, 1.00)$) has interval [0.0000, 0.0000]. The largest median $\Delta$ is 0.1472 at $\alpha=(0.55, 0.35, 0.10)$ (interior); $\Delta'$ reaches 0.1068 at $\alpha=(0.55, 0.35, 0.10)$ (interior). On the transient-graph scale, $Q_{\sf LF}$ reaches 0.8036 at $\alpha=(0.00, 1.00, 0.00)$ (LF partitions).

\paragraph{Immune Trials (2L).} The best median $Q_{\sf SF}$ is 0.8577 at $\alpha=(1.00, 0.00)$ (LF partitions). The best median $Q_{\sf EF}$ is 0.8577 at $\alpha=(1.00, 0.00)$ (LF partitions; vertex). Its exploratory bootstrap interval is [0.8577, 0.8578]; the gap to the runner-up configuration (SF, $\alpha=(1.00, 0.00)$) has interval [$-3.35\times10^{-5}$, $8.13\times10^{-5}$]. The largest median $\Delta$ is 0.0030 at $\alpha=(0.70, 0.30)$ (interior); $\Delta'$ reaches 0.0034 at $\alpha=(0.40, 0.60)$ (interior). On the transient-graph scale, $Q_{\sf LF}$ reaches 0.9280 at $\alpha=(1.00, 0.00)$ (LF partitions).

\paragraph{Immune Trials (3L).} The best median $Q_{\sf SF}$ is 0.9009 at $\alpha=(0.00, 0.00, 1.00)$ (SF partitions). The best median $Q_{\sf EF}$ is 0.9009 at $\alpha=(0.00, 0.00, 1.00)$ (SF partitions; vertex). Its exploratory bootstrap interval is [0.9009, 0.9009]; the gap to the runner-up configuration (EF, $\alpha=(0.00, 0.00, 1.00)$) has interval [0.0000, 0.0000]. The largest median $\Delta$ is 0.0069 at $\alpha=(0.00, 0.30, 0.70)$ (edge); $\Delta'$ reaches 0.0068 at $\alpha=(0.00, 0.30, 0.70)$ (edge). On the transient-graph scale, $Q_{\sf LF}$ reaches 0.9280 at $\alpha=(1.00, 0.00, 0.00)$ (LF partitions).

\section{Discussion and conclusions}
\label{sec:discussion}

    Section~\ref{sec:introduction} emphasized that empirical rankings and descriptions of community detection methods do not necessarily reveal what their fusion objectives optimize. For the representative {\sf EF}, {\sf SF}, and {\sf LF} formulations studied here, the identities derived in Section~\ref{sec:theorems} make these objectives explicit and separate the effects of layer fusion from those of the transient graph construction. Optimizing a specified objective does not, by itself, establish that the resulting communities are scientifically meaningful for a particular application.

    The analytical results also show that the three formulations can respond differently to layer weighting. For {\sf SF}, an optimal layer-weight vector always exists at a vertex of the simplex, although nonvertex optima may also occur when multiple vertices are tied. For {\sf EF}, the additional non-negative heterogeneity term $\gamma\Delta$ and concavity in $\alpha$ for a fixed partition allow a nontrivial mixture of layers to improve the objective, as illustrated by the counterexample in Remark~\ref{remark:ef_interior_counterexample}. The {\sf LF} objective is conditional on the fixed construction operator $\tau$, since {\sf LF} first modifies the layers according to the independently obtained layer-wise partitions and then applies {\sf EF} to the resulting transformed multiplex network. The exact toy experiments illustrate these objective-level properties, while the heuristic experiments additionally show how the choice of optimizer can affect the observed results.

    \noindent\textbf{Transport networks.} London Stations is the only dataset among those analyzed whose highest median EF score is recorded within the simplex interior. The winning mixture consists of Tube, Overground, and DLR layers, while SF chooses the Tube layer only. The transport modes connect stations in different ways, which may make their combination useful under EF. Through the variable $\Delta$, the algebra considers the effect of heterogeneous community-wise degree volumes. However, it does not provide a causal explanation for the observed partition. On Major Airlines, EF and SF select the British Airways layer; LF selects Lufthansa. Here, no tested interior mixture has a higher median EF score than the best single-layer EF configuration. With normalized layers, we interpret these preferences in terms of the tested modularity objective rather than the number of flights or edges.

    \noindent\textbf{Node-attributed networks.} For Political Blogs, both EF and SF favor the attribute layer. Only same-leaning hyperlinks are retained in this layer, so it separates political leanings by construction. The result is a useful consistency check, but it does not independently validate the inferred communities. Cora prefers the citation layer. Adding the sparsified attribute layer did not improve the best observed EF median, despite a nonzero heterogeneity correction. It does not demonstrate that textual attributes cannot provide useful information for other scientific inquiries or quality criteria.

    \noindent\textbf{Clinical-trial networks.} In the two-layer Immune Trials network, EF and SF favor intervention similarity over the text layer. In the three-layer network they favor shared sponsors or collaborators over intervention and condition similarity. LF favors the intervention layer in both cases, illustrating how the transient construction can change the preferred layer under its own objective. The largest three-layer $\Delta$ and $\Delta'$ values occur on the condition--sponsor edge of the simplex, but their exploratory winner-gap intervals include zero. The auxiliary maxima therefore do not identify a unique preferred mixture. Nor do they replace the vertex winners for the main objectives. Although their node sets are the same, these two trial networks cannot be compared to isolate the effect of adding a third layer, because their layer content differs.

    \noindent\textbf{Interpretation and limits.} A source method need not return the best heuristic partition even under its own objective: for example, the Leiden SF partitions give the best observed EF score on Cora, while LF partitions give the best observed EF score on the two-layer Immune Trials network. Thus the tables compare target objectives and returned partitions separately. Higher EF values can reflect the heterogeneity bonus, and absolute LF values use a modified graph; neither is evidence of universal scientific superiority. Modularity and external community labels answer different questions \cite{Hric2014,Peel2017,PeelTheory2022}. The results are exact under the assumptions of shared nodes, no interlayer edges, non-negative layer weights that are normalized, and a common resolution parameter. LF also requires the stated fixed construction. The empirical conclusions can change with the preprocessing, optimizer, seeds, and grid resolution. The bootstrap intervals describe seed variability for a given configuration; they do not establish that the weights are globally optimal. London's interior best median shows that a mixture beats the tested single-layer configurations under the reported EF criterion. It does not establish concavity of the empirical profile: Theorem~\ref{theorem:ef_fixed_partition} keeps the partition $C$ fixed as $\alpha$ varies, whereas the partition returned by a heuristic can change between grid points. Even the exact profile $\max_C Q_{\sf EF}(G,C;\gamma,\alpha)$ is a pointwise maximum of concave functions and need not itself be concave.

    The comparison identifies what these representative fusion methods optimize and when layer fusion can change objective values or preferred weights. These analytical statements are separate from empirical optimizer rankings. Evaluating returned partitions under other objectives may reveal differences hidden by a method's own score. The comparisons depend on the standing assumptions. To choose scientifically meaningful communities, we still need a criterion suited to the application. This analysis could be expanded to incorporate layer-specific resolution parameters, additional formulations of multiplex modularity \cite{GadarAbonyi2024}, richer benchmarks such as mABCD \cite{Krainski2025}, and additional LF transient constructions.

\section*{Funding}

This research received no external funding.

\section*{Acknowledgements}

We are grateful to Timofey Gradov and Klavdiya Bochenina for valuable comments on an earlier version of this
manuscript.

Code implementation used Codex 26.924.22138, Cursor 3.22.12, and Claude Code 2.1.257. The authors approve the resulting text, code, and results, and take full responsibility for the manuscript.

\section*{Competing interests}

The authors declare no competing interests.

\section*{Data and code availability}

Source code, software requirements, numerical checks of the EF--SF and LF identities, and materials for reproducing the analysis are available in the \href{https://gitlab.com/aipetr/modularity_objective_functions}{public repository}.

\noindent\textbf{Clinical-trial data source and reuse.} \href{https://aact.ctti-clinicaltrials.org/}{Aggregate Analysis of ClinicalTrials.gov (AACT) Database} \cite{AACT2026}, Clinical Trials Transformation Initiative (CTTI); source: ClinicalTrials.gov, U.S. National Library of Medicine (NLM). ``Courtesy of the National Library of Medicine.'' The source tables have a maximum recorded registry-update date of 2026-06-18; the local cohort was extracted on 2026-08-18. The original snapshot export date is not recorded. These derived networks do not reflect the most current data available from NLM. Users agree to hold NLM and the U.S. Government harmless from any liability resulting from errors in the data; the results reported here are not endorsed by NLM, NIH or CTTI. A minority of free-text registry fields may embed third-party material protected by non-US copyright law; only derived and aggregated features are used, and the raw export is not redistributed.

\clearpage
\begin{appendices}

\section{Cora Sensitivity Audit}

\paragraph{Cora sensitivity.} Table~\ref{tab:cora_sensitivity} audits whether the EF winner location is sensitive to the cosine k-NN parameter and symmetrization rule used for the Cora attribute layer. Its purpose is narrowly to test whether the best-median EF configuration remains at the same simplex location under these four preprocessing variants. Unlike the main runs (30 seeds with seed-resampling intervals), this audit uses 10 seeds per grid point and reports point estimates (medians and restart dispersion) only, without bootstrap intervals; it is a qualitative robustness check rather than a statistically powered comparison. Across all four variants the best median $Q_{EF}$ stays at the pure-structural vertex $\alpha=(1,0)$. At that vertex, the largest within-variant spread among the $Q_{EF}$ medians attained by partitions returned by SF, EF, and LF is 0.0008 when rounded to four decimal places; these 10-seed point estimates do not support a ranking of source methods. Only the k=25 union variant uses the canonical processed attribute layer; the other three variants were constructed for this sensitivity audit.
\begin{table}[htbp]
    \centering
    \scriptsize
    \caption{Cora sensitivity audit under Leiden. Winners are reported for the $Q_{EF}$ target objective only, because the appendix is intended to stress-test the preprocessing choice that most directly affects the EF geometry.}
    \label{tab:cora_sensitivity}
    \begin{tabular}{lllllll}
        \toprule
        Variant & Edges & Winner & Best median $Q_{EF}$ & $\alpha$ & Location & VI@win \\
        \midrule
        $k=10$, union & 20588 & SF & 0.8224 & (1.00, 0.00) & vertex & 0.0870 \\
        $k=25$, union & 49314 & LF & 0.8220 & (1.00, 0.00) & vertex & 0.0876 \\
        $k=50$, union & 95379 & SF & 0.8225 & (1.00, 0.00) & vertex & 0.0861 \\
        $k=25$, mutual & 18352 & SF & 0.8223 & (1.00, 0.00) & vertex & 0.0869 \\
        \bottomrule
    \end{tabular}
\end{table}

\section{A normalized cluster-network alternative inspired by \texorpdfstring{\cite{Liu2020}}{Liu et al.}}
\label{app:liu_lf}

Liu et al.\ \cite{Liu2020} obtain partitions from network structure and node attributes separately, encode them by binary co-membership matrices, and fuse the matrices linearly before applying community detection. For numerical attributes, their attribute partition can come from a clustering method rather than modularity maximization. We compare this idea through the following normalized, loop-free, $L$-layer adaptation within our modularity framework (Figure~\ref{fig:scheme_late}); it is not an exact restatement of their algorithm.

See e.g. \cite{Shamir2004,Pandove2017} for other problems involving cluster networks.

Starting from layer-wise partitions $C'_l$, replace each community by a clique to obtain a cluster network $g'_l$. Let $S_l=\sum_k\binom{|C'_{l,k}|}{2}$ be the number of ordinary within-community edges. For $S_l>0$, normalize their indicator weights to total edge weight one:

\begin{equation*}
    \label{new_weights}
    w'_l(e_{ij}) =
    \begin{cases}
        \frac{1}{S_l}, \quad \text{if $i\ne j$ and $v_i, v_j \in C'_{l,k}$ for some }k,\\
        0, \quad \text{otherwise},
    \end{cases}
\end{equation*}
The all-singleton case has $S_l=0$ and is excluded from this loop-free normalized adaptation. The per-layer normalization changes the numerical interpretation of the fusion weights relative to the original co-membership matrices.

The final step applies EF to the cluster multiplex $G'$ with the chosen $\alpha$.

Our absorbing construction and this normalized cluster-network adaptation assign different edge weights. The latter need not satisfy (\ref{LF_construction_1}), as Figure~\ref{fig:example-LF} shows, so the monotonicity argument in Lemma~\ref{lemma_lf_connected_to_sf} does not carry over directly.

\begin{figure}[H]
    \centering
    \begin{minipage}[b]{\linewidth}
        \centering
        \includegraphics[width=0.765\textwidth]{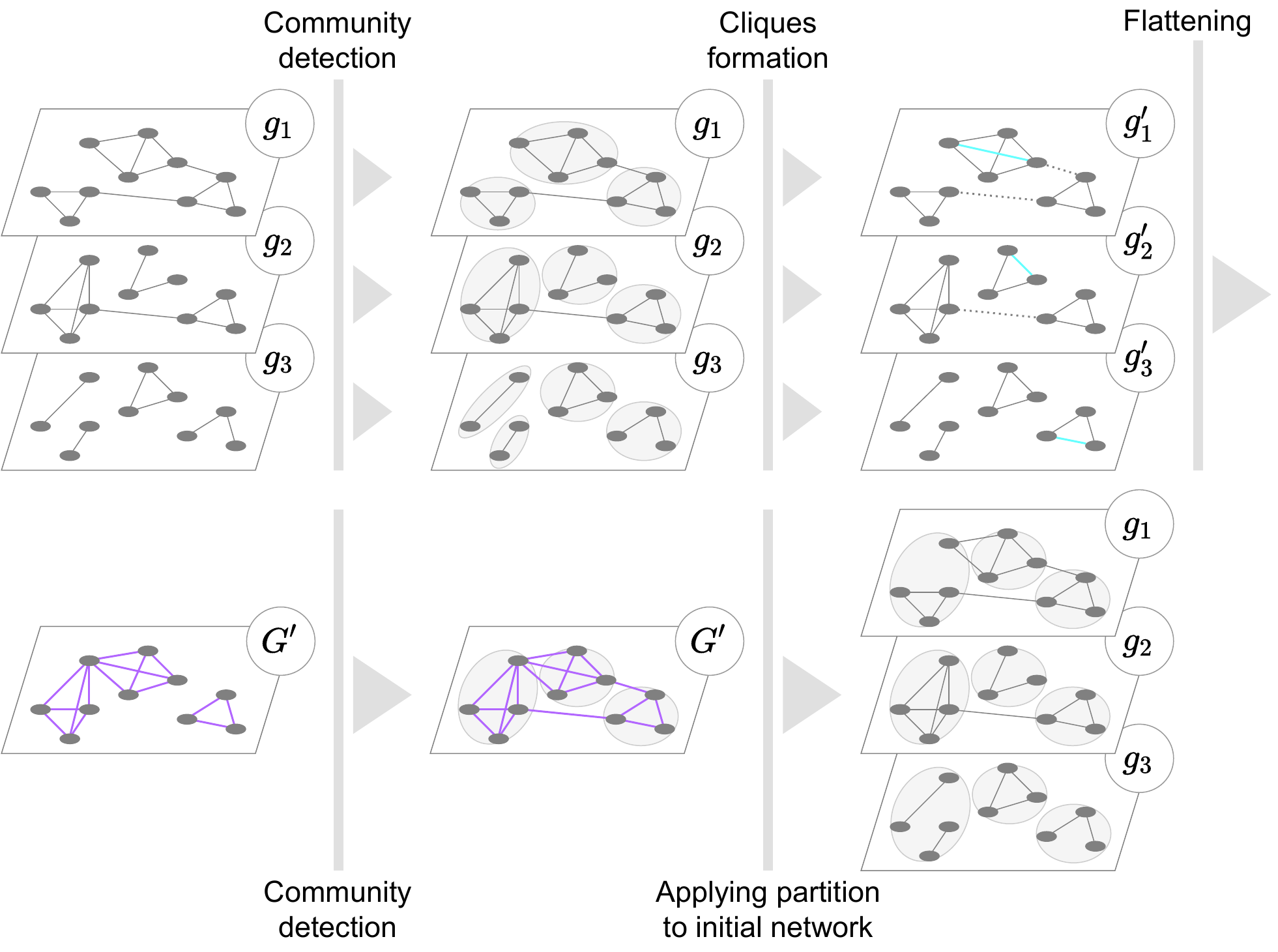}
        \\
        \includegraphics[width=0.05\linewidth]{figures/grey_line.pdf}
        link
        \includegraphics[width=0.05\linewidth]{figures/sky_blue_line.pdf}
        added link
        \includegraphics[width=0.05\linewidth]{figures/blue_line.pdf}
        updated link
        \includegraphics[width=0.05\linewidth]{figures/purple_line.pdf}
        fused link
        \includegraphics[width=0.05\linewidth]{figures/grey_dot_line.pdf}
        erased link
    \end{minipage}
    \caption{The normalized cluster-network adaptation of the LF idea in \cite{Liu2020}.}
    \label{fig:scheme_late}
    \par\smallskip{\small\noindent\textbf{Alt text:} The cluster-network LF adaptation replaces each layer community with a clique, normalizes each resulting layer, and then applies EF to the cluster multiplex.\par}
\end{figure}

\begin{figure}[H]
    \centering
    \includegraphics[width=0.34\linewidth]{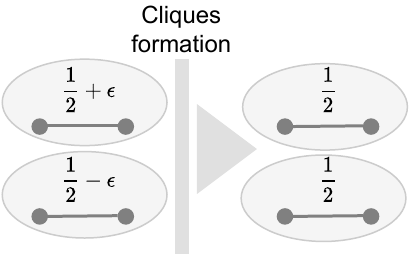}
    \caption{Example for the normalized cluster-network adaptation, where condition (\ref{LF_construction_1}) is not satisfied, $\varepsilon \in (0,\tfrac{1}{2})$.}
    \label{fig:example-LF}
    \par\smallskip{\small\noindent\textbf{Alt text:} A weighted example for the cluster-network adaptation violates the first absorbing-construction condition for epsilon between zero and one half.\par}
\end{figure}

\end{appendices}

\clearpage
\bibliographystyle{jcn}
\bibliography{literature}

\end{document}